\newcommand{\zo}{\{0,1\}}
\newcommand{\N}{{\mathbf{N}}}
\newcommand{\R}{{\mathbb{R}}}
\newcommand{\eps}{\epsilon}
\newcommand{\polylog}{\mathop\mathrm{polylog}\nolimits}
\newcommand{\xor}{\oplus}
\newcommand{\UC}{{\mathcal{U}}}
\newcommand\independent{\protect\mathpalette{\protect\independenT}{\perp}}
\def\independenT#1#2{\mathrel{\rlap{$#1#2$}\mkern2mu{#1#2}}}
\newtheorem{theorem}{Theorem}[section]
\newtheorem*{theorem*}{Theorem}
\newtheorem{corollary}[theorem]{Corollary}
\newtheorem{lemma}[theorem]{Lemma}
\newtheorem{definition}[theorem]{Definition}
\newtheorem{claim}[theorem]{Claim}
\newtheorem{fact}[theorem]{Fact}
\newtheorem{remark}[theorem]{Remark}
\newtheorem*{rep@theorem}{\rep@title}
\newcommand{\newreptheorem}[2]{%
\newenvironment{rep#1}[1]{%
 \def\rep@title{#2 \ref{##1}}%
 \begin{rep@theorem}}%
 {\end{rep@theorem}}}
\newcommand{\E}{{\mathbb E}}
\newcommand{\norm}[1]{\left\lVert#1\right\rVert}
\newcommand{\normo}[1]{\norm{#1}_1}
\newcommand{\normt}[1]{\norm{#1}_2}
\newcommandx{\unsure}[2][1=]{\todo[linecolor=red,backgroundcolor=red!25,bordercolor=red,#1]{#2}}
\newcommandx{\change}[2][1=]{\todo[linecolor=blue,backgroundcolor=blue!25,bordercolor=blue,#1]{#2}}
\newcommandx{\info}[2][1=]{\todo[linecolor=OliveGreen,backgroundcolor=OliveGreen!25,bordercolor=OliveGreen,#1]{#2}}
\newcommandx{\improvement}[2][1=]{\todo[linecolor=Plum,backgroundcolor=Plum!25,bordercolor=Plum,#1]{#2}}
\newcommandx{\thiswillnotshow}[2][1=]{\todo[disable,#1]{#2}}
\DeclareMathOperator{\CT2p}{\textsc{2pCT}}
\DeclareMathOperator{\IT2p}{\textsc{2pIT}}
\DeclareMathOperator{\Poi}{Poi}
\DeclareMathOperator{\Mult}{Mult_{-0}}
\def\poly{\operatorname{poly}}
\def\polylog{\operatorname{polylog}}
\renewcommand{\bf}{\normalfont \bfseries}
\renewcommand{\sc}{\normalfont \scshape}
\renewcommand{\paragraph}[1]{\vspace{0.7ex}\noindent{\bf #1}}
\title{
Two Party Distribution Testing:\\
Communication and Security
}
\author{Alexandr Andoni\\Columbia University\\andoni@cs.columbia.edu 
\and Tal Malkin\\Columbia University\\tal@cs.columbia.edu 
\and Negev Shekel Nosatzki\\Columbia University\\ns3049@columbia.edu}
\begin{document}

\maketitle

\begin{abstract}
We study the problem of discrete distribution testing in the {\em
  two-party setting}.  For example, in the standard closeness testing
problem, Alice and Bob each have $t$ samples from, respectively,
distributions $a$ and $b$ over $[n]$, and they need to test whether
$a=b$ or $a,b$ are $\eps$-far (in the $\ell_1$ distance) for some
fixed $\eps>0$. This is in contrast to the well-studied one-party
case, where the tester has unrestricted access to samples of both
distributions, for which optimal bounds are known for a number of
variations. Despite being a natural constraint in applications, the
two-party setting has evaded attention so far.

We address two fundamental aspects of the two-party setting: 1) what
is the communication complexity, and 2) can it be accomplished
securely, without Alice and Bob learning extra information about each
other's input. Besides closeness testing, we also study the
independence testing problem, where Alice and Bob have $t$ samples from
distributions $a$ and $b$ respectively, which may be correlated; the
question is whether $a,b$ are independent of $\eps$-far from being independent.
Our contribution is three-fold:
\begin{itemize}
\item
{\bf Communication}: we show how to gain communication efficiency as
we have more samples, beyond the information-theoretic bound on
$t$. Furthermore, the gain is polynomially better than what one may
obtain by adapting one-party algorithms.

For the closeness testing, our protocol has communication $s =
\tilde{\Theta}_{\eps}\left(n^2/t^2\right)$ as long as $t$ is at least
the information-theoretic minimum number of samples. For the
independence testing over domain $[n] \times [m]$, where $n\ge m$, we obtain $s =
\tilde{O}_{\eps}(n^2 m/t^2 + n m/t + \sqrt{m})$.
\item
{\bf Lower bounds}: we prove {\em tightness} of our trade-off for the
closeness testing, as well as that the independence testing requires
tight $\Omega(\sqrt{m})$ communication for unbounded number of
samples. These lower bounds are of independent interest as, to the
best of our knowledge, these are the first 2-party communication lower
bounds for testing problems, where the inputs represent a set of {\em
  i.i.d.~samples}.
\item
{\bf Security}: we define the concept of secure distribution testing
and argue that it must leak at least some minimal
information when the promise is not satisfied. 
We then provide secure versions of the above protocols with an
overhead that is only polynomial in the security parameter.
\end{itemize}

\end{abstract}

\thispagestyle{empty}
\newpage
\setcounter{page}{1}

\section{Introduction}

Distribution property testing is a sub-area of statistical hypothesis
testing, which has enjoyed continuously growing interest in the
theoretical computer science community, especially since the 2000
papers \cite{goldreich2000testing,batu2000testing}.
One of the most basic problems is
closeness testing, also known as the {\em homogeneity testing} or {\em
  two sample problem}; see \cite{gutman1989asymptotically,
  shayevitz2011renyi, unnikrishnan2012optimal}. Here, given two
distributions $a,b$ and $t$ samples from each of them, distinguish
between the cases where $a=b$ versus $a$ and $b$ are $\eps$-far, which
usually means $\|a-b\|_1> \eps$.\footnote{This is equivalent to saying
  that the total variation distance is more than $\eps/2$.}  For this
specific problem, the extensive research lead to algorithms with
optimal sample complexity \cite{batu2000testing, valiant2011testing,
  batu2013testing, chan2014optimal, diakonikolas2016new,
  diakonikolas2016collision}, including when the number of samples
from the two distributions is unequal \cite{acharya2014sublinear,
  bhattacharya2015testing, diakonikolas2016new}. Further research
directions of interest include obtaining instance-optimal algorithms,
which depend on further properties of the distributions $a,b$
\cite{acharya2011competitive, acharya2012competitive,
  diakonikolas2016new}, quantum algorithms \cite{bravyi2011quantum},
as well as algorithms whose output is differentially-private
\cite{diakonikolas2015differentially,cai2017priv,
  DBLP:journals/corr/AcharyaSZ17,
  DBLP:journals/corr/AliakbarpourDR17}%
. An even larger body of work
studied numerous related problems such as independence testing,
among many others.
We refer the reader to the surveys \cite{goldreich17introduction,
  canonne2015survey, rubinfeld2012taming,rubinfeld2006sublinear} for
further references.

Focusing on testing two distributions, such as in the closeness
problem, a very natural aspect has, surprisingly, evaded attention so
far: that such a testing task would be often run by two players, each
with access to their own distribution.  Specifically, Alice has
samples from the distribution $a$, Bob has samples from the
distribution $b$, and they need to jointly solve a distribution
testing problem on $(a,b)$. This is a natural setting that models many
of the envisioned usage scenarios of distribution testing, where
different parties wish to jointly perform a statistical hypothesis
testing task on their distributions. For example,
\cite{shayevitz2011renyi} describes the scenario where two distinct
sensors need to test whether they sample from the same distribution
(``noise'') or not.

This 2-party setting raises the following standard theoretical
challenges, neither of which has been previously studied in the
context of distribution testing:
\begin{itemize}
\item
What is the {\em communication complexity} of the testing problem? In
particular, can we do better than the straightforward
approach, where Alice sends her samples to Bob who then runs an offline
algorithm? Can we prove matching communication lower bounds for such testing problems?

This aspect parallels the quest for low memory or communication usage
for hypothesis testing on a {\em single distribution}, initiated in
the statistics community \cite{cover1969hypothesis,
  hellman1970learning} and \cite{ahlswede1986hypothesis,
  han1987hypothesis, amari1998statistical}.
\item
Is it possible to design a distribution testing protocol that is {\em secure}, i.e., where
Alice and Bob do not learn anything about each other's samples,
besides the mere fact of whether their distributions are same or
$\eps$-far?

While more modern, this question is highly relevant in today's push
for doing statistics that is more privacy-respecting.
\end{itemize}

\subsection{Our Contributions}
In this paper, we initiate the study of testing problems in the
two-party model, and design protocols which are both
communication-efficient and secure. We do so for two basic problems on
pairs of distributions (i.e., where the two-party setting is natural):
1) closeness testing, and 2) independence testing.

Our main finding is that, once the number of samples exceeds the
information-theoretic minimum, we can obtain protocols with
polynomially smaller communication than the na\"ive adaptation of
existing algorithms.  We complement our protocols with lower bounds on
the communication complexity of such problems that are near-optimal
for closeness testing, as well as for independence in an intriguing
parameter regime. Our upper and lower bounds on communication are
novel even without any security considerations.

To argue security, we also put forth a definition for secure
 distribution testing in the multi-party model. Our definition
differs from the standard secure computation setting due to two unique
features of the considered setting. First, this is ``testing'' and not
``computing''; second, the function of interest is defined with
respect to distributions, but the inputs that the parties use in the
computation are samples.  These features do not come into play if the
distributions satisfy the promise (e.g., they are either identical or
$\epsilon$-far), in which case the security guarantee matches the
standard cryptographic one (no information is leaked beyond the
output).  However, the crux is when the promise is not satisfied, in
which case we need to allow for some information on the parties'
samples to be leaked by the protocol.  Our definition permits leakage
of at most one bit in this case, and leaks nothing when the promise is
satisfied. See the formal definition and discussion in
Section~\ref{sec:secureCT}.

\smallskip
\paragraph{Closeness Testing.} 
In the {\em 2-party closeness testing} problem $\CT2p_{n,t,\epsilon}$, Alice and Bob each have access to $t$ samples from some
distributions respectively $a, b$ over alphabet $[n]$. Their goal is
to distinguish between $a=b$ and $\normo{a-b}\geq\epsilon$ with
probability $\ge2/3$.

We first give a non-secure near-optimal communication protocol, and
then show how to make it secure with only a small overhead (polynomial
in the security parameter).  Our secure version is based on the
existence of a PRG that stretches from $\polylog(m)$ bits to $m$ bits,
and of an OT protocol with $\polylog$ communication. We elaborate on
these standard cryptographic assumptions in a later section.  Overall,
we prove the following theorem.

\begin{theorem*}[Closeness, Secure; see Theorem~\ref{priv_cl_tst}]
Fix a security parameter $k>1$. Fix $n>1$ and $\eps\in(0,2)$, and let $t$ be
such that $t\ge C \cdot k \cdot \max\left(n^{2/3}\cdot
\epsilon^{-4/3}, \sqrt{n}\cdot \epsilon^{-2}\right)$ for some (universal) constant
$C>0$. Then, assuming PRG and OT as above, there exists a {\em secure
  distribution testing protocol} for $\CT2p_{n,t,\epsilon}$ which uses
$\tilde{O}_k\left(\frac{n^2}{t^2\epsilon^4} + 1\right)$ communication.
\end{theorem*}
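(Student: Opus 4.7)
The plan is to bootstrap the non-secure closeness testing protocol (established earlier in the paper, achieving $\tilde\Theta(n^2/(t^2\epsilon^4))$ communication in this sample regime) into a secure one by compiling it with generic cryptographic tools, incurring only a $\poly(k)$ multiplicative overhead. First I would decompose the non-secure protocol into four phases: (i) generating shared public randomness defining a partition/hash of $[n]$ into $s=\tilde\Theta(n^2/(t^2\epsilon^4))$ buckets; (ii) each party locally applying the partition to form a short sketch (bucket histogram); (iii) exchanging the two sketches; and (iv) deterministically computing a closeness-test statistic from the sketches and thresholding it into one decision bit.

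Phase (i) is made secure with negligible overhead: Alice samples a PRG seed of length $\polylog(n,k)$ and sends it to Bob, who locally expands it via the PRG to obtain the pseudorandom coins. PRG security guarantees that the non-secure correctness analysis transfers up to error negligible in $k$. Phase (ii) is purely local computation and so leaks nothing.

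The crux is phases (iii)--(iv): sending the sketches in the clear would leak far more than one bit, so I would instead implement both phases together inside a single secure two-party computation. Using Yao's garbled circuits combined with the assumed low-communication OT, the parties jointly evaluate a circuit whose private inputs are the two bucket histograms and whose only output is a single accept/reject bit; internally the circuit computes a bias-corrected $\ell_2$-type statistic of the form $Z=\sum_j((X_j-Y_j)^2-X_j-Y_j)$ and compares it to the tester's threshold, all inside the garbling. The circuit has size $\poly(s,\log t)$, so garbled evaluation costs $\poly(k)\cdot s=\tilde O_k(n^2/(t^2\epsilon^4)+1)$ bits, where the ``$+1$'' absorbs the fixed $\poly(k)$ setup cost. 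The factor $k$ in the sample bound $t\ge C\cdot k\cdot\max(n^{2/3}\epsilon^{-4/3},\sqrt{n}\epsilon^{-2})$ is used to amplify the underlying tester's error probability (via internal repetition inside the circuit) to be negligible in $k$, so that the combined statistical plus cryptographic error remains below $1/3$.

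Security then follows from the standard semi-honest analysis of Yao/OT together with the definition in Section~\ref{sec:secureCT}: when the promise is satisfied, the circuit outputs the bit dictated by $(a,b)$ and leaks nothing beyond that prescribed output; when the promise is violated, the same single output bit is exactly the one bit of leakage the definition permits. The main obstacle I anticipate is keeping the entire test computation inside the garbled evaluation---in particular, if the non-secure protocol uses multiple interaction rounds whose branching depends on partial statistics (e.g., separating heavy and light elements, or an adaptive flattening step), each such branching point must be folded into the circuit rather than executed in the clear, so that the only externally visible output remains the single accept/reject bit.
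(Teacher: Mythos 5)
Your proposal rests on a mischaracterization of the non-secure protocol you claim to be compiling. In the paper's protocol the sketched objects are the occurrence vectors of the \emph{split} distributions $a_S,b_S$, where the split multiset $S$ is drawn from Bob's samples; Alice cannot form her sketch locally from shared randomness alone, and Bob sending $S$ is precisely the insecure step. So your phase (ii) (``each party locally applies a shared hash and forms a bucket histogram'') does not describe the protocol of Theorem~\ref{thm_2pct}, and its correctness is nowhere established: if you hash $[n]$ into $s=\tilde\Theta(n^2/(t^2\epsilon^4))$ buckets and apply the unnormalized statistic $\sum_j\bigl((X_j-Y_j)^2-X_j-Y_j\bigr)$, the guarantee of Lemma~\ref{app_dist_test} requires an $\ell_2$-norm bound on the (compressed) distributions that hashing does not provide --- a single heavy letter makes the bucketed distributions have constant $\ell_2$ norm, and then $t\approx n^{2/3}\epsilon^{-4/3}$ samples are far too few. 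Making such a bucketed protocol work would need an additional ingredient (a flattening/splitting step, or a distance-preservation-under-hashing argument plus a normalized statistic), which you neither state nor prove.

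Your fallback --- ``fold the flattening/heavy-light separation into the garbled circuit'' --- is exactly where the communication bound breaks. The flattening depends on one party's raw samples over $[n]$, so a circuit performing it must take as private input the raw occurrence data (size $\Omega(\min(n,t))$ words), and Yao-style garbled evaluation has communication at least linear in the circuit, hence in its input; this is the generic blow-up the paper explicitly rules out. The paper's actual proof avoids it with machinery your proposal omits: both parties contribute to the split set ($S=S_a\uplus S_b$); the distance is decomposed as a secure approximation of the \emph{capped} unsplit distance $\normt{A'-B'}^2$ (via the PRG-generated rotation of Theorem~\ref{randmat}, computable locally and stored in ROM) plus an exact correction $\normt{A_S-B_S}^2-\normt{A'-B'}^2$ supported on only $O(|S|)$ coordinates, read off from offline-precomputed split tables through $O(|S|)$ lookups of a Naor--Nissim secure circuit with ROM (Theorem~\ref{thm_nn01}); and Corollary~\ref{cor_splitcap} is what ensures the capped approximation error does not swamp the split distance. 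Without these steps (or a substitute such as a proven domain-compression lemma), your argument does not yield the claimed $\tilde O_k(n^2/(t^2\epsilon^4)+1)$ bound. The parts that do align with the paper --- exchanging a PRG seed for shared randomness, using $\Theta(k)$ repetitions to drive the error to $neg(k)$, and the one-bit-leakage security argument --- are peripheral to this main difficulty.
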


To contrast the communication bounds of our protocol to the classic
1-party setting, consider what happens in the extreme settings of the
parameters $s,t$, for a fixed $\eps$. When $t\approx \Theta(n^{2/3})$,
the communication is $\tilde O(n^{2/3})$ as well, i.e., Alice may as
well just send all the samples over to Bob. However the communication
decreases as the players have more samples. This may not be surprising
given the testing results with unequal number of samples
\cite{bhattacharya2015testing,diakonikolas2016new}: indeed, Alice can
send $\approx \max\{n/\sqrt{t},\sqrt{n}\}$ samples to Bob, and Bob can
run the tester. In contrast, our protocol obtains a {\em polynomially
  smaller} complexity, $\approx n^2/t^2$, whenever $t\gg
n^{2/3}$. Intuitively, considering the extreme of $t\gg n$, we
can obtain near-constant communication: with so many samples, we can
{\em learn} the distribution, and then use {\em sketching} tools, such
as the $\ell_1$ sketching algorithm of \cite{AMS, I00b}.

We prove a {\em near-tight} lower bound on the above communication
complexity trade-off (even without security considerations) in
Section~\ref{sec:ctLB}. We note that this lower bound differs from
standard communication complexity lower bounds as the players' inputs
are {\em i.i.d.~samples} and not worst-case.

\begin{theorem*}[Closeness lower bound; see Theorem~\ref{thm:closeDisj}]
Any two-way communication protocol for $\CT2p_{n,t,1/2}$ requires
$s=\tilde\Omega\left(\frac{n^2}{t^2}\right)$ communication.
\end{theorem*}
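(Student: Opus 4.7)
The plan is to prove the lower bound via a reduction from a hard two-party communication problem—based on the theorem's name and the target bound $\tilde\Omega(n^2/t^2)$, the natural candidate is (Unique) Set-Disjointness on inputs of length $N = \tilde\Theta(n^2/t^2)$, which has randomized two-way communication complexity $\Omega(N)$.

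Concretely, I would partition the alphabet $[n]$ into $N$ buckets of size $s = n/N = \tilde\Theta(t^2/n)$, and design an embedding mapping an instance $(x,y)\in\{0,1\}^N\times\{0,1\}^N$ of Unique-DISJ into a pair of distributions $(a,b)$ over $[n]$ such that: (i) $a$ is determined by $x$ together with public and Alice's private randomness, and likewise $b$ is determined from $y$; (ii) in the YES (disjoint) case, $a = b$; and (iii) in the NO case ($|x\cap y|\ge 1$), $\|a-b\|_1\ge 1/2$. Given such an embedding, Alice locally generates $t$ i.i.d.\ samples from $a$ and Bob does the same from $b$, they run the purported $\CT2p_{n,t,1/2}$ protocol on these samples, and they output its answer. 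This decides DISJ correctly with probability $\ge 2/3$, so the protocol's communication $s$ must satisfy $s=\Omega(N)=\tilde\Omega(n^2/t^2)$.

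The main obstacle is designing the embedding so that $a=b$ in the YES case. A purely local per-bucket encoding whose bucket $u$-contribution depends only on $x_u$ (resp.\ $y_u$) cannot make $a=b$ when $x\neq y$, which is generic for disjoint but non-empty $x,y$. I would circumvent this by using public randomness to define a shared ``template'' distribution and letting Alice's and Bob's perturbations of the template be indexed by $x,y$ in such a way that the bucket-level perturbations cancel exactly when $x_u y_u=0$ but accumulate a bias of $\Omega(1/N)$ in $\ell_1$ whenever $x_u y_u=1$. Under the Unique-DISJ promise, this bias is concentrated on a single bucket, and after tuning the per-bucket mass and the perturbation scale against the bucket size $s\approx t^2/n$, one can hope to simultaneously enforce (ii) exactly and (iii) at the required constant scale. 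I expect this cancellation/accumulation design to be the most delicate step and possibly to require a Poissonization or rejection-sampling step to make the resulting objects genuine probability distributions.

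If the direct DISJ embedding is too rigid to achieve (ii) on the nose, the fallback is to reduce from a ``gap'' hard problem such as Gap-Hamming Distance or a sparse inner-product gap problem: these yield a ``close vs.\ far'' rather than ``equal vs.\ far'' distinction, and the closeness testing gap can be amplified by the standard device of taking independent parallel bucket copies, with the communication lower bound tracking the size of the hard instance. Once the embedding is in place, verifying (a) that Alice and Bob can sample locally from their respective distributions and (b) that the $2/3$ testing success translates, after a constant-factor amplification, into a constant-advantage decision for the underlying CC problem, is routine and completes the deduction $s=\tilde\Omega(n^2/t^2)$.
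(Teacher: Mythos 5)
Your primary plan—embedding Unique Set-Disjointness so that the YES (disjoint) case yields $a=b$ exactly—cannot work, and the obstacle you flag is fatal rather than "delicate." In your reduction there is no communication before the tester is invoked, so for each fixing of the public randomness $R$, Alice's distribution is a function $a(x,R)$ and Bob's is $b(y,R)$. If $a(x,R)=b(y,R)$ is required for every disjoint pair $(x,y)$, then any two inputs $x,x'$ sharing a common disjoint partner $y$ must satisfy $a(x,R)=a(x',R)$; since the hard instances have weight $\Theta(N)$ bounded away from $N/2$, this "common disjoint partner" graph is connected, so $a(\cdot,R)$ (and likewise $b(\cdot,R)$) is forced to be constant in the input. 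But then a fixed $y$ that has both disjoint and intersecting partners produces the identical pair $(a,b)$ under both answers, so condition (iii) cannot hold—no tester, however good, can recover the DISJ answer. The "template plus cancelling perturbations" idea does not escape this: whether Alice's perturbation on bucket $u$ must cancel depends on $y_u$, which Alice does not know, and allowing a small failure probability over $R$ only degrades the counting argument by $o(1)$. This is precisely why the equal-versus-far promise of $\CT2p$ is structurally incompatible with a disjointness-style embedding.

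Your fallback (Gap-Hamming) is directionally the route the paper takes, but it omits the two ingredients that constitute the actual proof. First, standard GHD ("distance $\le m/2-\beta$ vs.\ $\ge m/2+\beta$") is insufficient, because the $a=b$ case must correspond to the Hamming distance being pinned to an exact value; the paper has to formulate and prove a lower bound for an \emph{Exact} GHD variant (distinguish $\|x-y\|_1=m/2$ from $\|x-y\|_1-m/2\in[\beta,2\beta]$), which is not known to follow from the usual GHD bounds and is obtained by adapting Sherstov's corruption-bound argument (Lemma~\ref{lem:gapHamming} and Appendix~\ref{sec:gapHamming}). Second, a "close vs.\ far" gap cannot be converted to "equal vs.\ far" by parallel bucket copies and amplification: if $a$ and $b$ are merely close but unequal, the closeness tester has no guarantee whatsoever, so the reduction must instead generate, from a single Exact GHD instance of size $m=\tilde\Theta(n^2/t^2)$, two sets of $\Poi(t)$ samples whose joint distribution is statistically close to that of genuine i.i.d.\ samples from distributions that are exactly equal (or $1/2$-far). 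This is the bulk of the paper's work—a multi-step Poissonized construction of occurrence vectors for "dense" and "large" items, analyzed via a Multinomial-versus-Poisson total-variation bound (Theorem~\ref{thm:barbour88})—and it is exactly the step your proposal labels as routine.
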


\paragraph{Independence Testing.}
The second problem we consider is the independence testing problem in the
2-party model, denoted $\IT2p_{n,m,t,\epsilon}$. Let $p=(a,b)$ be some
joint distribution over $[n] \times [m]$, where $n\ge m$, and for $i
\in [t]$, let $\zeta_i$ be %
a sample drawn from $p$. Now we
provide Alice with the first coordinates of 
$\zeta_i$'s
and Bob with the second coordinates. Alice and Bob's goal is to test whether %
$p$ is a product distribution or $\epsilon$-far from any product
distribution.
We prove the following: 

\begin{theorem*}[Independence, Secure; see Theorem~\ref{thm_2pit}] %
Fix a security parameter $k>1$. Fix $\eps\in(0,2)$, $1\le m\le n$, and let $t$ be such that $t
\geq C \cdot k \cdot \left(n^{2/3}m^{1/3}\epsilon^{-4/3} +
\sqrt{nm}/\epsilon^2\right)$, for some (universal) constant $C$, and assuming OT, there is a secure
distribution testing protocol for $\IT2p_{n,m,t,\epsilon}$ using
$\tilde{O}_{k}\left(\frac{n^2\cdot m}{t^2\epsilon^4} + \frac{n\cdot
  m}{t \epsilon^4} + \frac{\sqrt{m}}{\epsilon^3}\right)$ bits of communication.
\end{theorem*}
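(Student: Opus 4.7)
The approach is to reduce two-party independence testing to two-party closeness testing, so that the secure closeness protocol of Theorem~\ref{priv_cl_tst} can be invoked on (a careful transformation of) the sample streams. First, split each party's $t$ samples in half: the paired halves $(X_i,Y_i)$ are i.i.d.\ draws from the joint distribution $p$, and applying a common random permutation $\pi$ to Bob's second half (generated from OT-derived shared randomness) yields pairs $(X_i,Y_{\pi(i)})$ that are i.i.d.\ draws from the product of marginals $a\otimes b$. The 2-party independence test on $[n]\times[m]$ is then equivalent to a 2-party closeness test on $[n]\times[m]$ between these two sample streams, each of length $t/2$.

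Applying Theorem~\ref{priv_cl_tst} as a black box on a domain of size $N=nm$ is too lossy: it would require $\tilde\Omega((nm)^{2/3}/\eps^{4/3})$ samples and $\tilde O_k((nm)^2/(t^2\eps^4))$ communication, missing the claimed bounds by a factor $m^{1/3}$ in samples and $m$ in communication. The remedy is to exploit the product structure of $a\otimes b$. Concretely, Bob first learns and transmits an $\ell_1$-sketch of his marginal $b$ (in the spirit of \cite{AMS,I00b}), which costs $\tilde O(\sqrt{m}/\eps^3)$ communication and accounts for the trailing term of the stated bound. Once this sketch is shared, the residual test $p\stackrel{?}{=}a\otimes b$ can be run on a hashed decomposition of $[n]\times[m]$ that treats the $[n]$ and $[m]$ coordinates separately, so that each sketching bucket spans only one marginal rather than the full product alphabet. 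The chi-squared/Poissonization analysis that underlies the sample-optimal $n^{2/3}m^{1/3}/\eps^{4/3}+\sqrt{nm}/\eps^2$ independence tester then transfers to this product-hashed protocol, giving communication $\tilde O_k(n^2 m/(t^2\eps^4))$ for the leading term and an additional $\tilde O_k(nm/(t\eps^4))$ for the lower-order cross term that arises when the marginal sketch is only $\eps$-accurate.

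For security, the shared permutation $\pi$ is realized via the OT primitive so that neither party learns more than its own reordered samples; the marginal sketch is a function of $b$ alone and leaks no more than the definition permits (under the promise $p=a\otimes b$, Alice could have simulated the sketch from her own samples of $a$ via draws from $a\otimes\hat b$); and the closeness sub-protocol inherits its security guarantee from Theorem~\ref{priv_cl_tst}. Composition in the OT-hybrid model then yields the claimed security modulo the one-bit leakage allowed by the definition in Section~\ref{sec:secureCT}. The main obstacle will be the second step: lifting the variance and bias analysis of the sample-optimal independence tester into the sketching framework of Theorem~\ref{priv_cl_tst}, so that the product form of $a\otimes b$ truly delivers the $m^{1/3}$ saving in samples and the $m$ saving in communication without inflating the lower-order terms, and verifying that the product hashing is compatible with the PRG/OT primitives already used for the secure closeness protocol.
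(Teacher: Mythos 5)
There is a genuine gap here, and it sits exactly at the point you defer to the end. Your reduction ``independence $=$ closeness between $p$ and $a\otimes b$'' does not put you in the $\CT2p$ input model at all: in $\CT2p$, Alice holds complete samples of one distribution and Bob holds complete samples of the other, whereas after your permutation trick \emph{every} sample of \emph{both} distributions ($(X_i,Y_i)$ from $p$ and $(X_i,Y_{\pi(i)})$ from $a\otimes b$) is split across the two parties. So Theorem~\ref{priv_cl_tst} cannot be invoked even as a lossy black box, and the sketching machinery behind it (each party sketches its own occurrence vector) has nothing to sketch until the coordinates are combined --- this is precisely the obstacle the paper flags before introducing its protocol. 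Your proposed fix, a ``marginal $\ell_1$-sketch of $b$ costing $\tilde O(\sqrt m/\eps^3)$'' plus a ``product-hashed decomposition,'' is not a concrete protocol: an $\ell_1$ sketch of $b$ costs $\polylog$ bits, learning $b$ to $\eps$ accuracy costs $\tilde\Omega(m)$ bits, and neither object lets the parties estimate statistics of the \emph{joint} occurrence pattern of pairs $(X_i,Y_i)$ with low communication; the claimed origins of the $\sqrt m/\eps^3$ and $nm/(t\eps^4)$ terms are asserted rather than derived. The security claim is also wrong under Definition~\ref{def-secdisttest}: a sketch of $b$ is many bits of information that Alice cannot simulate from her own samples and the one output bit (knowing $a\independent b$ does not reveal $b$), so sending it in the clear already violates the definition.

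The idea you are missing is the paper's \emph{alphabet reduction}. After splitting to bound $\normt{a_{S_a}}$ and $\normt{b_{S_b}}$, one conditions on a uniformly random subset $\UC$ of Alice's (split) alphabet of size $l=\tilde O(n^3m/(t^3\eps^4)+n^2m/(t^2\eps^4)+1/\eps^2)$; Lemma~\ref{lm_epsfar} shows this conditioning preserves the product/$\eps$-far dichotomy, and only about $t\cdot l/n$ of Alice's samples survive. Those few surviving samples are then paired --- inside a secure circuit with ROM, using Alice's indexed samples and two of Bob's sample sets (one aligned, one fresh) --- to simulate i.i.d.\ samples from $\hat p$ and $\hat q$, and closeness is tested directly via Lemma~\ref{app_dist_test} on the reduced alphabet. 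Communication is proportional to the number of retained samples, which is what yields $\tilde O_k(n^2m/(t^2\eps^4)+nm/(t\eps^4)+\sqrt m/\eps^3)$; no sketch of either marginal ever leaves the secure computation. Without this (or an equivalent mechanism for combining split samples at sublinear cost), your outline does not close.
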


We note that the lower bound on $t$ from the above theorem is
necessary as it is the information-theoretic bound, as proven
in~\cite{diakonikolas2016new}.

An important qualitative aspect of the communication complexity for
$\IT2p$ is that, when the number of samples $t\to \infty$, the
protocol uses $\tilde\Theta_\epsilon(\sqrt{m})$ bits
communication. This is in contrast to $\CT2p$, where the
communication becomes poly-logarithmic for $t\to \infty$. Indeed, we
show that $\Omega(\sqrt{m})$ communication is necessary for one-way
communication for $\IT2p$. Since our non-secure protocol can be made
one-way (see Remark~\ref{rem:it1way}), this lower bound is tight for
one-way protocols. We conjecture that the bound on $s$ from the
above theorem is near-tight in $n,m,t$ for 2-way communication protocols, even without security.

\begin{theorem*}[Independence lower bound; see Theorem~\ref{thm_lb_it}]
For $n, t \in \N$, any one-way protocol for $\IT2p_{n,n,t,1}$
requires $\Omega(\sqrt{n})$ bits of communication.
\end{theorem*}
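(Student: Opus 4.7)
The plan is to prove the $\Omega(\sqrt{n})$ one-way lower bound via Yao's minimax principle, by exhibiting a pair of input distributions that is hard for any deterministic one-way protocol with $o(\sqrt{n})$ bits of communication. In the YES instance $\mu_0$, Alice's samples $X = (X_1, \ldots, X_t)$ and Bob's samples $Y = (Y_1, \ldots, Y_t)$ are both iid uniform on $[n]$ and mutually independent, so the induced joint on $[n]\times[n]$ is the uniform product and satisfies the YES case. In the NO instance $\mu_1$, I draw a hidden uniformly random permutation $\pi$ of $[n]$, take $X$ iid uniform on $[n]$, and set $Y_k = \pi(X_k)$ for each $k$; the induced joint on $[n]\times[n]$ is supported on the matching $\{(i,\pi(i)):i\in[n]\}$ with probability $1/n$ each, its marginals are both uniform on $[n]$, and its $\ell_1$ distance to the product of marginals is $2(1-1/n) \geq 1$ for $n \geq 2$, so it satisfies the NO case.

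A key symmetry is that the marginal law of $X$ alone and of $Y$ alone is uniform on $[n]^t$ in both $\mu_0$ and $\mu_1$ (for $Y$ under $\mu_1$ this is because $\pi(X) \sim X$ when $\pi$ is a uniform random permutation independent of $X$). Consequently Alice's $s$-bit message $M = f(X)$ has identical distribution under the two, and any distinguishing advantage must come from the correlation between $M$ and $Y$. For constant distinguishing success we need $\mathrm{TV}(P^{\mu_0}_{M,Y}, P^{\mu_1}_{M,Y}) = \Omega(1)$; since $M \perp Y$ under $\mu_0$, $P^{\mu_0}_{M,Y} = P_M \otimes P_Y$, so the task reduces to showing $(M,Y)$ under $\mu_1$ is $\Omega(1)$-far from $P_M \otimes P_Y$. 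A direct calculation shows this total variation equals $\E_{y \sim \text{Unif}([n]^t)}\!\left[\mathrm{TV}(\text{law of } f(\pi^{-1}(y)),\, \text{law of } f(X))\right]$---the average discrepancy between $f$ applied to a uniformly random preimage of $y$ under $\pi$ (which respects the collision pattern of $y$) versus $f$ applied to a fresh iid uniform sample.

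To lower-bound the communication I would bound this discrepancy using a Fourier/chi-square argument analogous to the Bar-Yossef--Jayram--Kerenidis lower bound for Boolean Hidden Matching: decompose the difference $P(f(\pi^{-1}(y)) = m) - P(f(X) = m)$ in a suitable orthogonal basis on $[n]^t$ and argue that each of the $s$ bits of the message buys at most $O(1/\sqrt{n})$ of total variation, yielding $\mathrm{TV} \leq O(s/\sqrt{n})$ and hence $s = \Omega(\sqrt{n})$. An alternative route, which I would pursue if the direct Fourier bookkeeping becomes cumbersome, is a reduction from Boolean Hidden Matching on $\Theta(\sqrt{n})$ bits---encoding the hidden matching inside $\pi$ and the Bob-side bit-string inside his sample choices---since Boolean Hidden Matching has one-way randomized complexity $\Omega(\sqrt{N})$ on $N$-bit inputs, which then transports directly to $\Omega(\sqrt{n})$ communication for $\IT2p_{n,n,t,1}$.

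The main obstacle is making the correlation bound uniform in $t$. Since $t$ can be arbitrarily large, the argument must rule out any advantage Alice extracts from having more samples. The key point is that each party's individual marginal law is $\text{Unif}([n]^t)$ in both instances, so extra samples give no new marginal information about $\pi$; the only channel through which distinguishing information can flow from Alice to Bob is the $s$-bit message $f(X)$, whose information content about $X$'s collision pattern is exactly what the chi-square/Fourier estimate above upper-bounds, which in turn leads to the $\sqrt{n}$ communication barrier. Once this is established, Yao's minimax principle lifts the distributional deterministic lower bound to the randomized lower bound claimed in the theorem.
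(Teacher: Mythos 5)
There is a genuine gap: your hard distribution for the NO case is not actually hard, and the key estimate your plan rests on ($\mathrm{TV}\le O(s/\sqrt{n})$, ``each message bit buys $O(1/\sqrt{n})$'') is false for it. Under your $\mu_1$ you set $Y_k=\pi(X_k)$ for an injective $\pi$, so the collision pattern of Bob's samples (the partition of $[t]$ induced by equal samples) is \emph{identical} to Alice's, whereas under $\mu_0$ the two patterns are independent. The theorem is only non-vacuous when $t$ is at least the sample complexity of the testing problem, i.e.\ $t=\Omega(n)$ for $\eps=1$, and in that regime (indeed for any $t\gg\sqrt{n}$) both parties see many collisions. Alice can then send a public-coin $O(1)$-bit hash of her collision partition (or just the collision count modulo a small integer); Bob compares it with the same statistic of his own samples. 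Under $\mu_1$ they agree with probability $1$, under $\mu_0$ with probability $1/2+o(1)$, so $O(1)$ bits of one-way communication distinguish your two distributions with constant advantage. Hence no BJK/Fourier-style argument can establish the claimed TV bound for this pair, and Yao's principle applied to it cannot yield an $\Omega(\sqrt{n})$ bound. (Your fallback also has a parameter slip: a reduction from Boolean Hidden Matching on instances of size $\Theta(\sqrt{n})$ would only give $\Omega(n^{1/4})$; you need instances of size $\Theta(n)$.)

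The paper avoids exactly this trap by reducing from Boolean Hidden Hypermatching on $n$ vertices: from a single BHH instance $(x,M)$ and a shared random string, Alice and Bob locally generate arbitrarily many sample pairs $(A_i,B_i)$ whose joint law is exactly the uniform product distribution in one case and at $\ell_1$ distance $1$ from any product distribution in the other, while the marginals are uniform in both. Crucially, in both cases the conditional law of $B_i$ given $A_i$ is spread over $n/2$ letters (Bob's sample is never a deterministic function of Alice's), so the joint samples carry only an $O(1/n)$-level correlation in their collision structure and no cheap collision statistic separates the cases; the $\Omega(\sqrt{n})$ hardness is then inherited from the known BHH lower bound rather than proven from scratch. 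If you want to salvage a direct distributional argument, you would need a NO distribution with this kind of ``noisy'' conditional structure (e.g.\ supported on pairs $(j,k)$ with $x_j=x_k$ for a hidden balanced labeling, as the paper's reduction effectively produces), at which point you are essentially re-deriving the BHH/BHM hardness anyway.
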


\subsection{Related work}
Our work bridges three separate areas and models: distribution testing,
streaming/sketching, and secure computation.  There's a large body of
work in each of these areas, and we mention the branches most relevant
to us.

\paragraph{Distribution Testing.} The most basic problem for
distribution testing is identity testing: where one distribution
(say, $a$) is fixed/universal (e.g., uniform). There is a large body of
work on this problem for general $a$, starting with
\cite{Batu:2001:TRV:874063.875541}; see surveys
\cite{goldreich17introduction,
  canonne2015survey, rubinfeld2012taming}. As this
problem is fundamentally about one unknown distribution $a$, we do not
consider it in the 2-party model.

For the closeness testing problem, in addition to the aforementioned
references, we highlight the result of \cite{diakonikolas2016new} (see
also \cite{goldreich2016uniform}), whose techniques are the starting
point of our protocols. That result introduces a clean framework for
testing, by reducing $\ell_1$ testing to $\ell_2$ testing and checking
if two distributions are the same or $\epsilon$-far from each other
with respect to the $\ell_2$ norm. 

The independence testing problem has been studied in
\cite{Batu:2001:TRV:874063.875541, levi2013testing,
  acharya2015optimal, diakonikolas2016new}. In the standard setting (1
party) the problem is defined as: given samples from a distribution
$\pi$ over pairs with marginals $a,b$, determine whether $a,b$ are
independent ($\pi=a\times b$) or $\pi$ is $\eps$-far from being a
product distribution. This line of work culminated in the work of
\cite{diakonikolas2016new}, who show the tight sample complexity of
$\Theta(\sqrt{nm}/\eps^2 + n^{2/3}m^{1/3}/\eps^{4/3})$, where $n\ge
m$ are the cardinalities of the two sets.

Other related questions include: testing over $d$-tuples of
distributions \cite{levi2013testing, DBLP:journals/siamdm/LeviRR14,
  acharya2015optimal, diakonikolas2016new}, testing independence of
monotone distributions \cite{batu2004sublinear}, and testing $k$-wise
independence \cite{AAKMRX, rubinfeld2010testing,
  rubinfeld2013robust}.
To what extent the samples--communication trade-off from this paper
extend to these problems is an interesting open question. 

\paragraph{Connection to communication and sketching.}
As already alluded to, our 2-party testing setup turns out to be connected to
sketching. In particular, for the closeness problem, when the
number of samples is $t\approx n$, Alice and Bob can approximate
respectively $a,b$ (up to small $\ell_1$ distance), and then just
estimate $\|a-b\|_1$ using $\ell_1$ sketching such as \cite{I00b},
using {\em constant} communication.

More generally, the 2-party communication model is intimately connected to
the {\em streaming model}, where the input is streamed over, while
keeping small extra space for computation. Most relevant here is the
work where the input is randomized in some way --- this is similar to our
setting, where the input consists of {\em samples} from a distribution. This
includes the recent work on {\em stochastic streaming}
\cite{crouch2016stochastic}, where the input is generated from a
distribution. In particular, \cite{crouch2016stochastic} analyze the
space complexity for estimating the $\ell_2$ norm of a probability
distribution $a$ given $t$ samples presented as a stream. They show a
trade-off between space $s$ and number of samples $t\ge\sqrt{n}$ of
$s\cdot t=\tilde O_\eps(n)$. They also show this is tight under a likely
conjecture on the space complexity of estimating frequency moments in
the {\em random order} streaming model. 

For the independence problem, two variants were studied in the
streaming model in \cite{indyk2008declaring,
  braverman2010measuring}. In the first variant, termed {\em
  centralized model}, we stream over pairs $\zeta=(x,y)$, which
implicitly define a joint distribution $(a,b)$, and need to compute
the distance between the joint distribution and the
product-of-the-marginals distribution. Space complexity of this
version is $(\log n)^{O(1)}$ \cite{braverman2010measuring}. The second
variant, termed {\em distributed model}, is more similar to ours: we
stream over items $(t,i,p)$, where $t$ is the time (index) of a sample
$(x_t,y_t)\sim(a,b)$ and $i\in[n]$ is either $x_t$ (if $p=1$) or $y_t$
(if $p=2$), and the goal is again to estimate the distance between
$(a,b)$ and product distribution. For the latter problem,
\cite{indyk2008declaring} show the complexity is between $\Omega(n)$
and $\tilde O_{\eps}(n^2)$.

In the context of connections between communication complexity and
property testing, we also mention the work of \cite{blais2012property,
  blais2016alice}.  They use communication tools to prove lower bounds
on the {\em sample complexity} of testing problems. Their techniques
however do not readily apply to proving {\em communication complexity} of
(distribution) testing problems as we do here.

\paragraph{Testing and learning in distributed and streaming models.}
Communication lower bounds are also often related to those in the
streaming and distributed models, which have received recent focus in
the context of testing and learning questions. In particular,
streaming (as a memory constraint) was considered as early as in
\cite{cover1969hypothesis, hellman1970learning}, for distribution
(hypothesis) testing of one distribution. In recent years, a lot of
attention has been drawn to streaming (memory) lower bounds for
learning problems, such as parity learning
\cite{raz2016fast,raz2017time,kol2017time,moshkovitz2017mixing,garg2018extractor}. These
results also show a trade-off between number of samples and space
complexity.

Another recent avenue is to study such problems in the distributed
model, where there are many symmetric players, each with a number of
i.i.d.~samples from the same distribution. For learning problems
(e.g., parameter or density estimation), see, e.g.,
\cite{braverman2016communication, communicationLearning-nips17,
  DaganShamir-correlations18}. We note that since learning is a much
harder problem, typically proving lower bounds is easier (e.g., as
shown in \cite{communicationLearning-nips17}, merely communicating the
output requires $\Omega(n)$ communication). In contrast, for testing
problems, the output is just one bit. For testing problems, see the
recent (independent) manuscript
\cite{DBLP:journals/corr/abs-1804-06952}, who studies testing of one
distribution, and focuses on reducing the communication per sample (from
a max of $O(\log n)$).

None of the lower bounds from the above papers seem to be relevant
here as they become vacuous for a 2-party setting. Indeed, when two
players have two set of samples from the {\em same} distribution, then
purely doubling the sample set of a player trivializes the question
(she can solve it without communication).

\paragraph{Secure approximations.}  
Our results on secure distribution testing can also be seen in the
context of the area of secure computation of approximations.  This is
a framework  introduced by \cite{Feigenbaum:2006:SMC:1159892.1159900},
allowing to combine the benefits of approximation algorithms and
secure computation.  This was considered in different settings
\cite{Feigenbaum:2006:SMC:1159892.1159900,HKKN01,BCNW08,Indyk:2006:PPA:2180286.2180304,BHN09,IMSW09,KMSZ08},
but the most relevant to us is private approximation of distance
between two input vectors. In particular, for $\ell_2$ distance, 
Alice and Bob each have a vector $a,b\in \R^n$ and want to estimate 
$\|a-b\|_2$, without revealing any information that does not follow
from the $\ell_2$ distance itself.  
For this problem, \cite{Indyk:2006:PPA:2180286.2180304}
show that secure protocols are possible with only poly-logarithmic
communication complexity. They also specifically look into the private
near-neighbor problem and its approximation.  We use some of their 
techniques in our secure  protocols.

Approximation and testing have a similar flavor in that
they both trade accuracy for efficiency, in different ways
(computing an estimate of the solution in the former case, and
computing the correct output bit if the promise is satisfied in
the latter). The security goals are also similar (prevent leakage
beyond the intended output).
One important difference is that the intended output in secure testing
is just the  single bit of whether or not the test passed.  
Thus, for example, when approximating a distance function, even a
secure protocol can leak any information that follows from the
distance. In contrast, when testing for closeness, if the inputs are
either identical or far, the protocol may only reveal this fact, but
no other information about what the distance is. 

\paragraph{Security and privacy of testing.}
While we are not aware of any work on secure testing, several recent
papers address {\em differentially-private distribution testing}
\cite{diakonikolas2015differentially, cai2017priv,
  DBLP:journals/corr/AcharyaSZ17,DBLP:journals/corr/AliakbarpourDR17}.
Here the privacy guarantee relates to the value of the output after
the computation is concluded, requiring it to be
differentially-private with respect to the inputs.
Our notion of security for distribution testing is different, in the
same way that secure computation is different from differentially
private computation.  While differential privacy (DP) is concerned
with what the intended output may leak about the inputs (even if the
input came from a single party or the computation is done by a trusted
curator), secure 2-party computation is concerned with how to compute
an intended output without leaking any information beyond what the
output itself reveals. 
The difference in goals also results in the difference that DP testing
privacy guarantees are typically statistical in nature and provide a 
non-negligible adversarial advantage, while secure testing protocols
rely on cryptographic assumptions and provide negligible advantage.

More recently, a more stringent model of Locally Differentially
Private Testing was proposed
\cite{Sheffet2018LocallyPH,Acharya2018TestWT}. This model provides a
stronger notion of differential privacy, where users send noisy
samples to an untrusted curator, and the goal is to allow the curator
to test the distribution of user inputs (for some property) without
learning \say{too much} about the individual samples. For LDP, the
main goal is to optimize the sample complexity as a function of the
privacy guarantees.  While this notion of privacy also incorporates
some privacy of the individual inputs, it is much closer to DP than to
our security notion. In addition, both DP and LDP do not provide
sub-linear communication (in the sample size, as we achieve here). In
fact, their goal is to allow $O(1)$ communication per sample, with
minimal sample overhead. In contrast, our protocols provide security
\say{nearly for free} while allowing for faster communication with
more samples.
Finally, in the case of independence testing, our work assumes samples
are distributed between the parties who need to test the joint
distribution, while in the aforementioned work, each data point
contains full sample information.

We also mention the work of \cite{MNS11} on {\em sketching in adversarial
environments}, which considers another  threat model:  here
the two parties performing the computation are honest (and trust each
other), but their (large) input arrives in a streaming fashion chosen
adversarially, and they may only maintain a sketch of the inputs.

\subsection{Our Techniques}

We now outline the techniques used to establish our main
results. Since our overall contribution is painting a big picture of
the 2-party complexity of distribution testing problems, we appeal to
a number of diverse tools. First, we design communication-efficient
protocols whose communication improves as we have more samples.
Second, we show that some of the established trade-offs are
near-optimal by proving communication complexity lower bounds on the
considered problems, which are near-tight in some of the parameter
regimes.  Third, we show how to transform our protocols into secure
protocols, under standard cryptographic assumptions, without further
loss in efficiency. All three of these contributions are independently
first-of-a-kind, to the best of our knowledge.

\medskip
\paragraph{Communication-efficient protocols.}
We start by noting that we can reduce the testing problem under the
$\ell_1$ distance (total variation distance) to the same problem under
the $\ell_2$ distance, using now-standard methods of
\cite{diakonikolas2016new, chan2014optimal}. Although this reduction
introduces a few complications to deal with, our main challenge is
actually testing under the $\ell_2$ distance.

{\em Closeness testing} ($\CT2p$) is technically the simpler problem,
but it already illustrates some phenomena, in particular, how to
leverage a larger number of samples to improve communication. To
estimate the $\ell_2$ distance between the 2 unknown distributions, we
compute the $\ell_2$ distance approximation between the given {\em
  samples} of these distributions.  In order to approximate the latter
in the 2-party setting, we use the {\em $\ell_2$ sketching} tools \cite{AMS}.
The crux is to show that we can tolerate a cruder $\ell_2$
approximation if we are given a larger sample size.  Since the
complexity of $(1+\alpha)$-approximating the $\ell_2$ distance is
$\Theta(1/\alpha^2)$, we obtain an improvement in communication that
is quadratic in the number of samples.

{\em Independence Testing} ($\IT2p$) is more challenging since any
distance approximation would need to be established based on the
distribution(s) implicitly defined via the {\em joint samples}, split
between Alice and Bob, and hence our approximation techniques above is
not sufficient. Instead, we develop a reduction from a large, $[n]
\times [m]$, alphabet problem, to a smaller alphabet problem, which
can be efficiently solved by communicating fewer samples. This is
accomplished by sampling a rectangle of the joint alphabet, and
showing that such a process, when combined with the {\em split-set}
technique from \cite{diakonikolas2016new}, generates {\em
  sub-distributions} (defined later) which satisfy some nice
properties. We then show one can test the original distribution
$p=(a,b)$ over a ``large'' domain of size $[n] \times [m]$ for
independence by distinguishing closeness of 2 simulated distributions
$\hat{p}, \hat{q}$, defined on a smaller domain of size $[l] \times
[m]$, where $l = \tilde{\Theta}_\eps(n^3m/t^3+n^2m/t^2+1)$. We show it
is possible for Alice and Bob to simulate joint samples from $\hat{p}$
and $\hat{q}$ using $O(1)$ communication per sample, after they have
down-sampled letters from one of the marginals.

The trade-off on communication--vs--samples emerges from two
compounding effects: 1) balancing the size of the target rectangle
with the expected number of available samples over such rectangle; and
2) the additional advantage from a tighter bound on the $\ell_2$ norm
of $\hat{q}$. Each of the above independently generates linear
improvement in communication with more samples. The latter advantage,
however, is helpful only while $t = O(n)$, and therefore we benefit
from quadratic improvement in that regime, and linear improvement
thereafter.

\smallskip
\paragraph{Lower bounds on communication.} We note that the lower bounds on
communication of testing problems present a particular technical
challenge: for testing problems, the inputs are i.i.d.~samples from
some distributions. This is more akin to the average-case complexity
setup, as opposed to ``worst case'' complexity as is standard for
communication complexity lower bounds.

We manage to prove such testing lower bounds for the {\em Closeness
  Testing} problem ($\CT2p$). While our lower bound is, at its core, a
reduction from some ``hard 2-party communication problem'', our main
contribution is dealing with the above challenge.  One may observe
that a ``hard 2-party communication problem'' is hard under a certain
input distribution (by Yao's minimax theorem), and hence a reduction
algorithm would also produce a hard distribution on the inputs to our
problem. However, apriori, it is hard to ensure that the resulting
input distribution resembles anything like a set of samples from
distributions $a,b$. For example, the inputs may have statistical
quirks that actually depend on whether it is a ``close'' or
``$\eps$-far'' instance, which a reduction is not able to generate
without knowing the output. Indeed, this is the major technical
challenge to overcome in our reduction.

At a high level, the role of the ``hard problem'' is played by a
variant of the well-known two-way Gap Hamming Distance (GHD) problem
\cite{chakrabarti2012optimal, DBLP:journals/toc/Sherstov12}. The known
GHD lower-bound variants are insufficient for us precisely because of
the above challenge---we need a better control over the actual hard
distribution, and in particular, for the $a=b$ instance, we need the
GHD input vectors to be at some fixed distance. Therefore, we study
the following {\em Exact GHD} variant: given $x,y\in\{0,1\}^n$, with
$\|x\|_1=\|y\|_1=n/2$, distinguish between $\|x-y\|_1=n/2$ versus
$\|x-y\|_1\in [n/2+\beta, n/2+2\beta]$. We show there exists some
$\beta\in [\Omega(\sqrt{n}), O(\sqrt{n\log n})]$ for which
communication complexity must be $\tilde \Omega(n)$, by adapting the
proof of \cite{DBLP:journals/toc/Sherstov12}. We note that such a
lower bound is far from apparent even in the 1-way communication model
(in contrast to the standard GHD, whose 1-way lower bound follows
immediately from the Indexing problem).

Using one instance of {\em Exact GHD}, our
reduction performs a careful embedding of this hard instance into the
samples from distributions $a,b$, while patching the set of samples to
look like i.i.d.~samples from the two distributions. While we don't
manage to get the output of the reduction to look precisely like
i.i.d.~samples from $a,b$, our reduction produces two sets of size
$\Poi(t)$ whose 
distribution is within a small statistical distance from the
distribution of two set of samples that would be drawn from two
distributions $a$ and $b$ which are either \say{equal} (when
$\|x-y\|_1=n/2$) or \say{far} (when $\|x-y\|_1\in [n/2+\beta,
  n/2+2\beta]$). 

We note that an immediate corollary of our lower bound is an
alternative proof that $\tilde\Omega(n^{2/3})$ samples are necessary
to solve closeness testing (in the vanilla testing setting), albeit
the resulting bound is off by polylogarithmic factors from the best
known ones \cite{chan2014optimal,diakonikolas2016new}.

For {\em Independence testing} ($\IT2p$), we focus on the lower bound
for unbounded number of samples, which is a bit simpler to deal with
as it becomes a ``worst-case problem'' (worst-case input joint
distributions $(a,b)$). We are able to show such a hardness result
under one-way communication only. Our $\Omega(\sqrt{m})$ lower bound
uses the Boolean Hidden Hypermatching (BHH) problem
\cite{doi:10.1137/1.9781611973082.2}. From a single BHH hard instance,
we generate an unbounded number of sample pairs $(A_i,B_i)$, where
the samples are drawn from some joint distribution $p =
(a,b)$. Depending on the BHH instance, 
$p$ is either exactly product or far from product distribution.

We conjecture that our entire trade-off for the Independence problem
is tight. The proof of this conjecture would have to overcome the above
challenge of lower bounds for statistical inputs, for finite $t$.

\medskip
\paragraph{Securing the communication protocols.}
Once low-communication insecure protocols have been designed, one may
try to convert the protocols to secure ones using generic
cryptographic techniques. The latter includes various techniques for secure
two-party computation (\cite{Yao82b} and followup work),
fully homomorphic encryption (\cite{Gentry09} and followup work), or
homomorphic secret sharing (\cite{BGI16,BGI17}). 
However, a na\"ive  application of such 
techniques
will  blow up the communication bounds to be at least linear in the input
size (which is prohibitive in our context), as well as possibly
requiring strong assumptions, a high computation complexity, or not
being applicable to arbitrary computations. The constraint of
low-overhead, combined with other considerations, requires design of
custom secure protocols.

Our starting point is a technique that falls into the latter category:
secure circuits with ROM~\cite{DBLP:journals/corr/cs-CR-0109011}, a
technique that can transform an insecure 2-party protocol to a secure
one with a minimal blow-up in communication, and uses a weak
assumption only (OT).  In order to obtain an efficient protocol,
however, it only applies to computations expressible via a very small
circuit, whose size is proportional to the target communication, %
with access to a larger read-only-memory (ROM) table.  Thus, the main challenge
becomes to design two-party testing protocols that fit this required
format.

For {\em Closeness Testing} ($\CT2p$), we begin with our
low-communication non-secure protocol, and adapt it to be secure by
designing a small circuit. One of the main difficulties in designing
such a circuit is that, in the $\ell_1$- to $\ell_2$-testing
reduction, Alice and Bob need to agree on an alphabet, which depends
on their inputs, without compromising the inputs themselves. To bypass
this, and other issues, we allow Alice and Bob to perform some
off-line work and prepare some polynomial-size inputs (in ROM). First,
we devise a method for Alice and Bob to generate a combined split set
$S$ (discussed later) by having each of Alice and Bob contribute
sampled letters to $S$. Second, we securely approximate the $\ell_2$
distance of Alice and Bob's original, un-splitted samples using
techniques from \cite{Indyk:2006:PPA:2180286.2180304}. Finally, we
adjust our approximation by accounting for some small number of
letters which differ from the original alphabet or which cannot be
approximated efficiently. To allow for such adjustment to be easily
accessible by a bounded-size circuit, Alice and Bob prepare inputs for
all possible scenarios. The main focus of our analysis goes into
proving our construction adds only poly-logarithmic factors in
communication over the non-secure protocol. Our adapted secure
algorithm turns out to deviate significantly from the simpler,
non-secure $\CT2p$ algorithm; hence we present the two protocols
separately.

For {\em Independence Testing} ($\IT2p$), the main challenge is in
designing the communication-efficient (insecure) protocol, while
adapting it to a secure one is somewhat simpler. One challenge
is to securely accomplish the aforementioned alphabet reduction (as the randomness, together with the final output, might compromise Bob's input). We
address this issue by having the entire process of
sampling a sub-distribution and pairing of samples to be
done over a secure circuit. The sampled alphabet, however, might be too large to communicate, and to overcome this obstacle, we devise a method for sampling from the non-empty
letters of Alice in a way that is distributed as if we would be
sampling directly over the entire alphabet. For this problem, the insecure and
secure protocol are more similar, and therefore we directly describe the secure
version.  %
\noindent

\section{Preliminaries}
\paragraph{Notation.} Throughout this paper we denote distributions in small letters,
and distribution samples in capital letters. Unless stated otherwise,
any distribution is on alphabet $[n]$, and domain elements of $[n]$
are addressed as letters. In addition, whenever discussing secure
protocols, we denote distributions as $a, b$, and for generic settings
as $p,q$. Similarly, secure vectors of distribution samples are
denoted as $A,B$ while generic ones are denoted as $X,Y$.

We also denote any multiplicative error arising from
approximation as $1+\alpha$, and any error or distance of/between
distributions as $\epsilon$.  Unless stated otherwise, distance and
norms are referring to the Euclidean distance and $\ell_2$ norms.

\paragraph{Split Distributions.}
We use the concept of split distributions from
\cite{diakonikolas2016new} to essentially reduce testing under
$\ell_1$ distance to testing under $\ell_2$ distance.
\begin{definition}
\label{def:splitDist}
Given a probability distribution $p$ on $[n]$ and a multiset $S$ of
items from $[n]$, define the {\em split distribution} $p_S$ on $[n +
  |S|]$ as follows. For $i\in[n]$, let $a_i$ be equal to 1 plus the
number of occurrences of $i$ in $S$; note that $\sum_{i=1}^{n} a_i = n
+ |S|$. We associate the elements of $[n + |S|]$ to elements of the
set $E$ = $\{(i, j) : i \in [n], 1 \leq j \leq a_i\}$. Now the
distribution $p_S$ has support $E$ and a random draw $(i, j)$ from
$p_S$ is sampled by picking $i$ randomly from $p$ and $j$ uniformly at
random from $[a_i]$.
\end{definition}
Recall from \cite{diakonikolas2016new} that split distributions are used to upper bound the $\ell_2$ norm of an underlying distribution while maintaining its $\ell_1$ distance to other distributions:
\begin{fact}[\label{split_fact}\cite{diakonikolas2016new}]
Let p and q be probability distributions on $[n]$, and $S$ a given multiset of $[n]$. Then
\begin{itemize}

\item We can simulate a sample from $p_S$ or $q_S$ by taking a single sample from p or q, respectively.
\item $\|p_S - q_S\|_1 = \|p - q\|_1.$
\end{itemize}
\end{fact}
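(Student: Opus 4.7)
The plan is to prove both bullets directly from the definition of the split distribution, since both follow essentially by unwinding Definition~\ref{def:splitDist}. There is no real obstacle here: the first bullet is a matter of observing that the sampling procedure described in the definition only consults $p$ (respectively $q$) once, and the second bullet reduces to a one-line calculation about absolute values.

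For the first bullet, I would argue as follows. The multiset $S$, and hence the multiplicities $a_1,\dots,a_n$, are fixed and known to whoever is performing the simulation; they do not depend on the draws from $p$. The definition says a sample $(i,j)\sim p_S$ is produced by drawing $i\sim p$ and then drawing $j$ independently and uniformly from $[a_i]$. Therefore, given one sample $I\sim p$, we output $(I,J)$ where $J$ is freshly drawn uniformly from $[a_I]$; by construction this is distributed exactly as $p_S$. The same recipe using one sample from $q$ gives $q_S$.

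For the second bullet, I would first write down the pointwise densities that the definition induces: for every $(i,j)\in E$, $p_S(i,j)=p(i)/a_i$ and $q_S(i,j)=q(i)/a_i$. The key step is then the computation
\[
\|p_S-q_S\|_1 \;=\; \sum_{i=1}^{n}\sum_{j=1}^{a_i}\Bigl|\tfrac{p(i)}{a_i}-\tfrac{q(i)}{a_i}\Bigr|
\;=\; \sum_{i=1}^{n} a_i\cdot\tfrac{|p(i)-q(i)|}{a_i}
\;=\; \sum_{i=1}^{n}|p(i)-q(i)| \;=\; \|p-q\|_1,
\]
where in the second equality we pulled the common factor $1/a_i$ out of the absolute value (legal because $a_i>0$) and used that there are exactly $a_i$ pairs $(i,j)\in E$ for each $i$. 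This yields the claimed identity.

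The only thing worth double-checking is that $\sum_i a_i = n+|S|$ so the alphabet sizes in the definition are consistent (one can see this from $a_i = 1 + (\text{multiplicity of } i \text{ in } S)$ summed over $i\in[n]$), and that $p_S$ is indeed a probability distribution (it sums to $\sum_i a_i\cdot p(i)/a_i = \sum_i p(i)=1$). Once these sanity checks are noted, no further work is required, and there is no technical obstacle to speak of—the fact is essentially a restatement of the construction.
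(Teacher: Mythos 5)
Your proof is correct and is exactly the standard argument: the paper itself states this fact without proof, importing it from \cite{diakonikolas2016new}, and your derivation (simulating $(I,J)$ by drawing $I\sim p$ and $J$ uniform on $[a_I]$ using the publicly known multiplicities, and the pointwise computation $p_S(i,j)=p(i)/a_i$ giving $\sum_i a_i\cdot|p(i)-q(i)|/a_i=\|p-q\|_1$) is the same one-line verification given there. Nothing is missing.
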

\begin{lemma}[\cite{diakonikolas2016new}]\label{split_lemma}
Let $p$ be a distribution on $[n]$. Then: (i) For any multisets $S \subseteq S'$ of $[n], \normt{p_{S'}}\leq \normt{p_{S}}$, and (ii) If $S$ is obtained by taking $Poi(m)$ samples from $p$, then $\E[\normt{p_S}^2] \leq 1/m$.
\end{lemma}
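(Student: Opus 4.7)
The plan is to first obtain a clean closed form for $\|p_S\|_2^2$ in terms of the multiplicities $a_i$, from which both parts of the lemma will follow almost mechanically. By the definition of $p_S$, the mass placed on the element $(i,j)\in E$ is exactly $p_i/a_i$, so summing squares and then summing over $j\in[a_i]$ gives
\[
\|p_S\|_2^2 \;=\; \sum_{i=1}^{n}\sum_{j=1}^{a_i}\left(\frac{p_i}{a_i}\right)^{2} \;=\; \sum_{i=1}^{n}\frac{p_i^{2}}{a_i}.
\]
Every subsequent computation will rest on this identity.

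For part (i), I would simply observe that if $S\subseteq S'$, then the number of occurrences of $i$ in $S$ is at most the number in $S'$, hence $a_i(S)\le a_i(S')$ coordinatewise. Plugging this into the formula above yields $p_i^{2}/a_i(S') \le p_i^{2}/a_i(S)$ for every $i$, and summing gives $\|p_{S'}\|_2^2 \le \|p_S\|_2^2$, as required.

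For part (ii), I would use Poissonization: when $S$ is obtained from $\Poi(m)$ samples of $p$, the multiplicity $X_i := a_i - 1$ of each letter $i$ is independent and distributed as $\Poi(m p_i)$. Taking expectations in the identity above gives
\[
\E\bigl[\|p_S\|_2^2\bigr] \;=\; \sum_{i=1}^{n} p_i^{2}\,\E\!\left[\frac{1}{1+X_i}\right].
\]
The key computation is the standard identity $\E[1/(1+X)] = (1-e^{-\lambda})/\lambda$ for $X\sim\Poi(\lambda)$, proved by expanding the series and reindexing $k\mapsto k+1$:
\[
\E\!\left[\frac{1}{1+X}\right] = \sum_{k=0}^{\infty}\frac{1}{k+1}\cdot\frac{\lambda^{k}e^{-\lambda}}{k!} = \frac{e^{-\lambda}}{\lambda}\sum_{k=0}^{\infty}\frac{\lambda^{k+1}}{(k+1)!} = \frac{1-e^{-\lambda}}{\lambda}.
\]
Applying this with $\lambda = m p_i$ gives $\E[\|p_S\|_2^2] = \sum_i p_i \cdot (1 - e^{-mp_i})/m \le (1/m)\sum_i p_i = 1/m$, completing the proof.

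Essentially no step presents a real obstacle: part (i) is a monotonicity observation and part (ii) reduces to the elementary Poisson moment computation above. The only place one has to be slightly careful is in invoking independence of the multiplicities $X_i$ under Poissonization (needed to treat the $a_i$'s as independent random variables), but this is exactly the reason $\Poi(m)$ rather than a fixed sample count is used in the hypothesis, so no extra argument is required.
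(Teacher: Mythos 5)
Your proof is correct and follows the same standard route as the cited source \cite{diakonikolas2016new} (which the paper invokes rather than reproves): the identity $\normt{p_S}^2=\sum_i p_i^2/a_i$ for part (i), combined with $\E\bigl[1/(1+\Poi(\lambda))\bigr]=(1-e^{-\lambda})/\lambda\le 1/\lambda$ applied to $\lambda=mp_i$ for part (ii). One minor remark: independence of the multiplicities $X_i$ is not actually needed anywhere in your argument—linearity of expectation together with the Poisson marginals $X_i\sim\Poi(mp_i)$ (which is what Poissonization of the sample size provides) already suffices.
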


\medskip
We provide further preliminaries needed for the secure versions of our
protocols in Section~\ref{subs_crypt_tools}.

\section{Closeness Testing: Communication-Efficient Protocol}
\label{sec_priv}

In this section we consider the closeness testing problem $\CT2p$,
focusing first on the 2-party communication complexity only. In
Section~\ref{sec:secureCT}, we modify the protocol to make it secure.

As mentioned in the introduction, one way to obtain a protocol is to
use unequal-size closeness testing, where Alice has $s$ samples and
Bob has $t$ samples: Alice just sends her $s$ samples to Bob, and Bob
invokes a standard algorithm for closeness testing.
Using the optimal bounds from, say,
\cite{diakonikolas2016new}, we get the following
trade-off for fixed $\eps$: $s=\tilde O(n/\sqrt{t})$, with the condition
that $s,t\ge \sqrt{n}$.

Here we obtain a {\em polynomially smaller} communication complexity,
$s=\tilde O_\eps(n^2/t^2)$, whenever $t$ is above the
information-theoretic minimum on the number of samples. In
Section~\ref{sec:ctLB}, we show a nearly-matching lower bound.

\subsection{Tool: approximation via occurrence vectors}

Our protocol uses the framework introduced in
\cite{diakonikolas2016new}, allowing us to focus on the $\ell_2$
testing problem. For $\ell_2$ testing, we show that, for two discrete
distributions $p, q$, we can approximate their $\ell_2$ distance by
approximating the $\ell_2$ distance of their respective sample
occurrence vectors, defined as follows.

\begin{definition}
Given $t$ samples of some distribution $p$ over $[n]$, we define {\em
  the occurrence vector} $X \in [t]^n$ such that $X_i$ represent the
number of occurrences of element $i\in[n]$ in the sample set.
\end{definition}

The following lemma bounds how well we need to estimate the $\ell_2$
distance between occurrence vectors in order to distinguish between
$p=q$ vs.~$\|p-q\|_1\ge \eps$. Overall it shows that the more samples
we have, the less accurate the $\ell_2$ estimation needs to be.  Using
the framework from \cite{diakonikolas2016new}, for now it is enough to
assume that the $\ell_2$ norm of both $p$ and $q$ is bounded by $U <
1$.

\begin{lemma}\label{app_dist_test}
	Let $p, q$ be distributions over $[n]$ with
        $\normt{p},\normt{q}\le U$ for some $U<1$. There exists a
        constant $C>0$, such that for $t=O(U \cdot n
        \cdot \epsilon^{-2})$, and $\alpha
        = \Omega(U)$, given $\Delta$ which is $(1\pm\alpha)$-factor
        approximation of $\normt{X-Y}^2$ where $X, Y$
        represent the occurrence vectors of $t$ samples drawn from $p,
        q$ respectively, then, using $\Delta$, it is possible to distinguish
        whether $p=q$ versus $\normo{p-q}>\epsilon$ with 0.8 probability.
\end{lemma}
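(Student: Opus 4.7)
The plan is to adapt the standard $\ell_2$-closeness tester of \cite{chan2014optimal,diakonikolas2016new} to use the approximation $\Delta$ in place of the exact statistic $\normt{X-Y}^2$, and carefully track the extra error contributed by the multiplicative slack $\alpha$. I would first Poissonize: replace each sample set by $\Poi(t)$ i.i.d.~samples, so that the coordinates of $X$ and $Y$ become mutually independent with $X_i \sim \Poi(tp_i)$ and $Y_i \sim \Poi(tq_i)$; this standard step costs only $O(1)$ in sample complexity and at most $2^{-\Omega(t)}$ in failure probability.

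Next I would consider the textbook unbiased estimator
\[
Z \;=\; \sum_{i=1}^{n}\bigl((X_i-Y_i)^2-X_i-Y_i\bigr) \;=\; \normt{X-Y}^2 - |X| - |Y|,
\]
where $|X|=\sum_i X_i$ and $|Y|=\sum_i Y_i$ are the (observable) sample counts. A direct Poisson-moment computation gives $\E[Z]=t^2\normt{p-q}^2$ and the standard bound
\[
\Var(Z) \;=\; O\bigl(t^2(\normt{p}^2+\normt{q}^2)+t^3(\normt{p}^2+\normt{q}^2)\normt{p-q}^2\bigr).
\]
Given the input approximation, I would define the surrogate statistic $\tilde Z \delequiv \Delta - |X| - |Y|$, which satisfies
\[
|\tilde Z - Z| \;\le\; \alpha\,\normt{X-Y}^2,
\]
with $\E[\normt{X-Y}^2] = 2t + t^2\normt{p-q}^2 \le O(t + t^2 U^2)$ using $\normt{p-q}\le 2U$. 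Markov's inequality controls this deviation except on a small-probability bad event.

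The tester would then output ``close'' iff $\tilde Z \le \tau$ for a threshold $\tau = \Theta(t^2\eps^2/n)$. Soundness rests on the fact that the gap between the two hypotheses is at least $t^2\normt{p-q}^2 \ge t^2\eps^2/n$, using Cauchy--Schwarz ($\normt{p-q}^2\ge \normo{p-q}^2/n$) in the ``far'' case. One must verify that this signal dominates both (i) the statistical deviation $O(\sqrt{\Var(Z)})$ and (ii) the approximation slack $\alpha\cdot O(t+t^2 U^2)$. Under $\normt{p}^2,\normt{q}^2\le U^2$, routine algebra shows (i) holds once $t = \Omega(Un/\eps^2)$, and (ii) holds once $\alpha = O(U)$; Chebyshev's inequality then caps the failure probability below $0.2$.

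The main obstacle is the combined accounting of the two error sources in the far regime: both $\sqrt{\Var(Z)}$ and $\alpha\normt{X-Y}^2$ scale with $\normt{p-q}$, so one must check that the signal $t^2\normt{p-q}^2$ grows strictly faster than either noise contribution throughout the range $\eps^2/n \le \normt{p-q}^2\le 4U^2$, and that the chosen threshold $\tau$ separates the two cases robustly once both errors are folded in. Everything else is a routine variance/concentration calculation.
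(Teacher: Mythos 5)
Your proposal is correct, and it reaches the same tester as the paper --- compare (a bias-corrected version of) $\Delta$ against a threshold of order $\frac{\eps^2t^2}{n}+2t$ and show a multiplicative slack $\alpha\lesssim \eps^2 t/n=\Theta(U)$ cannot flip the decision --- but it gets there by a more self-contained route. The paper's proof is essentially a citation plus a deterministic calculation: it invokes Proposition 3.1 of \cite{chan2014optimal}, which already guarantees that $\frac{\sqrt{\sum_i(X_i-Y_i)^2-X_i-Y_i}}{t}$ is an additive $\max\{\eps',\normt{p-q}/8\}$-approximation of $\normt{p-q}$ with probability $0.9$; setting $\eps'=\eps/8\sqrt{n}$ this pins $\normt{X-Y}^2$ to $\le \frac{\eps^2t^2}{4n}+2t$ in the close case and $\ge \frac{3\eps^2t^2}{4n}+2t$ in the far case, and the rest is a short deterministic check that a $(1\pm\alpha)$ distortion of these ranges still separates at $\tau=\frac{\eps^2t^2}{2n}+2t$ when $\alpha\lesssim \eps^2t/n$. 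You instead re-derive the Chan et al.\ guarantee from first principles (Poissonization, unbiasedness of $Z$, a variance bound, Chebyshev) and handle the approximation slack stochastically via Markov on $\normt{X-Y}^2$ rather than deterministically through the range bounds; this buys self-containedness at the cost of redoing a known moment computation, while the paper's version is shorter and keeps the randomness confined to a single black-box event. Two small caveats on your write-up: (i) your variance bound should have the third term scale like $t^3(\normt{p}+\normt{q})\normt{p-q}^2=O(t^3 U\normt{p-q}^2)$, not $t^3(\normt{p}^2+\normt{q}^2)\normt{p-q}^2$; your stated bound is stronger than what the standard Cauchy--Schwarz argument gives, but the conclusion is unaffected since even the correct bound is dominated by the signal $t^2\normt{p-q}^2$ exactly when $t=\Omega(Un/\eps^2)$; (ii) your statistic $\tilde Z=\Delta-|X|-|Y|$ uses the sample counts in addition to $\Delta$, which is harmless here because in the lemma's setting $|X|=|Y|=t$ is fixed (and if you Poissonize you should subsample, e.g.\ $\Poi(t/2)$, before $\Delta$ is formed, as you cannot generate extra samples).
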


The actual distinguishing algorithm is simple: for fixed $\alpha =
\Omega(U)$, we merely compare $\Delta$ to some fixed threshold $\tau$
(fixed in the proof below). The intuition is that for a given number
of samples, we have some gap between the range of possible distances
$\normt{X-Y}^2$ for each of the cases. If the number of samples is
close to the information-theoretic minimum
\cite{chan2014optimal}, then the gap is minimal and we
need to calculate almost exactly the distance, hence approximating the
distance between occurrence vectors doesn't help. However, as the
number $t$ of samples increases, so does gap between the ranges, and
we can afford a looser distance approximation.

\begin{proof}[Proof of Lemma \ref{app_dist_test}]
  Given $t=O(U/\epsilon'^2)$ samples from each $p, q$, according to
  \cite[Proposition 3.1]{chan2014optimal}, the estimator
  $\frac{\sqrt{\sum_i(X_i-Y_i)^2 - X_i - Y_i}}{t}$ is a
  $\max\{\epsilon',\normt{p-q}/8\}$ additive approximation of
  $\normt{p-q}$ with 0.9 probability.
	Setting $\epsilon' = \epsilon/8\sqrt{n}$, we get that:
\begin{align*}
	\normt{p-q} = 0 &\Rightarrow \normt{X-Y}^2 \leq \frac{\epsilon^2 t^2}{4n} + 2t \\ 
	\normt{p-q} > \epsilon/\sqrt{n} &\Rightarrow \normt{X-Y}^2 \geq \frac{3\epsilon^2 t^2}{4n} + 2t 
\end{align*}
Suppose $\Delta$ is such that $\tfrac{\Delta}{\normt{X-Y}^2}\in (1-\alpha,
1+\alpha)$. If $\normo{p-q}= 0$, then $\normt{p-q} = 0$ and hence
\begin{align*}
	 \Delta &\leq (1+\alpha)(\frac{\epsilon^2t^2}{4n} + 2t) 
	 \leq t(\frac{\epsilon^2t}{4n} + 2 + 2\alpha +
         \frac{\alpha\epsilon^2t}{4n});
\end{align*}
and if $\normo{p-q} > \epsilon$, then $\normt{p-q} >
\epsilon/\sqrt{n}$ and hence
\begin{align*}
	\Delta &\geq (1-\alpha)(\frac{3\epsilon^2t^2}{4n} + 2t) 
	\geq t(\frac{3\epsilon^2t}{4n} + 2 - 2\alpha - \frac{3\alpha\epsilon^2t}{4n}).
\end{align*}
We distinguish the two cases, by comparing $\Delta$ to
$\tau=\frac{\epsilon^2t^2}{2n} + 2t$: namely $p=q$ iff $\Delta\le \tau$.
Hence we just need to ensure that
$$
t(\frac{3\epsilon^2t}{4n} + 2 - 2\alpha -
\frac{3\alpha\epsilon^2t}{4n}) -\tau \ge \tau - t(\frac{\epsilon^2t}{4n} + 2 +
2\alpha + \frac{\alpha\epsilon^2t}{4n})$$
Hence we need that $\frac{\epsilon^2t}{4n} -
2\alpha - \frac{3\alpha\epsilon^2t}{4n}\ge 0$, or $\alpha \le \tfrac{\eps^2t}{4n\cdot
  (2+3\eps^2t/4n)}$. Since $t=O(Un\eps^{-2})$, the conclusion follows.
\end{proof}

\subsection{Communication vs number of samples}

We now provide a (non-secure) protocol for $\CT2p$ with a trade-off between
communication and number of samples.
\begin{theorem}[Closeness, insecure]\label{thm_2pct}
Fix $n>1$ and $\eps \leq 2$. There exists some constant $C>0$ such that for all $t \geq C \cdot \max\left(n^{2/3}\cdot \epsilon^{-4/3}, \sqrt{n}\cdot
\epsilon^{-2}\right)$, the problem $\CT2p_{n,t,\epsilon}$ can be solved using
$\tilde O\left(\frac{n^2}{t^2\epsilon^4} + 1\right)$ bits of
communication.
\end{theorem}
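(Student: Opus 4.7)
\emph{Plan.} The strategy is to combine the split-distribution reduction from $\ell_1$ to $\ell_2$ testing (Fact~\ref{split_fact}, Lemma~\ref{split_lemma}) with a linear $\ell_2$ sketch. Fix $m = \Theta(n^2/(t^2\epsilon^4))$. A shared random split set of expected size $m$ drives both $\normt{p_S}$ and $\normt{q_S}$ down to $U = O(1/\sqrt{m})$. Then Lemma~\ref{app_dist_test} guarantees that a multiplicative $(1\pm\alpha)$-approximation of $\normt{X-Y}^2$ for $\alpha = \Theta(U) = \Theta(t\epsilon^2/n)$ is enough to distinguish $p = q$ from $\normo{p-q} \geq \epsilon$. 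Such an approximation is exactly the job of the AMS $\ell_2$ sketch~\cite{AMS}, whose communication is $\tilde{O}(1/\alpha^2) = \tilde{O}(n^2/(t^2\epsilon^4))$, matching our target.

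\emph{Protocol.} I would Poissonize sample counts so that each party holds $\Poi(t)$ i.i.d.~samples. (i) Each party sets aside a fresh sub-batch of $\Poi(m/2)$ samples from their own distribution; Alice sends her batch to Bob and Bob sends his batch to Alice, and both locally form the union multiset $S$. (ii) Using Definition~\ref{def:splitDist}, each party maps their remaining samples into the enlarged alphabet $[n + |S|]$, producing occurrence vectors $X$ (for $p_S$ on Alice's side) and $Y$ (for $q_S$ on Bob's side). (iii) Using public coins, they run the AMS sketch tuned to error $\alpha = c/\sqrt{m}$ for a suitable small constant $c$; Alice sends $\mathrm{sketch}(X)$ to Bob, who uses linearity to form $\mathrm{sketch}(X) - \mathrm{sketch}(Y) = \mathrm{sketch}(X-Y)$ and extract the estimate $\Delta$ of $\normt{X-Y}^2$. (iv) Bob applies the threshold test from the proof of Lemma~\ref{app_dist_test} to $\Delta$ and outputs the verdict. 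Standard $O(1)$-fold repetition boosts the overall success probability above $2/3$.

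\emph{Analysis.} Fact~\ref{split_fact} preserves the $\ell_1$ promise, so the reduction to $\ell_2$ testing between $p_S$ and $q_S$ is sound. Let $S_A \subseteq S$ and $S_B \subseteq S$ be the contributions of Alice and Bob. By the monotonicity clause of Lemma~\ref{split_lemma}, $\normt{p_S}^2 \leq \normt{p_{S_A}}^2$ and $\normt{q_S}^2 \leq \normt{q_{S_B}}^2$, while the second clause gives $\E[\normt{p_{S_A}}^2] \leq 2/m$ and $\E[\normt{q_{S_B}}^2] \leq 2/m$. Markov's inequality then yields $U = O(1/\sqrt{m})$ with constant probability, so Lemma~\ref{app_dist_test} applies with $\alpha = \Theta(U)$. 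The communication is $O(m \log n)$ for exchanging $S$, which equals $\tilde{O}(n^2/(t^2\epsilon^4))$, plus an $O(\log^2(tn)/\alpha^2) = \tilde{O}(n^2/(t^2\epsilon^4))$-bit sketch message, totaling $\tilde{O}(n^2/(t^2\epsilon^4))$. The lower-bound hypothesis $t \geq C \max(n^{2/3}\epsilon^{-4/3}, \sqrt{n}\epsilon^{-2})$ is exactly what makes $t = O(Un\epsilon^{-2})$ feasible for our choice of $U$, so Lemma~\ref{app_dist_test} is in force.

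\emph{Main obstacle.} The central subtlety is implementing the split construction in the two-party model while bounding the $\ell_2$ norm of \emph{both} split distributions at the same time: Lemma~\ref{split_lemma}(ii) gives $\E[\normt{p_S}^2] \leq 1/m$ only when $S$ contains samples from $p$, so Alice's contribution to $S$ is needed to control $\normt{p_S}$ and symmetrically Bob's contribution is needed for $\normt{q_S}$. Handling this forces the two-way exchange of the $S$-batches and, via Poissonization, independence between $S$ and the samples used for $X, Y$. A minor second subtlety is calibrating $\alpha$ so that the sketch cost matches the $n^2/(t^2\epsilon^4)$ budget exactly; the algebra balances at $\alpha = \Theta(t\epsilon^2/n)$. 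Finally, the regime where $n^2/(t^2\epsilon^4) < 1$ (very large $t$) is handled by skipping the split and sketching the raw occurrence vectors with $\alpha = \Theta(1)$, contributing the $+1$ term.
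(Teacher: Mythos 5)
Your proposal is correct and follows essentially the same route as the paper: split-distribution reduction to $\ell_2$, occurrence vectors, Lemma~\ref{app_dist_test} with $U=O(t\eps^2/n)$ and $\alpha=\Theta(t\eps^2/n)$, and an AMS sketch of $\normt{X-Y}^2$ costing $\tilde O(1/\alpha^2)$, with the hypothesis on $t$ ensuring the lemma applies and $|S|=O(n)$. The only deviation is how you bound both split norms: you build $S=S_A\uplus S_B$ from batches of both parties (exactly the symmetric construction the paper adopts later for the secure protocol), whereas the paper's insecure protocol draws $S$ from $b$ alone and adds a factor-4 comparison of the estimated $\normt{a_S},\normt{b_S}$, outputting \say{$\eps$-FAR} on mismatch; both variants are sound and give the same communication up to constants.
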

The protocol uses Lemma~\ref{app_dist_test} as the main algorithmic
tool and proceeds as follows.  Bob generates multi-set $S$ using
samples from $b$ and sends $S$ to Alice. Then, Alice and Bob each
simulates samples from $a_S$ and $b_S$ respectively, and they together
approximate the $\ell_2$ difference of the resulting occurrence
vectors using sketching methods \cite{AMS}.\\

\fbox{%
\parbox{450pt}{
\textbf{Non-Secure $\CT2p(a,b,t)$} \\
Alice's input: $t$ samples from $a$ \\
Bob's input: $t$ samples from $b$
\begin{enumerate}
	\item Fix $\alpha= \Omega(t \cdot \epsilon^2/n)$.
	\item Bob generates multi-set $S$ using  $Poi(\tfrac{n^2}{t^2\epsilon^4})$  samples from $b$.
	\item Bob sends $S$ to Alice.
	\item Alice and Bob recast their samples as being from
          distributions $a_S, b_S$ (see Def.~\ref{def:splitDist}), and
          set $A_S, B_S$ to be the respective occurrence vectors.
        \item \label{it:L2normComparison} Alice and Bob each estimate $\|a_S\|_2$ and $\|b_S\|_2$
          up to factor 2; if the two estimates are not within factor $4$, output \say{$\eps$-FAR};
	\item Alice and Bob approximate $\Delta=\normt{A_S-B_S}^2$ up
          to $(1+\alpha)$ factor, using, say, \cite{AMS}.
	\item If $\Delta$ is less than
          $\tau=\tfrac{\eps^2t^2}{2n}+2t$ output \say{SAME}, and,
          otherwise, output \say{$\eps$-FAR}.
\end{enumerate}
}}

\begin{proof}[Proof of Theorem \ref{thm_2pct}]
We note that, according to Lemma \ref{split_lemma},
$\E[\normt{b_S}^2] = t^2\epsilon^4 / n^2$ and hence $\normt{b_S}^2 =
O(t^2\epsilon^4 / n^2)$ with at least 90\% probability. Furthermore, since $t = \Omega(\sqrt{n}/\epsilon^2)$,
we have that $|S|= O(n)$ with high probability. From now on, we condition on
these two events. 

If $\normt{a_S} \neq \Theta(\normt{b_S})$ then distributions are
different and we output \say{$\eps$-far} is step
\ref{it:L2normComparison}. Otherwise, we have that
$\|a_S\|_2^2=O(\|b_S\|_2^2)=O(t^2\epsilon^4 / n^2)$. Hence we can use
Lemma \ref{app_dist_test}, where $U=O(t\epsilon^2 / n)$ and
$\alpha=\Omega(U)$, to claim the correctness of the protocol.

We now analyze the communication used by the protocol:
	\begin{enumerate}
		\item communicating $S$ takes $|S|\log n =
                  \tilde O(n^2/t^2\epsilon^4)$ bits with high probability.
		\item estimating $\Delta$ up to approximation
                  $1+\alpha$ takes $\tilde O(1/\alpha^2) = \tilde
                  O(n^2/t^2\epsilon^4)$ bits, using standard $\ell_2$
                  estimation algorithms \cite{AMS, KOR00}.
	\end{enumerate}
\end{proof}

\begin{remark}
Another application of this protocol is that it can be
  simulated by a {\em single party\/} to obtain space-bound {\em
    steaming algorithm\/} with the same space/sample trade-offs. While
  we are not formalizing this argument in this paper, this can
  essentially be done by storing $S$ and sketching $\normt{A_S -
    B_S}^2$.
\end{remark}

\section{Closeness Testing: Communication Lower Bounds}
\label{sec:ctLB}

We now prove that the protocol for $\CT2p$ from Section~\ref{sec_priv}
is near-tight, showing the following theorem:

\begin{theorem}
\label{thm:closeDisj}
Let $a, b$ be some distributions over alphabet $[n]$, where Alice and
Bob each receive $\Poi(t)$ samples from $a, b$ respectively, for $t\le
n/\log^c n$ for some large enough $c>1$. Then any (two-way)
communication protocol $\Pi$ that distinguishes between $a=b$ and
$\normo{a-b}\geq 1/2$ requires $s=\tilde\Omega(n^2/t^2)$ communication.
\end{theorem}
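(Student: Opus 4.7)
The plan is to establish this lower bound by reducing from a tailored 2-party communication problem, which I will call \emph{Exact GHD}: Alice and Bob receive balanced strings $x, y \in \{0,1\}^N$ with $\|x\|_1 = \|y\|_1 = N/2$, and must distinguish $\|x-y\|_1 = N/2$ from $\|x-y\|_1 \in [N/2+\beta, N/2+2\beta]$, for some specific $\beta \in [\Omega(\sqrt{N}), O(\sqrt{N\log N})]$. The overall proof has three parts: (a) establishing an $\tilde{\Omega}(N)$ communication lower bound for Exact GHD, (b) giving a reduction from Exact GHD with $N = \tilde{\Theta}(n^2/t^2)$ to $\CT2p_{n,t,1/2}$ with only $O(\log n)$ communication overhead, and (c) combining (a) and (b) to conclude.

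For part (a), the plan is to adapt the pattern-matrix/discrepancy proof of Sherstov for standard GHD. Standard GHD distinguishes $\|x-y\|_1 \leq N/2 - \sqrt{N}$ from $\|x-y\|_1 \geq N/2 + \sqrt{N}$ without any balance constraint, whereas our variant pins the yes-case at exact equality and restricts inputs to balanced strings. I would show that Sherstov's hard input distribution can be restricted (via an averaging/conditioning argument) to balanced pairs, and that for a suitable $\beta$ in the stated range, the gap between $\|x-y\|_1 = N/2$ and $\|x-y\|_1 \in [N/2+\beta, N/2+2\beta]$ still supports the same discrepancy-based lower bound, up to a polylogarithmic loss.

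For part (b), Alice and Bob use their Exact GHD inputs $x, y$ together with shared randomness to produce sample sets $A, B$ of size $\Poi(t)$ each, feed them into the assumed $\CT2p$ protocol, and output the answer. The property to establish is that the joint distribution of $(A, B)$ is within $o(1)$ total variation distance of i.i.d.~samples from some pair $(a, b)$ over $[n]$ with $a = b$ in the yes-case and $\|a-b\|_1 \geq 1/2$ in the no-case. I anticipate the reduction using Poissonization (so that per-letter sample counts are independent across letters), a shared-randomness-based embedding of the GHD coordinates into the alphabet $[n]$, and a final \say{patching} step that absorbs residual statistical deviations into the $o(1)$ TV budget. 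Setting $N = \tilde{\Theta}(n^2/t^2)$ and combining with (a) then yields the claimed $\tilde{\Omega}(n^2/t^2)$ bound.

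The main technical obstacle is part (b). Standard communication-complexity arguments operate on worst-case inputs, while $\CT2p$ is defined over random i.i.d.~samples subject to a promise on the underlying distributions; moreover, the yes-case of the promise requires $a = b$ \emph{exactly}, which is particularly delicate to guarantee when Alice sees only $x$ and Bob only $y$. The reduction must therefore exploit shared randomness and the $o(1)$ TV slack to \emph{statistically} merge the two induced sample distributions in the yes-case, while still preserving a gap of at least $1/2$ in the no-case. Managing this balance carefully---without sacrificing more than polylogarithmic factors in the Hamming gap $\beta$ or the alphabet embedding---is the principal challenge, and is what accounts for the $\tilde{\Omega}$ rather than $\Omega$ in the final bound.
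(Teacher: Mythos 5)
Your plan follows essentially the same route as the paper: the same Exact GHD variant (balanced inputs, exact equality in the yes-case, $\beta\in[\Omega(\sqrt N),O(\sqrt{N\log N})]$) proved by adapting Sherstov's corruption-bound argument, combined with a shared-randomness, Poissonized embedding at $N=\tilde\Theta(n^2/t^2)$ whose output is within small total-variation distance of genuine i.i.d.\ samples from a pair $(a,b)$ that is equal or far. The one step you leave unspecified---how the ``patching'' is actually realized---is exactly what the paper does via its explicit dense-items/large-items construction and a Multinomial-versus-Poisson total-variation bound, so your outline matches the paper's proof in all essentials.
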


Intuitively, our proof formalizes the concept that in testing
distributions for closeness, ``collisions is all that matters'', {\em
  even in the communication model}. This is similar to the intuition
from the ``canonical tester'' from \cite{valiant2011testing}, which
shows a similar principle when all the samples are accessible. Our
result can be seen to extending it to saying that the canonical tester
is still the best even if we have more-than-strictly-necessary number
of samples that we could potentially compress in a communication
protocol.

To prove the theorem, we rely on the
following communication complexity lower bound, which is a variant of
the GapHamming lower bound \cite{chakrabarti2012optimal,
  DBLP:journals/toc/Sherstov12}. A somewhat surprising aspect of this
GapHamming variant is that, unlike for the standard GapHamming, we are
not aware of a lower bound for one-way communication that would be
simpler than the two-way proof from the lemma below.

\newcommand{\cA}{{\cal A}}

\begin{lemma}
\label{lem:gapHamming}
Let $n\ge 1$ be even. There exists some
$\beta=\beta(n)\in[\Theta(\sqrt{n}), \Theta(\sqrt{n\log n})]$, %
satisfying the following.  Consider a two-way
communication protocol $\cA$ that, with probability at least $0.9$,
for $x,y\in\{0,1\}^n$ with $\|x\|_1=\|y\|_1=n/2$, can distinguish
between the case when $\|x-y\|_1=n/2$ versus $\|x-y\|_1-n/2\in
[\beta,2\beta]$. Then $\cA$ must exchange at least $\Omega(\tfrac{n}{\log
n\cdot \log\log n\cdot \log\log\log n})$ bits of communication.
\end{lemma}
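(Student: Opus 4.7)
The plan is to adapt Sherstov's proof \cite{DBLP:journals/toc/Sherstov12} of the standard Gap Hamming Distance lower bound, which gives $\tilde\Omega(n)$ communication for distinguishing $\|x-y\|_1 \le n/2 - \Theta(\sqrt n)$ from $\|x-y\|_1 \ge n/2 + \Theta(\sqrt n)$. Our variant is more constrained on both sides---exact equality $\|x-y\|_1 = n/2$ on the close side, and a bounded window $[n/2+\beta, n/2+2\beta]$ on the far side---so a direct black-box reduction from standard GHD is not available, and one has to re-enter Sherstov's dual-polynomial / pattern-matrix machinery.

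First, I would reduce to the balanced setting $\|x\|_1 = \|y\|_1 = n/2$ at constant cost (Sherstov's hard distributions are already concentrated on near-balanced inputs), and rewrite the problem in terms of $z = x \oplus y$: distinguish $|z| = n/2$ from $|z| \in [n/2+\beta, n/2+2\beta]$, where $z$ is drawn from a distribution encoding the balance constraints. The central step is then to exhibit, for a suitable $\beta$, a dual witness $\phi$ on $\{0,1\}^n$ of $L_1$-weight at most one, supported only on the two target strata, having constant correlation with the target partial function and correlation $o(1)$ with every multilinear polynomial of degree below $d = \Omega(n/\polylog n)$. Sherstov's construction delivers such a witness for the standard GHD half-line regions; I would adapt it by multiplying by a low-degree Chebyshev-type ``bump'' concentrated at Hamming weight $n/2$ (for the negative part) and truncating the positive part to the window $[n/2+\beta, n/2+2\beta]$. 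Once the dual witness is in hand, Sherstov's pattern-matrix lifting yields the stated $\tilde\Omega(n)$ randomized two-way communication lower bound.

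Second, for the existence of $\beta$, I would cover the interval $[\sqrt n, \sqrt{n\log n}]$ by a dyadic sequence $\beta_i = 2^i \sqrt n$ of candidate values and argue via pigeonhole that at least one $\beta_i$ inherits the high approximate-degree bound---since the union of these windows tiles the entire Sherstov ``gap'' region, failing for all of them would combine (via polylog-many chained distinguishers along $n/2 \to n/2+\beta_0 \to n/2+\beta_1 \to \cdots$) into an $o(n/\polylog n)$ protocol for standard GHD and contradict the known bound. The $\sqrt{\log n}$-factor slack in the range of $\beta$ is the source of the extra $\log n \cdot \log\log n \cdot \log\log\log n$ loss in the final bound.

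The main obstacle I anticipate is the localization step within the dual witness construction: naive Minsky--Papert symmetrization yields only an approximate degree of order $\sqrt{n/\beta}$, which is far too weak. Recovering $\Omega(n/\polylog n)$ requires the full smoothed-orthogonalizing-polynomial machinery of Sherstov applied to the two narrow strata $\{|z|=n/2\}$ and $\{|z| \in [n/2+\beta, n/2+2\beta]\}$, while ensuring that the concentration onto a single Hamming level (rather than a half-line) inflates the $L_1$-weight of the dual witness by at most a polylogarithmic factor. This bookkeeping, rather than any essentially new technique, is where the real work of the adaptation lies.
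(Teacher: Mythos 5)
Your overall architecture is half right: the dyadic covering of $[\Theta(\sqrt n),\Theta(\sqrt{n\log n})]$ by windows $[\beta_i,2\beta_i]$, with a pigeonhole argument showing that if every window were cheap one could combine the (amplified) distinguishers into a cheap protocol for the wide-window problem, is exactly the second half of the paper's argument, and it is where the $\log\log n\cdot\log\log\log n$ factors come from. The genuine gap is in what you propose to contradict. You cannot close the argument against \emph{standard} GHD as a black box: a protocol that distinguishes ``$\|x-y\|_1=n/2$'' from the union of windows gives no guarantee whatsoever on inputs with $\|x-y\|_1\le n/2-\sqrt n$, so it does not solve GHD; and conversely the standard constant-error GHD bound does not imply hardness of the exact-vs-window problem, because under the natural (uniform) hard distribution the stratum $\{\|x-y\|_1=n/2\}$ has measure only $\Theta(1/\sqrt n)$, so a constant-error distributional protocol may simply ignore it. The paper's Claim in the appendix is built precisely to handle this: it re-enters Sherstov's proof only to the extent of reusing his rectangle/anti-concentration theorem (Theorem~3.3 of the cited paper) inside a corruption bound with error $o(1/\sqrt n)$, obtained by $O(\log n)$-fold amplification of $\cA$ (this is where the $\log n$ loss arises, not from the $\sqrt{\log n}$ slack per se), together with the padding $x'=(x,\bar x)$, $y'=(y,\bar y)$ to enforce balance and a second invocation on $(x',\bar{y'})$ to rule out the low side. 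Your ``chained distinguishers along $n/2\to n/2+\beta_0\to\cdots$'' is not developed and does not obviously supply the missing distributional statement.

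The first half of your plan is also not a proof but a program, and I believe it is the wrong program. Sherstov's two-way GHD lower bound in the paper you cite is \emph{not} proved by dual polynomials and pattern-matrix lifting; it goes through a corruption bound for a Gap Orthogonality problem via concentration of measure on large rectangles, exactly because pattern-matrix lifting applies to composed functions and GHD-type problems resist it. Constructing a dual witness supported on the single Hamming level $\{|z|=n/2\}$ and a narrow window, with $\Omega(n/\polylog n)$ ``degree'' and controlled $L_1$ weight, is a substantial open-ended task that you acknowledge you have not carried out (``this bookkeeping \ldots\ is where the real work lies''), and even granting it you would still need a lifting theorem applicable to this non-composed problem. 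The paper avoids all of this: it never builds a new analytic witness, it only observes that Sherstov's rectangle theorem already speaks about deviation $\ge\beta$ from the mean on any large rectangle, which, combined with the small measure of the exact level and a low-error corruption bound, yields hardness of the exact-vs-wide-window problem directly.
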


The proof of this lemma is presented in Appendix
\ref{sec:gapHamming}. 

Now the idea is to reduce an instance of the GapHamming input from
Lemma~\ref{lem:gapHamming} to an instance of closeness testing by
carefully molding the input $(x,y)$ into a couple of related
occurrence vectors $(A,B)$.%
We use the following estimate on the statistical distance between
Multinomial and Poisson random variables. 
\begin{definition}
Consider $n,k\ge 1$, as well as a vector $\vec{p}\in \R^k_+$, where
$\sum_{i=1}^k p_i\le 1$. Then let $(M_1,\ldots M_k)=\Mult(n;\vec{p})$
be the $k$-dimensional random variable obtained by drawing a
Multinomial random variable with parameters $n$ and probability vector
$(1-\sum_{i=1}^k p_i, \vec{p})$, and dropping the first coordinate.
\end{definition}

\begin{theorem}[\cite{barbour1988stein}]
\label{thm:barbour88}
Let $n,k\ge 1$, as well as a vector $\vec{p}\in \R^k_+$, where
$p=\sum_{i=1}^k p_i\le 1$. Consider the random variable $(M_1,\ldots
M_k)$ to drawn from the Multinomial $\Mult(n; \vec{p})$. Also consider
the Poisson random variable $P=(P_1,\ldots, P_k)$ where $P_i\sim
\Poi(np_i)$. Then the variables $M=(M_1,\ldots M_k)$ and $(P_1,\ldots
P_k)$ are at a statistical distance of $O(p\log n)$.
\end{theorem}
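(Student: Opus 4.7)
The plan is to reduce this multivariate total-variation bound to a one-dimensional Binomial-vs.-Poisson approximation via \emph{Poissonization}. First, I exploit the classical splitting property of Poisson variables: if one samples $N \sim \Poi(n)$ and then, conditional on $N$, draws $(\tilde{M}_0, \tilde{M}_1, \ldots, \tilde{M}_k) \sim \mathrm{Mult}(N;\, (1-p, p_1, \ldots, p_k))$, the tail vector $(\tilde{M}_1, \ldots, \tilde{M}_k)$ is exactly distributed as independent Poissons with $\tilde{M}_i \sim \Poi(np_i)$. Hence $(\tilde{M}_1, \ldots, \tilde{M}_k) \stackrel{d}{=} (P_1, \ldots, P_k)$, and it suffices to compare the fixed-$n$ multinomial $M$ against this Poissonized version.

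Next I would show that $M$ and $P$ share the \emph{same conditional law given their total}. Put $S_M = \sum_i M_i$ and $S_P = \sum_i P_i$. For $M$, collapsing the dropped $M_0$-coordinate gives $S_M \sim \mathrm{Binom}(n,p)$, and conditioning on $S_M = s$ yields $(M_1,\ldots,M_k) \mid S_M = s \sim \Mult(s;\, \vec{p}/p)$, by a direct calculation with the multinomial pmf using the renormalized probabilities. For $P$, the sum of independent Poissons satisfies $S_P \sim \Poi(np)$, and the classical Poisson-thinning identity gives $(P_1,\ldots,P_k) \mid S_P = s \sim \Mult(s;\, \vec{p}/p)$ as well. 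The two conditional laws coincide.

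Given this shared conditional structure, a short factorization collapses the $k$-dimensional TV into a one-dimensional one. Writing $|m|=\sum m_i$ and using $\Pr[M = m] = \Pr[M = m \mid S_M = |m|]\cdot\Pr[S_M = |m|]$ together with the matching identity for $P$, the common conditional factor pulls out; summing first over $m$ with $|m|=s$ fixed collapses each slab to $|\Pr[S_M = s]-\Pr[S_P = s]|$, yielding
\[
d_{TV}(M,P) \;=\; d_{TV}\!\bigl(\mathrm{Binom}(n,p),\, \Poi(np)\bigr).
\]
I then invoke the classical Stein--Chen / Le Cam Poisson approximation, $d_{TV}(\mathrm{Binom}(n,p), \Poi(np)) \leq p\,(1-e^{-np}) = O(p)$, which implies the claimed $O(p\log n)$ bound; the extra $\log n$ factor in the statement appears to be slack in the specific Barbour~1988 Stein-method estimate rather than a genuine necessity.

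The main conceptual step is the Poissonization identity in Step~2: verifying that independent Poissons conditioned on their sum is multinomial with renormalized probabilities $\vec{p}/p$, and that this exactly matches the analogous conditioning of the fixed-$n$ multinomial. Once this symmetry is in place, the slab-by-slab factorization in Step~3 is a one-line computation and Step~4 is a textbook Poisson-approximation invocation.
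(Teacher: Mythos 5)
Your argument is correct, and it is a genuinely different route from the paper, which does not prove this statement at all but imports it as a black box from Barbour (1988), where it is obtained via the Stein--Chen method. Your conditioning argument is the right elementary alternative: the computation that $(M_1,\ldots,M_k)$ given $\sum_i M_i=s$ and $(P_1,\ldots,P_k)$ given $\sum_i P_i=s$ are both multinomial with $s$ trials and probabilities $p_i/p$ is exactly as you say, and since the two conditional laws coincide, the slab-by-slab factorization does give the identity $d_{TV}(M,P)=d_{TV}\bigl(\mathrm{Binom}(n,p),\Poi(np)\bigr)$ (the slabs with $s>n$ cause no trouble, since the common conditional simply multiplies $|\Pr[S_M=s]-\Pr[S_P=s]|$ there as well). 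Invoking the Barbour--Hall/Le Cam bound $p(1-e^{-np})$ then yields $O(p)$, which is \emph{stronger} than the $O(p\log n)$ stated and only helps where the paper applies the theorem (the paper's applications all carry an explicit $O(\log n)$ factor that your bound would let one drop); the $\log n$ in Barbour's formulation comes from the multivariate Stein factors in a more general dependent-indicator setting, so your observation that it is slack here is accurate. One small remark: your Step 1 (Poissonizing the number of trials) is never actually used — Steps 2--4 compare $M$ and $P$ directly through their common conditional structure — so it can be deleted without loss; everything needed is the pair of conditional identities plus the one-dimensional Binomial-versus-Poisson estimate.
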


\begin{proof}[Proof of Theorem \ref{thm:closeDisj}]
Consider some input vectors $x,y$,
of length $m=\tfrac{n^2}{t^2\log^3n}$, to the GapHamming problem from
above. Let
$\Delta=\beta(m)=\Omega(\sqrt{m})$, and
$\delta=\tfrac{1}{2}(\|x-y\|_1-m/2)\in \{0\}\cup
[\Delta/2,\Delta]$.
The case of $\delta=0$ will correspond to ``same'' case (i.e. $a=b$),
and $\delta \in [\Delta/2,\Delta]$ --- to ``far'' case (i.e. $\normo{a-b} \in [1/2,1]$).

Fix $d=n/10$ and $l= C\cdot t \cdot \log n$ (where $C$ is some constant that we shall fix later), which have the following meaning: each
distribution $a,b$ has half mass over $[d]$ items uniformly (called
{\em dense items}), and the other half on $[l]$ items uniformly (called
{\em large items}). When $a=b$, these are the same items, and when $a\neq
b$, the large items are the same while the dense items have supports with a large difference. In
particular, the dense items are supported on sets $S_A, S_B$
respectively, with $|S_A|=|S_B|=d$, and $S_A\cap S_B=d\cdot
\tfrac{\Delta-\delta}{\Delta}$; we hence also have that $|S_A\setminus
S_B|=d\cdot \tfrac{\delta}{\Delta}$.  

Now for $i\ge 0$, let $D(i) = \Pr[\Poi(t/2d)=i]$, i.e.,
probability that a dense number is sampled $i$ times. For simplicity,
we write $D(i,j)=D(i)\cdot D(j)$. Similarly we define
$L(i)=\Pr[\Poi(t/2l)=i]$ and $L(i,j)=L(i)\cdot L(j)$. We also set
$k=\Theta(\log n)$, which should be thought of as an upper bound on the count
of any fixed item (with high probability).

The algorithm constructs the occurrence vectors $A,B$ iteratively as
follows. We note that all random variables are chosen using shared randomness. Let $m_c=m/4-\Delta$.
\begin{enumerate}
\item
For each $i,j\in\{1,\ldots k\}$, and for each letter $c\in [n]$, we
generate $\Poi(\tfrac{d}{\Delta}\cdot D(i,j))$ copies of
letter $c$: Alice replaces $1$ with $i$ and Bob replaces $1$ with
$j$ (both leaving 0s intact);
\item
For each $i\in \{1,\ldots k\}$, generate $\Poi(d\cdot D(i,0))$ pairs
$(i,0)$, and similarly-distributed number of pairs $(0,i)$;
\item
For each $i,j\in \{1,\ldots k\}$, generate $\Poi(l\cdot
L(i,j)-m_c\cdot \tfrac{d}{\Delta}D(i,j))$ pairs $(i,j)$;
\item
For each $i\in \{1,\ldots k\}$, generate $\Poi(l\cdot
L(i,0)-\tfrac{m}{4}\cdot \tfrac{d}{\Delta}\sum_{j=1}^k D(i,j))$ pairs $(i,0)$, and
similarly-distributed number of pairs $(0,i)$.
\item
Fill in the required number of $(0,0)$ pairs so that $A,B$ have length
precisely $n$;
\item
Randomly permute the letters of $A,B$ (using shared randomness).
\end{enumerate}

\begin{claim}
All the Poisson random variables from above are properly defined---in
particular, they have positive argument.
\end{claim}
\begin{proof}
We only need to prove this for steps 3 and 4 as the other ones are
obvious. Indeed, for $i,j\ge 1$:
$$
l\cdot L(i,j)=t\log n \cdot (\Omega(1/\log n))^{i+j}=t/\log n\cdot(\Omega(1/\log n))^{i+j-2},
$$
whereas,
$$
m_c\cdot \tfrac{d}{\Delta}D(i,j)=O(\sqrt{m}\cdot
n\cdot (t/2d)^{i+j})\le \tfrac{n^2}{t\log^{1.5}n}\cdot O(t^2/n^2)\cdot
(O(t/n))^{i+j-2}\le \tfrac{t}{\log^{1.5}n}(O(t/n))^{i+j-2}.
$$
Thus $l\cdot L(i,j)-m_c\cdot \tfrac{d}{\Delta}D(i,j)\ge 0$ for all
$i,j\ge1$.

Similarly, for step 5, for $i\ge 1$, we have:
$$
l\cdot L(i,0)=\Omega(t\cdot (O(1/\log n))^{i-1}),
$$
whereas,
$$
m/4\cdot \tfrac{d}{\Delta}\sum_{j\ge 1}D(i,j)
\le
O(\sqrt{m}\cdot n\cdot \sum_{j\ge 1} (O(t/n))^{i+j})
\le
O(\tfrac{n^2}{t\log^{1.5}n}\cdot (O(t/n))^{i+1})
\le
O(\tfrac{t}{\log^{1.5}n}\cdot (O(t/n))^{i-1}).
$$
We again have $l\cdot L(i,0)-m/4\cdot \tfrac{d}{\Delta}\sum_{j\ge
  1}D(i,j)\ge 0$ as required.
\end{proof}

We now prove the core of the reduction: that the distribution of
$(A,B)$ is close to occurrence vectors of $\Poi(t)$ i.i.d. samples from
$(a,b)$, such that $a=b$ if $\|x-y\|_1=m/2$, and similarly,
$\normo{a-b} \geq 1/2$ when $\|x-y\|_1\ge m/2+\beta$.  We will prove
that, for distribution of (co-)occurrences of large items is nearly
same in the two instances; and similarly for the dense items.

We partition the coordinates of $(x,y)$ in the following four groups, each
corresponding to either occurrences of dense or large items:
\begin{itemize}
\item
large: $m_c=m/4-\Delta$ coordinates for each of $(1,1)$ and $(0,0)$
coordinate pairs (i.e., coordinates $i\in[m]$ where $(x_i,y_i)=(1,1)$ or
$(x_i,y_i)=(0,0)$);
\item
large: $m/4$ coordinates for each of $(1,0)$ and $(0,1)$
pairs;
\item
dense: $\Delta-\delta$ coordinates for each of $(1,1)$ and $(0,0)$
pairs;
\item
dense: $\delta$ coordinates for each of $(1,0)$ and $(0,1)$
pairs. 
\end{itemize}
Note that this accounts for all coordinates for a pair $x,y$ such that
$\|x-y\|_1=m/2+\delta$.  \\

We now analyze the distribution of occurrences/collisions for each of
large and dense items in the generated vectors $(A,B)$, and show that,
for each of large/dense items, the distribution is same as if these
are occurrences of items coming from distributions $a,b$ defined
above. In particular, for, say, large items, we consider the
distribution of counts $c_{i,j}$, where $i+j>0$, where $c_{i,j}$ is
the number of large items which where sampled $i$ times on Alice's
side and $j$ times on the Bob's side; we will refer to them as $(i,j)$
occurrence pairs. We then show that, the distribution of
$(c_{i,j})_{i+j>0}$ in $(A,B)$ is Poisson-distributed, whereas, if it
were the occurrence pairs vector of samples drawn from $a,b$, then the
distribution is a Multinomial. We then use Theorem~\ref{thm:barbour88}
to conclude that the two distributions are statistically close. Note
that the identify of items is not important, as the items are randomly
permuted inside the domain, for both $A,B$ as well as in distributions $a,b$.

{\bf Large items in $(A,B)$.}  We analyze the large items first. We use the fact
that sum of Poisson distributions is again Poisson. For any
$i,j\in\{1,\ldots k\}$, the number of large $(i,j)$ pairs is
distributed as: $\Poi(m_c\cdot \tfrac{d}{\Delta}D(i,j))$ (from the
first step: there are $m_c$ coordinate pairs $(1,1)$), plus
$\Poi(l\cdot L(i,j)-m_c\cdot \tfrac{d}{\Delta}\cdot D(i,j))$ (from the
third step).  Thus the number of large $(i,j)$ pairs is distributed as
$\Poi(l\cdot L(i,j))$.

Similarly, say, considering occurrence pair $(i,0)$ (a symmetric argument applies for $(0,i)$), the number of
large $(i,0)$ pairs is distributed as $\Poi(m/4\cdot
\tfrac{d}{\Delta}\sum_{j=1}^k D(i,j))$ (from the first step: there are
$m/4$ coordinate pairs $(1,0)$), plus $\Poi(l\cdot L(i,0)-m/4\cdot
\tfrac{d}{\Delta}\sum_{j=1}^k D(i,j))$ (from step 4). This again
amounts to $\Poi(l\cdot L(i,0))$. 

{\bf Large items in $(a,b)$.}
Let us now contrast these counts to the one would get from the ``real
counts'' of the large items in the distribution $(a,b)$ defined as
above. The latter is a Multinomial $M^L=\Mult(l;\vec{p}_L)$ where
$\vec{p}_L=(L(i,j))_{i,j\ge 0; i+j>0}=(L(1,0), L(0,1), L(1,1), L(2,0), L(2,1), ...)$. We now can
use Theorem~\ref{thm:barbour88}, to conclude that the the TV-distance
between $M^L$ and the distribution $\Poi(l\cdot \vec{p}_L)$ is bounded
by: $O(\log n)\cdot \sum_{i,j\ge 0; i+j>0} L(i,j)\le O(\log n)\cdot \sum_{i,j\ge 0;
  i+j>0}(t/2l)^{i+j}\le O(1)/C \leq 0.01$ (by choosing $C$ to be a large enough constant). 

{\bf Dense items in $(A,B)$.}
Let's analyze the distribution of dense items now. For $i,j\ge 1$, the
distribution of the number of $(i,j)$ dense occurrence pairs is
$\Poi((\Delta-\delta)\cdot \tfrac{d}{\Delta}D(i,j))$ as there are
$\Delta-\delta$ coordinate pairs $(1,1)$.  Now consider the case of
$(i,0)$ occurrence pairs of dense items, for $i\in[k]$. Their count is
distributed as: $\Poi(\delta\cdot
\tfrac{d}{\Delta}\sum_{j=1}^kD(i,j))$ (from the first step: there
are $\delta$ coordinate pairs $(1,0)$), plus $\Poi(d\cdot D(i,0))$
(from the second step). This amounts to:
$$
\Poi(d\cdot \tfrac{\delta}{\Delta}\sum_{j=1}^k D(i,j)+d\cdot D(i,0)).
$$

{\bf Dense items in $(a,b)$.}
Again, let's compare these counts to the ``real counts'' that would
occur for the distributions $(a,b)$. The latter distribution can be
thought of as three distributions: corresponding to items in $S_A\cap
S_B$, to items in $S_A\setminus S_B$, and items in $S_B\setminus S_A$.
The occurrence counts for items in $S_A\cap S_B$ are distributed as a
Multinomial $M^{D,i}$ with parameters $|S_A\cap S_B|=d\cdot
\tfrac{\Delta-\delta}{\Delta}$  and probability vector
$\vec{p}_D=(D(i,j))_{i,j\ge 0; i+j>0}$. By
Theorem~\ref{thm:barbour88}, the TV-distance between
$M^{D,i}$ and the distribution $\Poi(d\cdot
\tfrac{\Delta-\delta}{\Delta}\cdot \vec{p}_D)$  is bounded by:
$O(\log n)\cdot \sum_{i,j\ge 0; i+j>0} D(i,j)\le O(\log n)\cdot \sum_{i,j\ge 0;
  i+j>0}(t/2d)^{i+j}\le O(1/\log n)$.

Note that the counts for $i,j\ge 1$ correspond to (1,1) pairs generating dense items in $A,B$
above. It remains to analyze the case of $j=0$ or $i=0$. Wlog,
consider $j=0$ and $i>0$ (ie, items that are only in Alice's
distribution). For the distributions $a,b$, the occurrence pairs
$(i,0)$ are distributed as a Multinomial $M^{D,a}=\Mult(|S_A\setminus
S_B|;\vec{p}_{DA})$, where $\vec{p}_{DA}=(D(i))_{i\ge 1}$. By
Theorem~\ref{thm:barbour88}, the TV-distance between $M^{D,a}$ and
$\Poi(d\cdot \tfrac{\delta}{\Delta}\cdot \vec{p}_{DA})$ is at most
$O(\log n)\cdot\sum_{i\ge 1} D(i)\le O(1/\log n)$.
Summing up with the above, and focusing on $(i,0)$ pairs, their
distribution in $(a,b)$ is at small distance to the distribution where
each pair $(i,0)$ is distributed as:
$$
\Poi(d\cdot \tfrac{\Delta-\delta}{\Delta}\cdot D(i,0)+d\cdot
\tfrac{\delta}{\Delta}\cdot D(i))=
\Poi(d\tfrac{\Delta-\delta}{\Delta}\cdot D(i,0)+d\tfrac{\delta}{\Delta}\cdot (D(i,0)+\sum_{j=1}^k D(i,j)))=
\Poi(d\cdot \tfrac{\delta}{\Delta}\cdot \sum_{j=1}^k D(i,j)+d\cdot D(i,0)).
$$

Thus we conclude that the distribution of occurrence pairs of the dense
items in vectors $A,B$ matches, up to a small total variation distance,
the distribution for $a,b$ as described above. 

This completes the proof of the lower bound.
\end{proof}

\section{Closeness Testing: Secure Communication-Efficient Protocols}
\label{sec:secureCT}

The protocol from Section~\ref{sec_priv} is clearly not secure (for example, Bob
sends Alice his multi-set $S$, which reveals information about his
input). In this section we show 
how to modify the protocol to make it secure, relying on standard
cryptographic assumptions. 
Before proceeding, we provide some preliminaries, followed by our 
general definition for secure computation of distribution testing.

\subsection{Cryptographic Tools and Preliminaries} \label{subs_crypt_tools}
We briefly review cryptographic tools and assumptions that we use. We
keep the discussion largely informal, and focus on the aspects most
relevant to our results.  We refer the reader to,
e.g.,~\cite{Gol01I,Gol04II} for more details and formal definitions of
standard primitives.

\medskip
\paragraph{PRG, OT, and Our Assumptions.}
A pseudorandom generator (PRG) is a deterministic function $G$ that
stretches its input length, such that if the input is selected
uniformly at random, the output is indistinguishable from uniform for
an appropriate class of distinguishers. 

In particular, the first cryptographic assumption that our secure
protocols rely on, is that 
there exists a PRG $G$
that can stretch $\polylog(m)$ bits to $m$ bits, and fools 
$\poly(m)$-sized circuits.
By default, this is what we mean when we refer to ``PRG'' in
the rest of the paper.

A 1-out-of-$m$ Oblivious Transfer (OT) protocol,
allows one party holding input $i\in[m]$  and another party holding
$s\in\{0,1\}^m$, to engage in a protocol where the first party obtains
as output the bit $s_i$, the other party obtains no output, and no
further information about $s$ or $i$ is leaked to the parties.

The second cryptographic assumption that our secure protocols rely on,
is that there exists a 1-out-of-$m$ OT protocol with communication
complexity $\polylog(m)$.
By default, this is what we mean when we refer to ``OT'' in the rest
of the paper.

We note that it is easy to extend a 1-out-of-$m$ OT protocol operating
on bits (as we defined above) to one that operates on words of size
$r$ (namely, with $s\in\{\{0,1\}^r\}^m$), 
paying a communication overhead that is linear in $r$.
Hence, under our OT assumption, there is such a protocol with
communication complexity that is polylogarithmic in $m$ and linear in
$r$.

PRG and OT are both standard cryptographic primitives, that can be
instantiated from various concrete number theoretic assumptions. The fact that
we require super polynomial stretch for the PRG means that we need to
assume subexponential hardness (and this is also the reason that we
don't use the OT assumption to generically obtain PRG, as we only need
OT with polynomial hardness). 

We note that if we wish to weaken the assumptions, we may assume 
polynomial-stretch PRG (from input of size $m^\delta$ to output of
size $m$ for a constant $\delta$) and sublinear-communication OT
(1-out-of-$m$ OT with communication complexity $m^\delta$).
Under these weaker assumptions, our secure protocols may incur
higher communication complexity, but would
still provide meaningful results (and remain secure).
Specifically, under a weaker PRG assumption, our protocols would
remain sub-linear always and would in fact keep the same complexity
for part of the trade-off range. Under a weaker OT assumption, our
protocols would remain sub-linear at least for some range of $\delta$
and $t$ (the number of samples).   

\medskip
\paragraph{Secure Computation.} 
Intuitively, secure computation allows two or more parties to evaluate some
function of their inputs, such that no additional information is
revealed to any party (or group of parties) beyond what follows from
their own inputs and outputs.

\smallskip\noindent
{\sc Defining Security: Simulation Paradigm. } 
Secure computation has been studied in many different settings.
The idea underlying the security definition in all these
settings, is trying to enforce that whatever the adversary can do in
the real-world, can also be achieved in an ideal world, where the
parties simply give their input to a trusted party, who hands them the
output.  Defining this formally is complex and various issues arise in
different settings. 

In this paper we focus on the simplest setting of a semi-honest (or ``honest-but-curious'') adversary, where the parties follow the
protocol faithfully, but try to use their transcripts to glean more 
information than intended.
Our  defintions and protocols
can be adapted to malicious adversary using standard techniques
\cite{GMW86,DBLP:journals/corr/cs-CR-0109011,Gol04II}.

In our setting, if semi-honest Alice and Bob want to compute a function
$f: \{0,1\}^* \times \{0,1\}^* \to \{0,1\}$, the definition of
security boils down to requiring the existence 
of an efficient simulator for each party, which can simulate the view
of the party (their input and transcript), from just the party's
input, random input, and output.

\smallskip\noindent{\sc Modular Composition. }
At times it is convenient to design a protocol in a modular way, where
the computation of a function $f$ may invoke a call to another
function $g$.  We will denote by $\Pi_f^g$ a protocol computing $f$,
with oracle gates to the function $g$. 
A composition theorem
\cite{Can00,Gol04II}  
proves that if the protocol $\Pi_f^g$ is a secure protocol for $f$,
and $\Pi_g$ is a secure protocol for $g$, then taking $\Pi^g_f$ and
replacing the oracle calls to $g$ with an execution of $\Pi_g$
results in a secure protocol $\Pi_f$  for $f$ (which doesn't make any oracle
calls). 

\smallskip\noindent{\sc Feasibility Results. }
Starting with \cite{Yao82b}, a large body of work has shown that any
function that can be computed, can be computed securely in various
settings.
In particular, two parties holding inputs $x$ and $y$ can securely
evaluate  any circuit $C(x,y)$ with communication $O_k(|C|)$
(where $k$ is the security parameter), under mild cryptographic
assumptions. %
Note that this communication complexity is at least linear (since the
circuit size is at least as large as the size of input and output),
while we will need sublinear communication protocols.
Other general techniques for secure function evaluation, where
communication does {\em not} depend on the circuit size, include
fully-homomorphic encryption (FHE) (\cite{Gentry09} and follow up work, see
\cite{B18} for a survey), and homomorphic secret sharing (HSS)
\cite{BGI16,BGI17}.  However, these transformations still require
communication that is linear in the length of input and output
(prohibitive in our context). They also 
require stronger assumptions, and (for FHE) require a high
computational overhead, or (for HSS) only apply for restricted classes
of circuits. 
Naor and Nissim \cite{DBLP:journals/corr/cs-CR-0109011}
showed a general way to transform insecure protocols to secure ones,
while preserving communication.  However, in general this
may introduce an exponential blowup in computation, which again will
not be sufficient for our needs.

While a naive application of these generic methods doesn't directly
work for us, we will make use of another result by Naor and Nissim,
adapting secure two party computation techniques to allow for
communication-efficient protocols whenever the computation can be
expressed as a 
(small) circuits with (large)  ROM.  

\smallskip\noindent
    {\sc Secure Circuit with ROM.}
    Consider the setting where each party has a
    table $R\in(\{0,1\}^r)^m$ (that is, $m$ entries of size $r$ each).
    Now consider a circuit $C$ that, in addition to usual gates, has
    lookup gates which allow to access any of the parties' ROM tables
    (on input $i\in[m]$ the gate will return the $r$-bit record at
    the requested party's $R(i)$).  
\begin{theorem}\label{thm_nn01}
\cite{DBLP:journals/corr/cs-CR-0109011}
If $C$ is a circuit with ROM, then it can be securely computed
with $\tilde{O}(|C|\cdot T(r, m))$ communication, where $T(r, m)$ is the communication of 1-out-of-$m$ OT on words of size $r$.
\end{theorem}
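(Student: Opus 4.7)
The plan is to extend Yao's garbled-circuit construction so that standard gates pay only the usual constant-per-gate overhead, while ROM-lookup gates are handled by a dedicated sub-protocol based on 1-out-of-$m$ OT. First, the garbler $P_1$ processes the circuit in the usual way: to every wire $w$ it assigns two random labels $(\kappa_w^0, \kappa_w^1)$ of length equal to the security parameter, and to every Boolean gate it associates a garbled truth table of four ciphertexts, each encrypting the correct output label under the corresponding pair of input labels. The evaluator $P_2$ obtains the labels for its own input wires via 1-out-of-2 OT, for $P_1$'s input wires receives the appropriate labels directly, and then topologically evaluates the garbled circuit. This handles the $O(|C|)$ Boolean gates with $\tilde O_k(|C|)$ communication.

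For a ROM gate that reads the $i$-th entry of a table $R \in (\{0,1\}^r)^m$ held by a party $P$, a naive implementation as a circuit of $O(m)$ gates would blow up communication by a factor of $m$ and defeat the purpose of the construction. Instead, the party holding $R$ precomputes, for every candidate index $j \in [m]$, the tuple of output labels $\Lambda_j = (\kappa_{w_1}^{R(j)_1},\ldots,\kappa_{w_r}^{R(j)_r})$ of total length $O(r\cdot k)$. Retrieving $\Lambda_i$ for the correct index $i$ without revealing $i$ or the other $\Lambda_j$'s is exactly a 1-out-of-$m$ OT on records of size $O(r\cdot k)$, which costs $\tilde O(T(r,m))$ communication by assumption. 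Summing over at most $|C|$ such gates yields the claimed $\tilde O(|C|\cdot T(r,m))$ communication bound.

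Security is argued by the standard simulation paradigm, constructing, for each corrupted (semi-honest) party, a simulator that uses only that party's input, output, and random tape. The simulator replaces genuine garbled tables with tables of encryptions of random labels (indistinguishable by semantic security of the underlying encryption), replaces OT transcripts with simulated transcripts (indistinguishable by OT security), and invokes the modular composition theorem to combine these. The main technical obstacle is the interface between Boolean gates and ROM gates: the index $i$ into the ROM table enters as a collection of garbled wire labels, but the OT sub-protocol requires a plaintext selection index. Bridging this --- for instance, by having the garbler prepare, alongside $\Lambda_j$, a translation table from the relevant garbled-label representations of $j$ to the corresponding OT-selection input, or equivalently by computing the translation inside a small garbled sub-circuit of size $\polylog(m)$ --- is what must be argued carefully so that no partial information about $i$ or about the unread entries of $R$ leaks through the composition, and it is where the bulk of the formal proof resides.
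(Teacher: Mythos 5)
First, a point of reference: the paper itself does not prove Theorem~\ref{thm_nn01} --- it is imported from Naor--Nissim \cite{DBLP:journals/corr/cs-CR-0109011} as a black box --- so the comparison is against that work's construction. At a high level your sketch matches it: standard gates cost $\tilde{O}_k(1)$ each, and every lookup gate is reduced to a single 1-out-of-$m$ OT on $r$-bit words (Naor--Nissim phrase this as ``indirect indexing'' over secret-shared wire values rather than inside a Yao evaluation, but the communication accounting is the same).

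However, as written your proposal has two genuine gaps. (i) The interface you flag --- converting the garbled labels encoding the index $i$ into an OT selection --- is not merely ``where the bulk of the formal proof resides''; one of your two suggested fixes fails. A translation table from garbled-label representations of $j\in[m]$ to OT-selection inputs has $m$ entries, so materializing or transferring it costs $\Omega(m)$ and destroys the sublinear bound, while letting the evaluator use $i$ in the clear leaks an intermediate value. The fix that works is the masked-index (point-and-permute / shared-index) one: the garbler picks a random shift $\rho$, a small garbled sub-circuit of size $O(\log m)$ reveals $\sigma = i \oplus \rho$ to the evaluator (uniform, hence simulatable), the table holder permutes its table of label tuples by $\rho$, and the evaluator runs the 1-out-of-$m$ OT with selector $\sigma$. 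Your proof should commit to this rather than leave the choice open. (ii) Your ROM-gate treatment only covers tables held by the garbler: the tuple $\Lambda_j=(\kappa_{w_1}^{R(j)_1},\ldots,\kappa_{w_r}^{R(j)_r})$ can be precomputed only by the party who knows the wire labels. The theorem as used in this paper lets \emph{both} parties hold ROM, so you must also handle an evaluator-held table, e.g.\ the evaluator rotates its table by $\sigma$, XOR-masks the entries with a fresh mask, the garbler retrieves its share by an OT in the reverse direction with selector $\rho$, and the $r$ bits of the looked-up word are re-injected into the garbled circuit via $r$ 1-out-of-2 OTs on label pairs --- or one sidesteps the asymmetry entirely, as Naor--Nissim do, by evaluating on secret-shared wire values. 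With these two repairs the count $\tilde{O}(|C|\cdot T(r,m))$ and the simulation argument go through essentially as you outline.
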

Thus, under our OT assumption,
a circuit with ROM can be securely evaluated with communication
complexity that is linear in $|C|$ and $|r|$, but polylogarithmic in
$m$. 
We will rely on this theorem in both of our secure distribution testing
constructions. Note that the main remaining challenge is to design
the protocol that can be expressed in a form where this theorem can be
applied.

\smallskip\noindent
{\sc Sampling an Orthonormal Matrix.}
We will use the following fact from
\cite{Indyk:2006:PPA:2180286.2180304}, proven in the context of
providing a secure approximation of the $\ell_2$ distance.

\begin{theorem}\cite{Indyk:2006:PPA:2180286.2180304}\label{randmat}
	Suppose we sample a random orthonormal $n\times n$ matrix $R$ (from a
        distribution defined by the Haar measure) but instead generate
        our randomness using a PRG $G$, rounding its entries to the
        nearest multiple of $2^{\Theta(K)}$, where $K=\Theta(k)$. Then we have for all $x \in [t]^n$:
$$\Pr\left[\left(1-2^{-K}\right)\cdot\normt{x}^2 \leq \normt{Rx} \text{ and } \forall_i (Rx)_i^2 < \frac{\normt{x}^2}{n}K\right] > 1 - neg(k).$$
\end{theorem}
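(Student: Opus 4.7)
The plan is to decouple the three sources of approximation in the theorem: the randomness (true Haar versus PRG output), the entry-wise rounding, and the concentration event itself. First I would analyze an idealized matrix $R^\star$ sampled from the genuine Haar measure without rounding; then I would transfer the conclusion to a PRG-sampled matrix via indistinguishability; finally I would absorb the rounding error into the $2^{-K}$ slack in the statement.

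For the ideal Haar case, $R^\star$ is orthogonal so $\|R^\star x\|_2 = \|x\|_2$ exactly, immediately giving the norm-preservation inequality with no loss. For the per-coordinate bound, each $(R^\star x)_i / \|x\|_2$ is distributed as a single coordinate of a uniformly random unit vector in $\mathbb{R}^n$, so by standard spherical concentration $\Pr[((R^\star x)_i)^2 > K \|x\|_2^2 / n] \leq e^{-\Omega(K)}$. Taking a union bound over $i \in [n]$, with $K = \Theta(k)$ and a sufficiently large implicit constant, gives that the bad event has probability $n \cdot e^{-\Omega(K)} = \mathrm{neg}(k)$.

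To pass to the PRG-generated matrix, I would fix a concrete and efficient Haar sampler (e.g., Gram--Schmidt applied to $n$ discretized i.i.d.\ Gaussian vectors), which can be evaluated by a $\poly(n,K)$-size circuit using $\poly(n,K)$ uniform bits. For each fixed $x \in [t]^n$, the predicate ``both desired inequalities hold'' is then a $\poly(n,K)$-size circuit in those bits, so our PRG assumption shifts its acceptance probability by at most $\mathrm{neg}(k)$ when the uniform seed is replaced by a PRG of $\polylog$ seed length. For the rounding step, I would write $R = R^\star + E$ with $\|E\|_{\max} \leq 2^{-cK}$ for a sufficiently large constant $c$, so that $\|E x\|_2 \leq n \cdot 2^{-cK} \cdot \|x\|_2$ and $\|E x\|_\infty \leq \|E x\|_2$. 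Choosing $c$ large enough so that these errors sit strictly below the $2^{-K}\|x\|_2^2$ and $K\|x\|_2^2/n$ slacks, the triangle inequality transports both conclusions from $R^\star x$ to $Rx$.

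The main obstacle I expect is picking the rounding constant $c$ large enough to simultaneously absorb the polynomial dependence on $n$ and on the entrywise bound $t$ of $x$ into the two slacks, while keeping the per-entry bit-length at $\Theta(K) = \Theta(k)$ as promised by the theorem. A secondary care point is committing to a concrete and simple Haar sampling procedure (such as Gram--Schmidt on discretized Gaussians) and checking that its circuit realization lies within the $\poly(n,k)$-size class fooled by our PRG assumption; this is what makes the reduction from the ideal Haar probability bound to the PRG bound go through cleanly.
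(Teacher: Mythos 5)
The paper itself does not prove Theorem~\ref{randmat}; it is imported directly from \cite{Indyk:2006:PPA:2180286.2180304}, and your sketch reconstructs essentially the same argument used there: exact norm preservation plus single-coordinate spherical concentration for the ideal Haar matrix, a nonuniform PRG-indistinguishability step with $x$ hardwired into the distinguisher circuit, and the rounding error absorbed into the $2^{-K}$ and $K\normt{x}^2/n$ slacks. Your plan is sound; the only things worth making explicit are that the Haar sampler together with the two tests must indeed be realized as a $\poly(n,k)$-size circuit (your Gram--Schmidt-on-discretized-Gaussians choice suffices), and that the union bound $n\cdot e^{-\Omega(K)}$ is negligible only because $K=\Theta(k)$ dominates $\log n$ in the parameter ranges where the theorem is invoked.
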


We note that the only place where we use a PRG in our constructions,
is in order to be able to apply this theorem (in order to be able to
obtain shared randomness with low communication, our parties will
share a seed of a PRG that they will expand and use to sample).

\subsection{Defining Secure Computation for Distribution Testing}

Defining security for a distribution testing protocol requires some
care, due to two new features that do not come up in the
standard setting of secure computation of a function: first, this is
``testing'' and not ``computing'', and, second, the function of interest is
defined with respect to distributions, but the inputs that the parties
use in the computation are samples.
Before providing our formal definition, we %
discuss these issues and our choices.

A testing problem can be described as a partial boolean function $g$,
with the goal of computing $g(x)$ whenever $g$ is defined on the given
input $x$ (e.g, when the input consists of two distributions that are
either identical or $\epsilon$-far). 
If the input $x$ is such that $g(x)$ is not defined (we will refer to
this as an {\em input in the gray zone\/}), the property testing
definition (and literature) does not care about whether the output is
0 or 1. Indeed, this flexibility of having a gray zone with no
correctness requirement imposed on it is precisely what allows for more
efficient testing algorithms.

In contrast, for security purposes, we must care about what the
protocol outputs when the input is in the gray zone, as this may
reveal information about the inputs.  Specifically, standard secure
computation notions require that each party learns nothing beyond what
follows from their own input and output.  For a protocol testing some
property, when the output $g(x)$ is defined, we can (and will) follow
this paradigm.  When the input is in the gray zone and $g$ is not
defined, it is tempting to require that the protocol reveals this
fact and no other  information about the input.  However, it's easy
to see that such a protocol in fact is a secure computation of a
complete function (where for each input $x$ the protocol outputs
whether $g(x)=0$, $g(x)=1$, or $g(x)$ is undefined).
We could instead require that for inputs in the gray zone, the output
of the protocol is some distribution over $\{0,1\}$, independent of
the input.
But such a protocol again is in fact computing some
complete function of the input, which intuitively defeats the point of
using testing (versus computing) to gain efficiency. 
This intuition suggests that secure testing of any property (namely,
secure computation of a partial function) cannot be achieved with
better efficiency than secure computation of a (very related) complete
function.
We do not attempt to formalize (or refute) this intuition here. 

Instead, %
we show that, for our more special case of secure {\em
  distribution\/} testing, some information must be leaked in the gray
zone, {\em even if one disregards efficiency considerations}. Indeed, consider a
closeness testing setting where Alice and Bob inputs are $t$ samples
from $a$ and $b$ (respectively) over $[n]$ which are either
$(a,b)=(a_0,a_0)$, or $(a,b)=(a_0,b_\eps)$ for some distributions
$a_0,b_\eps$ with $\normo{a_0-b_\eps}=\epsilon$. Any correct protocols
must have different outputs on such instances with, say, probability
$0.99$. Now define $\hat{a} = \delta \cdot a'
+ (1-\delta)\cdot a$ for $\delta=0.0001/t$, where $a'$ is $a$ defined on
a new, unique set of letters;
Clearly, the distribution of $t$-sample inputs from $(a,b)$ and
$(\hat{a},\hat{b})$ are the same (for each instance), except with
$0.0001$ probability, and therefore the distribution over Alice and
Bob views (and in particular, the output) in the latter case must be
statistically close to the first one, and hence differ with at least
$0.98$ probability. However $(\hat{a},\hat{b})$ in both $(a_0,a_0)$
and $(a_0,b_\eps)$ instances are in the gray zone. It is therefore not
possible to simulate Alice's view of $\Pi$ in any statistically
significant manner without additional information (besides the mere
fact the distance is in the grey zone).

Following the above, we define a protocol to be a secure
computation of a testing task (or partial function) $g(x)$,  if it is  
a secure computation of some (complete, possibly randomized) function
$f(x)$, where 
$f(x)=g(x)$ whenever $g(x)$ is defined; when $g(x)$ is not defined,
$f(x)$ is some function of the input $x$.  
Thus, each party learns nothing beyond what follows from
their own input and the output $f(x)$ (which means there's some
leakage $f(x)$ in the gray zone, but no leakage beyond the testing
output when the testing promise holds). 
We will require $f$ to be a boolean function, so that the leakage in
the gray zone is at most one bit of information about the input (we
will also mention other possible generalizations of this
definition).
Finally, we note that for distribution testing, the function $g$ is
defined on the distributions, but the actual inputs of the parties are
samples. 
This is not a
major distinction for correctness, as with sufficiently many samples
we can typically obtain the correct result
with high probability.  However, for  security,
the leakage $f$ necessarily applies to the  samples (our only access
to the distributions), and not the distributions themselves. 

We are now ready to provide our definition of secure computation of
distribution testing.

\begin{definition} \label{def-secdisttest}
Let $D$ be a set of input distributions over $\bigtimes_{i=1}^d[n_i]$, and 
let $g: D \to \{0,1\}$ be a partial boolean function, defined on a
subset $P\subseteq D$ (with $g(p)=\bot$ whenever $p\in D\setminus
P$).

Let $\pi$ be a $d$-party protocol, and let $k$ be a security
parameter. We say that $\pi$ is a {\em $t$-sample secure distribution testing protocol} (for
the testing task defined by $g$), if
there exists a boolean function $f: \{\bigtimes_{i=1}^d[n_i]\}^t \to \{0,1\}$
such that the following holds: 

\paragraph{Correctness:}
for any $p\in P$,
$$\Pr_{\zeta_1\ldots \zeta_t \sim_\text{i.i.d } p}
[f(\zeta)=g(p)] = 1-neg(k) $$

\paragraph{Security:}
For any $\zeta \in \{\bigtimes_{i=1}^d[n_i]\}^t$,
if we give each player $i \in [d]$ the input $\left(1^k, \zeta_1(i),\ldots,\zeta_t(i)\right)$, then protocol $\pi$ is a
secure computation of the function $f(\zeta)$.

\end{definition}

We note that the security condition can be instantiated with any
standard 
secure computation notion.  In this paper we focus 
on the semi-honest model, although malicious security can be obtained
by a standard transformation with low communication overhead. 

\paragraph{More General Variants of Secure Distribution Testing.}
The definition can be naturally extended in various ways.
We may allow non-boolean $f$ (more bits of leakage), relax the
requirement that $f=g$ whenever $g$ is defined (allowing additional
leakage even if the promise holds), or impose restrictions on $f$
based on which leakage is deemed more or less
reasonable.\footnote{Such restrictions would likely be
  application-dependent and subjective, as a general theory comparing
  which leakage functions are qualitatively ``better'' is an open
  research area in cryptography.}   
We may also generalize the scope of distribution testing tasks modeled
by the definition (e.g., we may allow one party to get more samples
than the other, or give one party a complete description of a
distribution rather than samples).
For simplicity and ease of exposition, we do not develop
these extensions here, and stick with the simpler definition above.

\subsection{Secure Closeness Testing Protocol}

We now describe a secure distribution testing protocol that achieves 
the same communication complexity as the insecure protocol from
Section~\ref{sec_priv}, up to poly-logarithmic factors. Specifically, we show the following theorem:  

\begin{theorem}[Closeness, Secure]\label{priv_cl_tst}
Fix a security
parameter $k>1$. Fix $n>1, \eps\in (0,2)$, and let $t$ be such that $t\ge C \cdot k \cdot
\max\left(n^{2/3}\cdot \epsilon^{-4/3}, \sqrt{n}\cdot
\epsilon^{-2}\right)$ for some (universal) constant $C>0$. Then, assuming PRG and OT, there exists a {\em secure
  distribution testing protocol} for $\CT2p_{n,t,\epsilon}$ which uses
$\tilde{O}_k\left(\frac{n^2}{t^2\epsilon^4} + 1\right)$
communication.
\end{theorem}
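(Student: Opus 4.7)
The plan is to follow the template of the insecure protocol from Theorem~\ref{thm_2pct}, replacing each step with a secure counterpart that composes into an overall protocol matching the claimed communication bound. The backbone is Theorem~\ref{thm_nn01} (secure circuits with ROM): I will express the testing computation as a small circuit of size $\tilde{O}(n^2/(t^2\eps^4))$ that makes lookups into polynomial-size ROM tables held by each party, so that Theorem~\ref{thm_nn01} yields secure evaluation with only a $\polylog(k)$ overhead per gate.

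First, to avoid having Bob send his portion of the split set (which would reveal his samples), I would have Alice and Bob each draw $\Poi(m/2)$ samples from their own distributions with $m = \Theta(n^2/(t^2\eps^4))$, and let $S = S_A \cup S_B$ be the multiset union. By Lemma~\ref{split_lemma}(ii), $\E[\normt{a_{S_A}}^2] \le 2/m$, and by the monotonicity in part~(i), $\normt{a_S}^2 \le \normt{a_{S_A}}^2 = O(1/m)$ with high probability; the symmetric bound applies to $b$. Hence Lemma~\ref{app_dist_test} applies with $U = O(t\eps^2/n)$ and $\alpha = \Omega(U)$. The labels of the alphabet $[n+|S|]$ depend on the samples, so instead of transmitting them in the clear, each party loads its own contribution (with labels) into its ROM, and lookups happen through OT from inside the secure circuit.

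Second, for the $\ell_2$ approximation I use the pseudorandom Haar-like projection of Theorem~\ref{randmat}, seeded with a $\polylog(k)$-bit PRG key shared by the parties. Each party locally applies $R$ to its split occurrence vector to obtain $D$-dimensional sketches, where $D = \Theta(1/\alpha^2) = \tilde{\Theta}(n^2/(t^2\eps^4))$. By Theorem~\ref{randmat}, each entry of $RA_S$ and $RB_S$ has magnitude at most $O(\normt{A_S}/\sqrt{n+|S|}\cdot\sqrt{K})$ with probability $1-\mathrm{neg}(k)$, and therefore fits in $O(\log t + k)$ bits; the two compact sketches are loaded into ROM. A secure circuit of size $\tilde{O}(D)$ then computes $\Delta=\normt{RA_S-RB_S}^2$ (setting $K=\Theta(k)$ so that the $(1\pm 2^{-K})$ isometry is absorbed into $\alpha$), performs the factor-$4$ norm check on $\normt{A_S}, \normt{B_S}$, compares $\Delta$ to the threshold $\tau$ from Lemma~\ref{app_dist_test}, and outputs a single bit. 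To keep the adjustments induced by the data-dependent split set accessible to the small circuit, Alice and Bob preload, for every letter in their own contribution, the corresponding correction entries in ROM, so that each such adjustment is a $\polylog$-sized lookup rather than a linear scan. Applying Theorem~\ref{thm_nn01} with word size $r=O(\log t+k)$ and table size $m=\poly(n)$ gives total communication $\tilde{O}(D\cdot\polylog(k)) = \tilde{O}_k(n^2/(t^2\eps^4))$.

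The main obstacle is verifying that the data-dependent alphabet and the two intermediate comparisons leak no more than the single bit permitted by Definition~\ref{def-secdisttest}. I address this by performing all post-projection computation inside a single secure circuit with ROM, so that the only quantity exiting the secure computation is the final output bit. The simulator for each party is obtained by composing the PRG simulator, the OT simulator, and the circuit-with-ROM simulator of Theorem~\ref{thm_nn01} in the semi-honest model, yielding a view statistically indistinguishable from the real one given only that party's input and the output. The leakage function $f$ of Definition~\ref{def-secdisttest} then equals $g$ on the promise (by correctness of Theorem~\ref{thm_2pct} lifted through the $\mathrm{neg}(k)$-error of the pseudorandom projection) and otherwise equals the single bit output by the $\Delta$-vs-$\tau$ comparator, hence boolean as required.
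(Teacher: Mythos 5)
There is a genuine gap at the heart of your construction: you assert that ``each party locally applies $R$ to its split occurrence vector to obtain $D$-dimensional sketches,'' but neither party can compute its split occurrence vector. The split vectors $A_S,B_S$ are defined over the alphabet $[n+|S|]$ induced by the \emph{combined} multiset $S=S_A\uplus S_B$, and recasting Alice's samples as draws from $a_S$ requires knowing the multiplicity of each letter in $S$, i.e.\ knowing Bob's contribution $S_B$ --- which is exactly the information that cannot be exchanged (that is why the insecure protocol, where Bob ships $S$, is insecure). Your fallback of ``preloading correction entries'' does not repair this: a single correction entry per letter is not well-defined, since the correct split value for letter $i$ depends on the joint multiplicity $X_{S_a}(i)+X_{S_b}(i)$, and moreover any correction would have to be applied to the \emph{projected} sketch $R(A_S-B_S)$, where changing one coordinate of the original vector perturbs all $D$ sketch coordinates; doing this for the $\Theta(|S|)$ affected letters costs $\Theta(|S|\cdot D)=\tilde\Theta(n^4/t^4\eps^8)$ circuit size, blowing the budget. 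The paper's resolution, which is the missing idea, is to sketch only locally computable quantities: each party caps its \emph{unsplit} occurrence vector at a threshold $L$ to get $A',B'$, the circuit approximates $\normt{A'-B'}^2$ (via the Bernoulli-sampled coordinates of $R(A'-B')$ as in the secure $\ell_2$ protocol), and then adds \emph{exactly} the difference $\normt{A_S-B_S}^2-\normt{A'-B'}^2$, which is supported on only $O(|S|)$ letters and is retrieved by per-letter ROM lookups into precomputed 3D split-occurrence tables $A^{SM}(i,j,\cdot)$ indexed by every possible multiplicity $j$, with the circuit selecting $j=1+X_{S_a}(i)+X_{S_b}(i)$. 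The capping step is not cosmetic: Corollary~\ref{cor_splitcap} ($\normt{A'-B'}^2\le 100\ln(n)\normt{A_S-B_S}^2$) is what guarantees that a multiplicative approximation of the capped unsplit distance, plus the exact correction, is still a $(1\pm\alpha)$ approximation of $\normt{A_S-B_S}^2$; without capping, $\normt{A-B}^2$ can dwarf $\normt{A_S-B_S}^2$ and the sketching error swamps the signal. Your proposal contains neither the capping nor the multiplicity-indexed lookup tables.

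A secondary gap: Definition~\ref{def-secdisttest} demands correctness $1-neg(k)$, but the underlying tester (Lemma~\ref{app_dist_test} together with the probabilistic bounds on $\normt{a_S},\normt{b_S}$) only succeeds with constant probability; the ``$neg(k)$-error of the pseudorandom projection'' does not amplify this. The paper handles it by partitioning the $t$ samples into $K=\Theta(k)$ sample sets, running the test independently on each inside the circuit, and outputting the majority vote, which also forces the leakage function $f$ to be defined as that majority rather than a single comparator bit. Your proposal has no amplification step, so as written it establishes only constant-probability correctness.
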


The high level approach is similar to that of the previous protocol, 
in that we also estimate the squared $\ell_2$
distance between samples drawn from split distributions $a_S$ and
$b_S$, for some split-distribution set $S$, to distinguish between
$a=b$ and $\normo{a-b} \geq \epsilon$. One challenge in securing
the protocol is that we would like Alice and Bob to agree on an alphabet (and specifically, use the same split-distribution set $S$) without compromising their inputs. To address this,
Alice and Bob will run a secure computation over a circuit of size $\tilde{O}_{k,\epsilon}(n^2/t^2 + 1)$
which  simulates samples from the split distribution instead of communicating the set $S$. 
The main idea is to approximate the distance of the unsplitted
occurrences vectors $A$ and $B$ (representing samples drawn from $a$
and $b$), and \say{manually} add the difference between the unsplitted
and the splitted distance for each $i \in S$. That means, the
protocols will:
\begin{enumerate}
	\item Approximate $\normt{A - B}^2$;
	\item Add exactly $\normt{A_S - B_S}^2 - \normt{A - B}^2$;
\end{enumerate}
There is however an issue with this approach: $\normt{A - B}^2$ might
be much larger than $\normt{A_S - B_S}^2$, and therefore the summation
of (1) + (2) above might not be a good enough approximation of
$\normt{A_S - B_S}^2$. To overcome this issue, we first define the
notion of \say{capped} vectors as follows:
\begin{definition}[Capped Vectors]
	For $X\in \R^n$, we define $X' \in \R^n$ to be the 'capped vector of X with threshold $L$' iff $\forall i\in [n], X'_i=\min(L,X_i)$.
\end{definition}

Now, instead of approximating $\normt{A - B}^2$, we will approximate \say{capped} versions of $A, B$
for some carefully chosen threshold $L$ termed $A', B'$, and show the capped distance
is a good--enough approximation to the splitted distance $\normt{A_S - B_S}^2$. Our revised plan is therefore:

\begin{enumerate}
	\item Approximate $\normt{A' - B'}^2$;
	\item Add exactly $\normt{A_S - B_S}^2 - \normt{A' - B'}^2$;
\end{enumerate}

{\bf Securely Approximating $\normt{A' - B'}^2$.} For this task, we will use
the techniques of \cite{Indyk:2006:PPA:2180286.2180304}. Recall the
$\ell2-Approx$ protocol from \cite{Indyk:2006:PPA:2180286.2180304}
samples an orthonormal matrix with rounded entries $R$ as per Theorem~\ref{randmat}, and samples
sufficient coordinates from $Rx$. To ensure the protocol works
correctly, efficiently and securely for all distances, it scans all
possible distance magnitudes (termed $T$) starting with the largest
possible one and dividing by 2 each time, each time with a new
circuit. In our case, however, we need only to distinguish if an
approximation is more than some threshold, so it suffices to run one
ROM sub-circuit with $T$ preselected for such threshold.

{\bf Choosing a set $S$.} We also take a slightly different approach
for splitting the samples. While in
\cite{diakonikolas2016new} and Section~\ref{sec_priv},
we split the distributions using samples from $b$ only, and show it
suffices to upper bound the $\ell_2$ norm for only one of the
distributions we test by comparing $a$ and $b$ second moment
approximations, here we will upper bound both $a_S, b_S$ second
moments by constructing 2 multisets: $S_a$ from $a$ and $S_b$ from $b$
and setting $S \triangleq S_a \uplus S_b$ (where $\uplus$ denotes the sum of multiplicities of each element in $S_a$ and $S_b$). While not necessary for
correctness, this new technique turns out to both simplify our
protocol and improve its security guarantees.

We now show that $\normt{A'-B'}^2 = \tilde{O}(\normt{A_S - B_S}^2)$:

\begin{lemma}\label{lm_sampledist}
	Let $p$ be distribution on $[n]$, and let $X, Y$ be the occurrence vectors of $t_1, t_2$ samples drawn from $p$ where $t_2 \geq t_1$, then with high probability, we have that for all $i \in [n], X_i \leq 50\ln(n) \cdot \max(1, \tfrac{t_1}{t_2}Y_i)$.
\end{lemma}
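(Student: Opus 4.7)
The plan is to fix an arbitrary letter $i\in[n]$, perform a case analysis on the magnitude of $p_i=\Pr_{x\sim p}[x=i]$, apply standard Chernoff tail bounds in each case, and then take a union bound over all $n$ letters. Write $\mu_1 = t_1 p_i$ and $\mu_2 = t_2 p_i$ for the expectations of $X_i\sim\mathrm{Bin}(t_1,p_i)$ and $Y_i\sim\mathrm{Bin}(t_2,p_i)$, and note that $\mu_1\le\mu_2$.

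The two cases are split at a threshold of the form $p_i \gtrless c\tfrac{\ln n}{t_1}$ for a suitable constant $c$ (for concreteness, $c=5$ works comfortably against the constant $50$ in the statement). In the \emph{small} case $p_i \le c\tfrac{\ln n}{t_1}$, we have $\mu_1\le c\ln n$, and the multiplicative Chernoff bound in the form $\Pr[X_i\ge k]\le (e\mu_1/k)^k$ for $k\ge \mu_1$ yields, at $k=50\ln n$, a failure probability at most $(ec/50)^{50\ln n}\le n^{-\Omega(1)}$ with room to spare. This gives $X_i\le 50\ln n$, which is enough because $\max(1,\tfrac{t_1}{t_2}Y_i)\ge 1$.

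In the \emph{large} case $p_i > c\tfrac{\ln n}{t_1}$, both means satisfy $\mu_1,\mu_2\ge c\ln n$ (using $t_2\ge t_1$). Applying the standard Chernoff bound $\Pr[|Z-\mu|\ge\mu/2]\le 2\exp(-\mu/12)$ separately to $X_i$ and $Y_i$, each fails with probability $\le n^{-\Omega(c)}$, so after choosing $c$ large enough we get simultaneously $X_i\le\tfrac{3}{2}\mu_1$ and $Y_i\ge\tfrac{1}{2}\mu_2$ except with probability $n^{-\Omega(1)}$. Then
\[
X_i \;\le\; \tfrac{3}{2}\mu_1 \;=\; \tfrac{3}{2}\,\tfrac{t_1}{t_2}\mu_2 \;\le\; 3\,\tfrac{t_1}{t_2}Y_i \;\le\; 50\ln(n)\cdot\tfrac{t_1}{t_2}Y_i,
\]
which is the desired bound in this case. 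Finally, a union bound over the (at most) $2n$ bad events (two per letter) drives the total failure probability to $n^{-\Omega(1)}$, yielding the lemma.

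The only mild subtlety, and the step that needs the most care, is the \emph{small}-$p_i$ case: a naive additive Chernoff bound is not strong enough, and one must use the sub-exponential upper tail $(e\mu/k)^k$ (valid because $X_i$ is Binomial, hence stochastically dominated by $\mathrm{Poi}(\mu_1)$ up to constants) to get a polynomially small failure probability when $\mu_1$ itself is only $\Theta(\ln n)$. Once the right form of Chernoff is used, the constants line up comfortably against the generous $50\ln n$ in the statement.
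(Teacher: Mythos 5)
Your proof is correct in structure and takes essentially the same route as the paper: a case split on the size of $p_i$, a Chernoff bound in each case, and a union bound over the letters (the paper splits at $p_i\le 1/t_1$ and bounds $X_i\le 10\ln(n)\,t_1p_i$ and $Y_i\ge t_2p_i/5$, while you split at $c\ln n/t_1$ and use constant-factor concentration; this is an inessential difference). One quantitative slip: the concrete choice $c=5$ does not work in the large-$p_i$ case, since $2\exp(-\mu/12)$ with $\mu\ge 5\ln n$ is only about $n^{-5/12}$ per letter and does not survive a union bound over up to $n$ letters, and you also cannot take $c$ arbitrarily large, because your small case needs $ec/50<1$, i.e.\ $c<50/e\approx 18.4$. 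Any $c$ in the window $12<c<50/e$ (e.g.\ $c=15$) makes both cases and the union bound go through, so the issue is purely one of constants, not of the argument.
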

\begin{proof}
	By the Chernoff bound, for each $i \in [n]$ we have either:
	\begin{enumerate}
		\item $p_i \leq 1/t_1$, and then $\Pr[X_i > 10\ln(n)] \leq 1/100n^2$; or
		\item $p_i > 1/t_1$, and then $\Pr[X_i > 10\ln(n)t_1 p_i] \leq 1/100n$ and $\Pr[Y_i < t_2 p_i/5] \leq 1/100n^2$.
	\end{enumerate}
	so the claim follows the union bound over all coordinates.
\end{proof}

\begin{corollary}\label{cor_splitcap}
	Let $A, B$ be the occurrence vectors of $t$ independent
        samples drawn from each $a, b$. Let $A', B'$ be capped vectors
        of $A, B$ with threshold $L \in [t]$. Let $S_a, S_b$ be
        multisets of independent samples drawn from $a, b$ of size
        $t/L$. Finally, for $S
        \triangleq S_a \uplus S_b$, we
        define $A_S, B_S$ to be the occurrence vectors of the $t$
        samples encoded into $A, B$, recasted as being drawn from $a_S,
        b_S$. Then, with high probability: $\normt{A'-B'}^2 \leq
        100\ln(n)\cdot\normt{A_S - B_S}^2$.
\end{corollary}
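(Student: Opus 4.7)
The plan is to bound $\normt{A'-B'}^2$ coordinate-by-coordinate against a lower bound on $\normt{A_S - B_S}^2$. Concretely, I will show that with high probability, for every $i \in [n]$,
\[
(A'_i - B'_i)^2 \;\le\; C\ln(n)\cdot \frac{(A_i-B_i)^2}{a_i},
\]
where $a_i = 1 + (S_a)_i + (S_b)_i$ is the number of sub-letters that $i$ is split into, and $C$ is an absolute constant. Summing over $i$ together with the lower bound on $\normt{A_S - B_S}^2$ below yields the claim.

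First I would apply Lemma~\ref{lm_sampledist} twice with sample sizes $t_1 = t/L$ and $t_2 = t$: once with $X=S_a, Y=A$ (drawn from $a$), and once with $X=S_b, Y=B$ (drawn from $b$). With high probability this gives
\[
(S_a)_i \le 50\ln(n)\cdot\max(1, A_i/L),\qquad (S_b)_i \le 50\ln(n)\cdot\max(1, B_i/L),
\]
and hence a clean upper bound on $a_i$ in terms of $A_i, B_i, L$.

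Next I would establish the lower bound $\normt{A_S - B_S}^2 \ge \sum_i (A_i - B_i)^2 / a_i$. This is a direct consequence of Cauchy--Schwarz applied to each ``slice'' of letter $i$: since $\sum_{j=1}^{a_i}(A_S)_{(i,j)} = A_i$ and $\sum_{j=1}^{a_i}(B_S)_{(i,j)} = B_i$, we have $\sum_{j=1}^{a_i}\bigl((A_S)_{(i,j)} - (B_S)_{(i,j)}\bigr)^2 \ge (A_i-B_i)^2/a_i$ deterministically given the split.

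Then I would verify the per-coordinate inequality in three cases. When $A_i, B_i \le L$ we have $A'_i - B'_i = A_i - B_i$, and the bounds above give $a_i = O(\ln n)$, so the inequality holds directly. When $A_i, B_i > L$ the left side is $0$ and nothing needs to be checked. The remaining case $A_i > L \ge B_i$ (and its symmetric version) is the main obstacle: here $(A'_i - B'_i)^2 = (L - B_i)^2$, and the lemma yields $a_i \lesssim \ln(n)\cdot A_i/L$. What is needed is the elementary inequality $(L-B_i)^2 A_i \le (A_i - B_i)^2 L$, which follows from $L \le A_i$ and $0 \le B_i \le L$ via a short manipulation (writing $(L-B_i)^2 \le (L-B_i)(A_i-B_i)$ and reducing to $A_i B_i \ge L B_i$).

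Combining these pieces, summing over $i$, and absorbing the constants into the $100\ln(n)$ factor finishes the proof; the only loss in probability comes from the two invocations of Lemma~\ref{lm_sampledist}, which is $o(1)$. The primary obstacle is the mixed case $A_i > L \ge B_i$, where one must carefully balance the large split factor $a_i \sim \ln(n)\cdot A_i/L$ against the relatively small capped difference $L - B_i$.
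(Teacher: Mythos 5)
Your proposal is correct and follows essentially the same route as the paper's proof: bound the multiplicities of each letter in $S_a,S_b$ via Lemma~\ref{lm_sampledist}, lower-bound the per-letter split distance by $(A_i-B_i)^2/a_i$ (Cauchy--Schwarz), and handle the capping with the same elementary inequality in the mixed case $A_i>L\ge B_i$ (the paper phrases it as $\tfrac{(A'_i-B'_i)^2}{(A_i-B_i)^2}\le\tfrac{L}{A_i}$ after a w.l.o.g.\ on $B_i\le L$, which is identical to your reduction to $A_iB_i\ge LB_i$). Your case analysis is just a slightly more explicit rendering of the paper's argument, so there is nothing substantive to add.
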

\begin{proof}
	For each letter $i \in [n]$, if both $A_i$ and $B_i$ are
        larger than $L$, then LHS is $0$. Hence, we will now assume
        that $B_i<L$ w.l.o.g.. 

	By Lemma \ref{lm_sampledist}, we have with high probability for all $i \in [n]$:
	\begin{enumerate}
		\item $\{i$ multiplicity in $S_b\} \leq 50\ln(n)$.
		\item $\{i$ multiplicity in $S_a\} \leq 50\ln(n)\max(1,A_i/L)$.
	\end{enumerate} 
	
	Note that on one hand, since $B'_i = B_i$ and $A'_i =
        \min(A_i,L)$ we have $\tfrac{(A'_i-B'_i)^2}{(A_i-B_i)^2} \leq
        \tfrac{(A'_i)^2}{(A_i)^2} \leq \tfrac{L}{A_i}$, and on the other hand, we have that $\normt{A_S(i,\cdot)-B_S(i,\cdot)}^2 \cdot \{i \text{ multiplicity in } S \} \geq (A_i-B_i)^2$. By combining both inequalities we obtain that with high probability, all letters satisfy the equation $(A'_i-B'_i)^2 \leq 100\ln(n)\cdot\normt{A_S(i,\cdot) - B_S(i,\cdot)}^2$ and the corollary follows the summing over all letters.
\end{proof}

{\bf Adding $\normt{A_S - B_S}^2 - \normt{A' - B'}^2$.}  We note that
simulating the splitted samples in $A_S$ and $B_S$ can be done
independently for each letter $i \in [n]$, as $A_S(i,\cdot)$ and
$B_S(i,\cdot)$ are (random) functions of the number of occurrences of
the letter $i$ in $S$, $A$ and $B$, independently of the other
letters. Furthermore, the vectors $A_S-B_S$ and $A'-B'$ differ in only
$O(|S|)$, plus $O(t/L)=O(|S|)$, coordinates. This fact allows Alice and Bob
to prepare simulated samples for all possible multisets in polynomial
offline time, and the secure circuit can look up the correct
simulation of $A_S(i,\cdot)$ and $B_S(i,\cdot)$ for each $i \in S$ and
calculate $\normt{A_S - B_S}^2 - \normt{A' - B'}^2$ using $O(|S|)$
such lookups.

We define a method for Alice and Bob to prepare a look-up table of
simulated samples from split distributions $a_S$ and $b_S$ for all
possible multisets $S$ of size at most $t$.

\begin{definition}
	Given some occurrences vector $X \in [t]^n$, we will define the 3D Split Occurrences Matrix $X^{SM} \in [t]^{n\times t \times t}$ as the following random process:
	for any $i \in [n], j \in [t]$, we will split $X_i$ into $j$
        buckets $X^{SM}(i,j,1),...,X^{SM}(i,j,j)$ by recasting each sample to a
        random bucket uniformly.
\end{definition}

We present our protocol next.

\fbox{%
\parbox{450pt}{
\textbf{Secure protocol $\Pi$ for the $\CT2p$ problem} \\
Let $\zeta$ be i.i.d. samples from the product distribution $p = a
\times b$.\\
\textit{Alice's Input:} $1^k$, first coordinates of
$\zeta_1,\ldots,\zeta_t$ \\
\textit{Bob's Input:} $1^k$, second coordinates of $\zeta_1,\ldots,\zeta_t$ \\
\\
\textit{Output:} 1 if $a=b$ and 0 if $\normo{a-b}\geq \epsilon$
\\
\begin{enumerate}
        \item Let $K=\Theta(k)$, and $c$ is some constant.
	\item Let $t'=t/K$; $L= \max(1,\frac{t'^3 \cdot
          \epsilon^4}{c \cdot n^2})$; $\alpha = \Omega(\sqrt{L/t'})$;
          $l=\Theta(k^2\ln^2(n)/\alpha^2)$.
	\item Alice and Bob exchange the seed of the PRG $G$, and
          generate matrix $R$ as in Theorem \ref{randmat}.
	\item Alice partitions the $t$ samples into $K$ Sample Sets of
          $t'$ samples each, and for each Sample Set prepares ROM
          entries $RA', A^{SM}, M_a, X_{S_a}$ as follows:
\label{prot:ssA}
	\begin{enumerate}
		\item Generate multi-set $S_a$ using $t'/2L$ samples from the Sample Set according to Lemma \ref{split_lemma}.
		\item Let $X_{S_a}$ be the occurrence vector of $S_a$ 
		\item Let $A$ be the occurrence vectors of another
                  $t'/2$ samples from the Sample Set.
		\item Let $A'$ be capped vector of $A$ with threshold $L$.
		\item Let $M_a = \{i : i \in S_a \vee A_i > L\}$
                \item Let $A^{SM}$ be 3D Split Occurrences Matrix of $A$.
	\end{enumerate}
	\item Bob similarly prepares ROM entries $RB', B^{SM}, M_b, X_{S_b}$.
\label{prot:ssB}
         \item Alice and Bob run a secure circuit with ROM, $C_\Pi$ to compute
           the following function.
           \begin{enumerate}
           \item For each of the sample sets $j\in [K]$:
	     \begin{enumerate}
		\item For $i \in M_a \cup M_b$
		\label{prot::comp_yi}
		\begin{enumerate}
			\item Let $m_i = 1 + X_{S_a}(i)  + X_{S_b}(i) $
			\item Let $y_i = \normt{A^{SM}(i,m_i,\cdot)-B^{SM}(i,m_i,\cdot)}^2 - (A'_i-B'_i)^2$.
		\end{enumerate}
		\item Compute $O_j = \sum_{i \in M_a \cup M_b} y_i$
		\item Let $T = 2(\tau - O_j)$, where $\tau$ is the threshold from Lemma \ref{app_dist_test}.
		\item Generate random $i_1,...,i_l \in [n]$ and compute $\left.d_1=R(A'-B')\right|_{i_1}^2,...,d_l=\left.R(A'-B')\right|_{i_l}^2$.
		\item Generate $z_1,...,z_l$ from independent
                  Bernoulli with biases $nd_1/TK,...,nd_l/TK$.
		\item Let $D_j = O_j + \tfrac{TK}{l}\sum_{i\in[l]} z_i$.
		\item If $D_j > \tau$ vote $0$; otherwise, vote $1$.
	\end{enumerate}
	   \item Output the majority of the votes from the $K$ tests. 
           \end{enumerate}
           \end{enumerate} 
}}\\

\subsection{Protocol analysis}
We now prove Theorem~\ref{priv_cl_tst} on the correctness and security
of the protocol $\Pi$.

\begin{proof}[Proof of Theorem \ref{priv_cl_tst}]
The proof proceeds in the following steps:
\begin{enumerate}
\item
First we define a boolean function $f(\zeta)$.%
\item
We show that for $p=a \times b$ such that $a=b$ or $\normo{a-b}\geq \epsilon$, the function $f(\zeta)=g(p)$, 
 whenever $\zeta \sim_{i.i.d} p$ except with negligible probability.
\item
Finally, we show that the protocol $\Pi$ is a secure computation of $f(\zeta)$.
\end{enumerate}

We define $f(\zeta)$ as follows: \\
\fbox{%
\parbox{450pt}{
\textbf{$f(\zeta)$} 
\begin{enumerate}
	\item Partition the samples into $K$ Sample Sets
          $\zeta^1,...,\zeta^K$ as per steps \ref{prot:ssA} and
          \ref{prot:ssB} of $\Pi$.
	\item Compute the majority of the $K$ test results, where a
          test does the following for each of the $K$ Sample Sets $\zeta^i$:
	\begin{enumerate}
		\item Generate multi-sets $S_a, S_b$ as in the protocol.
		\item Let $S = S_a \uplus S_b$.
		\item Generate $A, B$ and calculate $A', B', A_S, B_S$.
		\item Let $\Delta_1 = \normt{A_S-B_S}^2 - \normt{A'-B'}^2$.
		\item Let $T=2(\tau - \Delta_1)$.
		\item Generate $z_1,...,z_l$ from independent
                  Bernoulli with bias $\normt{A'-B'}^2/TK$.
		\item Let $\Delta_2 = \tfrac{TK}{l}\sum_{i=1}^l z_i$.
		\item If $\Delta_1 + \Delta_2 \geq \tau$ vote 0; otherwise vote 1.
	\end{enumerate}
\end{enumerate}
}}

We now show that, whenever $a=b$ or $\normo{a-b} \geq \epsilon$, we
obtain $\Pr_{\zeta_1\ldots \zeta_t \sim_\text{i.i.d. } p}
[f(\zeta)=g(p)] = 1-neg(k)$, for $p = a \times b$. We show
$f(\zeta^i)$ votes correctly for each Sample Set $\zeta^i$, with high
probability. First, one can see that if $\normt{A'-B'}^2 \geq T$ then
$E[z_i] \geq 1/K$ and, by the Chernoff bound for $l\ge \Omega(k^2)$, we
have that $\Delta_2 \geq \tau - \Delta_1$ with $1-neg(k)$
probability%
; the test vote is 0 for this Sample
Set. Similarly, if $\normt{A'-B'}^2 \leq T/4$, the test votes 1 with
$1-neg(k)$ %
 probability %

Otherwise, when $\normt{A'-B'}^2\in[T/4,T]$, we have that $\Delta_2$
is a $\left(1\pm\alpha/100\ln(n)\right)$-factor approximation of
$\normt{A'-B'}^2$ with $1-neg(k)$ probability, by the Chernoff
bound for $l=\Omega(k^2\cdot \ln^2n/\alpha^2)$.

We now invoke Corollary \ref{cor_splitcap}, and obtain with high probability that:
	\begin{align*}
		\left|\Delta_1+\Delta_2 - \normt{A_S - B_S}^2\right| &= \left|\Delta_2 - \normt{A'-B'}^2\right| \\
		&\leq \alpha/100\ln(n) \cdot \normt{A'-B'}^2 \\
		&\leq \alpha \cdot \normt{A_S - B_S}^2.
	\end{align*}
	
	It is left to show we have sufficient samples for the vote to
        be correct with high probability. For that we use Lemma
        \ref{split_lemma} to bound $\normt{a_S}$ and $\normt{b_S}$ as
        follows:

	\begin{enumerate}
		\item We note that $|S_a| = |S_b| = t'/2L$. Therefore,
                  $S_a$ and $S_b$ contain subsets $S'_a \subseteq S_a$
                  and $S'_b \subseteq S_b$ of size $\Poi(t'/4L)$ each,
                  with probability at least $1-2\tfrac{(t'/4L)^2}{t'/4L}=1-2t'/4L=1-o(1)$.
		\item Therefore,
                  $\E\left[\normt{a_{S'_a}}\right],\E\left[\normt{b_{S'_b}}\right]
                  \leq \sqrt{4L/t'}$ and we have both norms
                  $O(\sqrt{L/t'})$ with
                  probability 0.99, by the Markov bound.
		\item Since $S'_a, S'_b \subseteq S$, then
                  $\normt{a_S},\normt{b_S} = O(\sqrt{L/t'})$.  %
		\item Furthermore, as $t' = \Omega(\sqrt{n} /
                  \epsilon^2)$, we have $|S| = 2t'/L = O(n)$ by our choice of $L$. 
	\end{enumerate}

From now on we assume all the above.
According to Lemma \ref{app_dist_test} and Fact \ref{split_fact}, all we need to make the test successful with high probability is $O(\sqrt{L/t'}\cdot n / \epsilon^2) = O(t'/c + n/\sqrt{t'}\epsilon^2) = O(t'/c + n^{2/3}/\epsilon^{4/3})$ many samples (since $t'\geq O(n^{2/3}\epsilon^{-4/3}$). By choosing $c$ to be high enough constant, $t'/2$ samples suffice to make the vote correct with high probability. Therefore majority vote over $K$ sample sets amplify such probability to $1-neg(k)$. \\

We'll now show $\Pi$  is a secure computation for
$f(\zeta)$ for any $\zeta$. For correctness, we'll show that for all $\zeta$,
$\E[\Pi(\zeta)]-\E[f(\zeta)] = neg(k)$.
This is sufficient, since both $\Pi$ and $f$ provide an output in
$\{0,1\}$ (so the output expectation is the same as the probability
the output equals 1). 
We'll first show that for all sample
sets $\zeta^1,\ldots,\zeta^K : \Delta_1 \equiv O_j$.

\begin{lemma}\label{cl_lm1}
	Let $A_S, B_S$ denote the occurrences vector of simulated samples from $a_S, b_S$ respectively, then we have for all sample sets $\normt{A'-B'}^2 + O_j \equiv \normt{A_S-B_S}^2$
\end{lemma}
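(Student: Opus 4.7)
The plan is to decompose both sides coordinate-by-coordinate and use the definitional equivalence between the split-distribution occurrence vectors $A_S(i,\cdot), B_S(i,\cdot)$ and the offline-prepared slices $A^{SM}(i,m_i,\cdot), B^{SM}(i,m_i,\cdot)$. Since both $a_S$ and $b_S$ are split distributions defined by the same multiset $S = S_a \uplus S_b$, each letter $i\in[n]$ is replaced (in the support of $a_S, b_S$) by exactly $a_i = 1 + X_{S_a}(i) + X_{S_b}(i) = m_i$ sub-letters. Drawing the original $A_i$ samples from $a$ and recasting each as $a_S$ corresponds to distributing $A_i$ samples uniformly at random into $m_i$ buckets --- but this is precisely the definition of $A^{SM}(i,m_i,\cdot)$. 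Hence $A_S(i,\cdot)$ and $A^{SM}(i,m_i,\cdot)$ are identically distributed, and similarly for $B$.

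Using this, I would split the sum $\normt{A_S-B_S}^2 = \sum_{i\in[n]} \normt{A_S(i,\cdot) - B_S(i,\cdot)}^2$ into the contribution from letters $i \in M_a \cup M_b$ and from letters $i \notin M_a \cup M_b$. For $i \notin M_a \cup M_b$, by definition of $M_a, M_b$ we have $i \notin S_a \cup S_b$, so $m_i = 1$, and also $A_i \le L$ and $B_i \le L$, so $A'_i = A_i$ and $B'_i = B_i$. With $m_i = 1$ the split is trivial ($A^{SM}(i,1,1) = A_i$), so the per-letter contribution is exactly $(A_i - B_i)^2 = (A'_i - B'_i)^2$. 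For $i \in M_a \cup M_b$, the distributional identification above gives $\normt{A_S(i,\cdot) - B_S(i,\cdot)}^2 \equiv \normt{A^{SM}(i,m_i,\cdot) - B^{SM}(i,m_i,\cdot)}^2$, and by the definition of $y_i$ in Step \ref{prot::comp_yi} of the protocol, this equals $(A'_i - B'_i)^2 + y_i$.

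Summing both cases yields
\begin{align*}
\normt{A_S - B_S}^2 &\equiv \sum_{i\notin M_a \cup M_b} (A'_i - B'_i)^2 + \sum_{i\in M_a \cup M_b}\bigl[(A'_i - B'_i)^2 + y_i\bigr] \\
&= \sum_{i\in[n]} (A'_i - B'_i)^2 + \sum_{i\in M_a \cup M_b} y_i \;=\; \normt{A'-B'}^2 + O_j,
\end{align*}
which is the claimed identity. The only subtle step is the distributional identification between $A_S(i,\cdot)$ and $A^{SM}(i,m_i,\cdot)$: this is where independent randomness is used in the two constructions, but both implement the same conditional distribution (a uniformly random allocation of $A_i$ balls into $m_i$ bins, independently per letter $i$), so equivalence follows from the definitions of split distribution (Definition~\ref{def:splitDist}) and of the 3D split occurrences matrix.
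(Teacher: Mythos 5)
Your proposal is correct and follows essentially the same route as the paper's proof: the same per-letter case split on $i \in M_a \cup M_b$ versus $i \notin M_a \cup M_b$, the same observation that for excluded letters $m_i = 1$ and $A'_i = A_i$, $B'_i = B_i$, and the same distributional identification $\normt{A^{SM}(i,m_i,\cdot)-B^{SM}(i,m_i,\cdot)}^2 \equiv \normt{A_S(i,\cdot)-B_S(i,\cdot)}^2$ combined with the definition of $y_i$, summed over letters. Your added justification of that identification (uniform allocation of $A_i$ samples into $m_i$ buckets, independently per letter) is a slightly more explicit spelling-out of what the paper leaves implicit, but it is not a different argument.
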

\begin{proof}

For any $i \in [n]$, let $m_i = 1+\{i$ multiplicity in $S\}$. We have either:
	\begin{enumerate}[(i)]
		\item $i \notin M_a \cup M_b$, and then:
		\begin{enumerate}
		 	\item $(A'_i-B'_i)^2 = (A_i-B_i)^2$ (since $A_i, B_i \leq L$).
		 	\item $\normt{A_S(i,\cdot) - B_S(i,\cdot)}^2 = (A_i-B_i)^2$ (since $i \notin S$).
		 	\item $i$ does not contribute to $O_j$.
		\end{enumerate}
		\item $i \in M_a \cup M_b$, and then:
		\begin{enumerate}
		 	\item $i$ contribution to $O_j$ is $y_i =
                          \normt{A^{SM}(i,m_i,\cdot)-B^{SM}(i,m_i,\cdot)}^2 -
                          (A'_i - B'_i)^2$, and 
                          $$\normt{A^{SM}(i,m_i,\cdot)-B^{SM}(i,m_i,\cdot)}^2
                          \equiv \normt{A_S(i,\cdot) - B_S(i,\cdot)}^2.$$
		\end{enumerate}
	\end{enumerate}
	Therefore, each $i$ contribution to LHS and RHS are equivalent. This concludes the proof.
\end{proof}

So we are left to show that $\Pr[D_j \geq \tau] - \Pr[\Delta_1 + \Delta_2 \geq \tau] = neg(k)$. If we have $\normt{A'-B'} > T$, then both events occurring with prob. $1-neg(k)$. Otherwise, we use Theorem \ref{randmat} and obtain that:
\begin{enumerate}
	\item With prob. $1-neg(k)$, $\forall i (RA'_i-RB'_i)^2 \leq TK/n$.
	\item $(1-2^{-\Theta(K)})\normt{A'-B'}^2 \leq E_i[n(RA'_i-RB'_i)^2] \leq \normt{A'-B'}^2$.
\end{enumerate}
Therefore, by linearity of expectations and the union bound we obtain that the distribution over the $z_i$ differ by at most $neg(k)$ between $f(\zeta)$ and $\Pi(\zeta)$, and therefore the probability to output 0 differ by at most $neg(k)$.\\

Next, we show  that the protocol $\Pi$ is a secure computation of
$f$. We note that
the only communication in the protocol is sending is the random seed,
followed by the secure circuit computation of the final output,
which is indistinguishable from $f(\zeta)$. Moreover, as described above, either $\Pr[f(\zeta) = \Pi(\zeta)] \geq 1-neg(k)$ or the distribution over the $z_i$ (for a given seed) differ by at most $neg(k)$. Thus, replacing the 
secure circuit $C_\Pi$ with an oracle computing the output, the
simulator (for either Alice or for Bob) outputs its input, a random seed for $G$, and the final output, to
generate a distribution that is statistically close to that in the
real-world.  \\

Finally, we analyze the communication complexity of the protocol $\Pi$
for $\CT2p$.  
To analyze the circuit size of $C_\Pi$, one can observe that:
 \begin{itemize}
 	\item Computing $O_j$ can be done with a circuit of size
          $\tilde{O}(|M_a| + |M_b|) = \tilde{O}(t' / L) =
          \tilde{O}(\tfrac{n^2}{\epsilon^4\cdot t^2} + 1)$ (by computing each $y_i$ in step \ref{prot::comp_yi} using $\tilde{O}(1)$ computations).
 	\item Next, computing $D_j$ (with $O_j$ as input) can be done with a 
          circuit of size $\tilde{O}(l) = \tilde{O}_k(\log^2n
          \cdot 1/\alpha^2) = \tilde{O}_k(t' / L) = \tilde{O}_k(\tfrac{n^2}{t^2\epsilon^4} + 1)$.
    \item Computing majority over $K$ (sub-)circuits adds multiplicative factor of $\Theta(k)$ to the circuit size.
 \end{itemize}
  Thus $C_\Pi$ is of size $\tilde{O}_k\left(\frac{n^2}{t^2\epsilon^4}
  + 1\right)$,
  and the ROM consists of $s \triangleq \text{poly}(n,t,1/\epsilon,k)$ words of size $r \triangleq \tilde{O}(1)$ each.
  Therefore, the communication complexity of the secure computation of
  $C_\Pi$ is $|C_\Pi| \cdot T(r,s)$, where $T(r,s)$ is is the
  communication complexity of 1-out-of-$s$ OT on words of size $r$.
  Finally, the total communication of the protocol $\Pi$ additionally
  also includes the   length of the seed for $G$.

  Plugging in our OT and PRG assumptions, we obtain the required
  communication bound  (using Theorem 
  \ref{thm_nn01}).
\end{proof}

\section{Independence Testing: Communication and Security}
\label{sec:independence}

In this section, we present our protocol for independence testing. To
streamline the presentation, we give directly a secure protocol,
noting that obtaining communication-efficiency is the main challenge
for this problem. Overall, we prove the following theorem:
\begin{theorem}[Independence, Secure]
\label{thm_2pit}
Fix a security parameter $k>1$. Fix $\eps\in (0,2)$, $1\le m\le n$, and let $t$ be such that $t
\geq C \cdot k \cdot \left(n^{2/3}m^{1/3}\epsilon^{-4/3} +
\sqrt{nm}/\epsilon^2\right)$, for some (universal) constant $C>0$, and assuming
OT cryptographic assumption, there is a secure 
distribution testing protocol for $\IT2p_{n,m,t,\epsilon}$ using
$\tilde{O}_{k}\left(\frac{n^2\cdot m}{t^2\epsilon^4} + \frac{n\cdot
  m}{t \epsilon^4} + \frac{\sqrt{m}}{\epsilon^3}\right)$ bits of communication.
\end{theorem}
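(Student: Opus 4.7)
The plan is to reduce the independence testing problem over the large domain $[n]\times[m]$ to a closeness testing problem between two simulated distributions $\hat{p}$ and $\hat{q}$ over a smaller domain $[l]\times[m]$, where $l$ is chosen roughly of order $\tilde{\Theta}_\epsilon(n^3m/t^3+n^2m/t^2+1)$, and then to invoke the secure closeness-testing machinery of Theorem~\ref{priv_cl_tst} on these simulated samples. The first step is an alphabet-reduction by random rectangle sampling: Alice chooses a random subset $R\subseteq[n]$ of size $l$ using only her marginal (essentially by down-sampling each occupied letter with a controlled probability), and the parties restrict attention to joint samples whose first coordinate lies in $R$. Of the $t$ original sample pairs, roughly $t l/n$ survive this restriction, and the effective alphabet shrinks to $R\times[m]$ of size $lm$. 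The parameter $l$ is chosen to balance two competing pressures: shrinking $l$ reduces the effective alphabet in the subsequent closeness test, but shrinking it too far leaves too few surviving samples.

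Second, I will define $\hat{p}$ as the joint distribution on $R\times[m]$ (suitably renormalized) and $\hat{q}$ as the product of its marginals $\hat{a}\times\hat{b}$; note that $p$ is product iff $\hat{p}=\hat{q}$, and for a typical rectangle $R$, if $p$ is $\epsilon$-far from product then $\hat{p}$ remains $\Omega(\epsilon)$-far from $\hat{q}$ with constant probability. To pass from $\ell_1$ to $\ell_2$ testing, I apply the split-distribution framework of Definition~\ref{def:splitDist} and Lemma~\ref{split_lemma}: the parties build a joint split-set $S$ by sampling from their marginals so that $\|\hat{p}_S\|_2$ and $\|\hat{q}_S\|_2$ are both controlled, and crucially $\|\hat{q}_S\|_2$ admits a tighter bound than $\|\hat{p}_S\|_2$ because $\hat{q}$ is an explicit product of two well-behaved marginals. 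This tighter bound for the product side is what furnishes the additional linear-in-$t$ improvement that is active precisely in the regime $t=O(n)$ and explains why the two compounding effects (rectangle balancing and $\ell_2$-norm of $\hat{q}$) produce the quadratic saving $n^2 m/(t^2\epsilon^4)$ for small $t$ but only the linear $nm/(t\epsilon^4)$ thereafter.

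Third, Alice and Bob will simulate joint samples from $\hat{p}_S$ by using the retained joint samples together with local random bucketing into split coordinates, and joint samples from $\hat{q}_S$ by shuffling Bob's marginal samples relative to Alice's, so that each simulated pair is distributed as an independent draw from $\hat{a}\times\hat{b}$. Using Poissonization and pre-committed ROM tables for the splitting (in the spirit of the $A^{SM}$ table from Section~\ref{sec:secureCT}) each simulated sample costs only $O(1)$ communication, and all of the sampling, the identification of $R$, and the pairing are carried out inside a secure circuit with ROM so that no information about the underlying $R$, $S$, or joint samples is revealed beyond the final test bit. The secure closeness-testing protocol of Theorem~\ref{priv_cl_tst} is then invoked on these simulated samples of size $t'\approx tl/n$ over alphabet $lm+|S|$, incurring polylogarithmic overhead via the secure-circuit-with-ROM theorem of~\cite{DBLP:journals/corr/cs-CR-0109011} and the OT assumption. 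Adding up, the closeness-test communication $\tilde{O}_k((lm)^2/(t')^2\epsilon^4)$ together with the split-set transmission and the base $\ell_2$-sketch cost $\tilde{O}_k(\sqrt{m}/\epsilon^3)$ at the $t\to\infty$ limit gives the three terms in the stated bound.

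The main obstacle will be proving the reduction is sound: that is, showing that a random rectangle of size $l$ simultaneously (i) preserves the close-vs-far gap in $\ell_1$ up to constant loss in $\epsilon$, (ii) yields the promised $\ell_2$-norm bounds on both $\hat{p}_S$ and the tighter bound on $\hat{q}_S$ required to activate the second source of savings, and (iii) can be executed together with the sample-pairing entirely inside a bounded-size secure circuit with only $O(1)$ communication per simulated sample. A secondary difficulty is the careful choice of $l$ and the Poissonization of the retained sample count so that $t'=tl/n$ is with high probability at least the information-theoretic minimum for closeness testing on alphabet $lm$, which is what forces the sample-complexity precondition $t\ge C\cdot k\cdot(n^{2/3}m^{1/3}\epsilon^{-4/3}+\sqrt{nm}/\epsilon^{2})$ in the statement.
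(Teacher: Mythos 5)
Your overall architecture matches the paper's: reduce the alphabet by restricting to a random subset of Alice's letters, use split distributions to control $\ell_2$ norms, simulate joint samples from $\hat{p}$ and from a product surrogate $\hat{q}$ inside a secure circuit with ROM, and note that the tighter $\ell_2$ bound on $\hat{q}$ is the source of the quadratic regime. However, your last step has a genuine gap: you invoke Theorem~\ref{priv_cl_tst} as a black box on the simulated samples (alphabet $\approx lm$, sample count $t'\approx tl/n$), and this does not yield the claimed bound. The generic closeness protocol splits the \emph{joint} alphabet using the tested samples themselves, so it cannot exploit the product structure of $\hat{q}$; its communication would be $\tilde{O}\bigl((lm)^2/(t')^2\eps^{-4}\bigr)=\tilde{O}\bigl(n^2m^2/(t^2\eps^4)\bigr)$, a factor $m$ worse than the theorem, and in the regime $t\lesssim n$ its sample precondition $t'\gtrsim (lm)^{2/3}\eps^{-4/3}$ is in fact violated by a factor $m^{1/3}$. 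The paper avoids this by \emph{not} composing with the closeness protocol: it splits the two marginals separately ($\min(t',n)$ samples on Alice's side, $m$ on Bob's), so that $\normt{\hat{q}}^2=O\bigl(n/(\min(t',n)\, l\, m)\bigr)$, applies Lemma~\ref{app_dist_test} directly with this $U$, and computes the squared $\ell_2$ distance of the occurrence vectors of the $\tilde{O}(t'l/n)$ retained simulated pairs \emph{exactly} inside the same circuit — no sketching is needed because the number of retained samples already equals the communication budget, and the communication is just the circuit size times polylog factors. There is also a composition obstacle to your plan: the simulated joint samples exist only inside the circuit (the subset $\UC$ and the pairing are secret), whereas Theorem~\ref{priv_cl_tst} requires each party to prepare ROM tables offline from samples it holds in the clear.

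A secondary issue is your simulation of $\hat{q}$ by shuffling Bob's retained coordinates against Alice's: this targets the product of $\hat{p}$'s \emph{own} marginals, while the soundness lemma one can actually prove for a random restriction (the paper's Lemma~\ref{lm_epsfar}) lower-bounds the distance of $\hat{p}$ to the specific product $p_{1\rvert\UC}\times p_2$, i.e., conditioned left marginal times the \emph{unrestricted} right marginal; a lower bound on the distance to one specific product does not transfer to the product of $\hat p$'s own marginals, so your variant needs a separate argument (and shuffling also breaks exact i.i.d.-ness of the product samples). The paper sidesteps both problems by having Bob contribute a fresh, independent sample set $B_q$ drawn from $b_{S_b}$, which the circuit pairs with Alice's retained samples to get exact i.i.d.\ draws from $\hat{q}=(a_{S_a})_{\rvert\UC}\times b_{S_b}$.
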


Note that similarly to $\CT2p$, our protocol for $\IT2p$ obtains
better communication than via the straight-forward approach of Alice
sending its input to Bob, and Bob invoking standard independence
testing algorithm. We show a matching lower bound on communication
for $t\to \infty$ in Section~\ref{sec:indepLB}.

We start by providing some intuition behind our protocol, setting up
some definitions along the way.  As before, our protocol for $\IT2p$
uses the framework of splitting the distributions to reduce the
problem to testing an $\ell_2$ distance. While it may seem tempting to
reuse the sketching technique to estimate the $\ell_2$ distance
between the joint distribution and the product of its marginals, this
technique does not help here as this distance cannot be approximated
without first combining the samples from Alice and Bob. To overcome
this inherent obstacle, we use an {\em alphabet reduction}
technique. At a high level, our reduction samples a random rectangle
$R$ of the underlying split distribution alphabet, and tests its
distance to the marginal product distribution. This concept helps us
to improve communication in two ways. First, we obtain better
communication as we need to deal with a smaller alphabet problem, and,
second, we can use the additional samples for tighter $\ell_2$ bound on
the smaller alphabet product distribution. Each of these two
improvements achieve linear improvement with more samples and
therefore we hope for a quadratic improvement overall. The latter
improvement is only effective while $t \ll n$ samples, and hence the
quadratic improvement kicks in only in that regime.

To formalize the alphabet-reduction idea, we start by making the
following definition:

\begin{definition}
Fix some distribution $p$ over alphabet $[n]$. For $\UC \subseteq [n]$
let $p_{\rvert \UC}$ be the distribution $p$ conditioned on the set
$\UC$: i.e., $p_{\rvert \UC}$ is a distribution over alphabet $\UC$
such that for any $i \in \UC$, $p_{\rvert \UC}(j) = p(j)/\sum_{i \in
  \UC}p(i)$.
\end{definition}

Given joint distribution $p=(a,b)$, let $S_a, S_b$ be the multisets in $[n],
[m]$ of samples from $a, b$ respectively. Define the multiset
$S\subset [n] \times [m]$ to be such that $1 +
\{$multiplicity of $(i,j)$ in $S\} = (1 + \{$multiplicity of $i$ in
$S_a\})\cdot(1 + \{$multiplicity of $j$ in $S_b\})$.

Our idea is to sample a rectangle $R = R_a \times R_b \subseteq [n + |S_a|]
\times [m + |S_b|]$, and test the closeness of the sub-distributions
$\hat{p} \overset{\Delta}{=} (p_S)_{\rvert R}$ and $\hat{q} \overset{\Delta}{=} (a_{S_a})_{\rvert R_a} \times
(b_{S_b})_{\rvert R_b}$. We would like $R$ to satisfy the following
properties:
\begin{enumerate}
\item If $a$ is independent of $b$, denoted $a \independent b$, then $\hat{p}=\hat{q}$, and if $a$ and $b$ are $\eps$-far from being independent, then $\hat{p}$
  is at distance $\Omega(\epsilon)$ from $\hat{q}$. %
\item $\Pr_{i \sim p}[i \in R]$ is high enough, so that we have sufficient samples to test $\hat{p}$ and $\hat{q}$ for closeness.
\item $R$ is sufficiently small to allow small communication.
\item The secure circuit can simulate samples from $\hat{p}, \hat{q}$ using $\tilde{O}(1)$ ROM look-ups per sample.
\end{enumerate}

We obtain conditions (1) and (2) above by bounding the $\ell_2$ norm
of the underlying product distribution through splitting, and use such
bound also as a bound over the variance of the $\ell_1$ mass of $\hat{p}$
and its distance from $\hat{q}$. We analyze such
approach in Section \ref{subsec:ind_reduc}. For condition (4), we only
sample $R_a \subseteq [n + |S_a|]$ (assuming w.l.o.g. $n \geq m$), and
have $R_b$ be the complete alphabet of Bob $[m + |S_b|]$. The final
communication/sample trade-offs are obtained from the best trade-off
between conditions (2) and (3).

In order to be able to look-up and \say{pair} Alice and Bob samples
from some rectangle $R$, we need a method for storing a table of all
indices for each letter. For this, we use the following definition:
\begin{definition}
Given $t$ indexed samples $X \in [n]^t$, we define {\em the Indices
  Set Vector} $\mathcal{I} \in (2^{[t]})^n$ such that $\mathcal{I}(j)
= \{i \in [t]: X_i=j\}$.
\end{definition}

\subsection{Full protocol description}

In contrast to closeness testing, the main challenge for the $\IT2p$
problem is designing a communication-efficient protocol, while adding
security introduces relatively minor nuances. Hence, for ease of
exposition, we present the secure version of our protocol directly. 

Our overall protocol proceeds as follows at a high level. Alice and
Bob first generate multi-sets $S_a, S_b$. Alice then simulates one set
of indexed samples from $a_{S_a}$ (termed $A$) while Bob simulates
{\em two} sets of samples from $b_{S_b}$. First set (termed $B_p$) is
obtained using the sample set corresponding to the set $A$, and the
second one using a fresh independent sample set (termed $B_q$). Alice
then prepares Indices Set Vector $\mathcal{I}$ of $A$. Next, a secure
circuit with ROM samples a uniform subset $\UC$ from $[n + |S_a|]$,
looks up all of Alice's samples coming from $\UC$, and generates 2
joint sample sets: the first one by pairing each of Alice's samples
$A$ with the corresponding sample in $B_p$ (thereby simulating samples
from $\hat{p}$ as defined above), and then by pairing each of Alice's
samples $A$ with an independent sample from $B_q$ (thereby simulating
samples from $\hat{q}$). Finally, those two sets are then being tested
(directly) for closeness using Lemma \ref{app_dist_test}.

We note that the set $\UC$, together with the protocol's output can
possibly compromise Bob's data, and therefore we need the secure
circuit to compute $\UC$ (rather than Alice providing it to the
circuit). One challenge with such approach though, is the set $\UC$
may be larger than the communication bound. Luckily, we only care
about non-empty letters in $\UC$ which are (in expectation) of small
size, and since we sample $\UC$ uniformly from $[n + |S_a|]$, we can
overcome the issue by sampling the non-empty letters of $\UC$ from the
non-empty letters of $[n + |S_a|]$ (e.g. appears at least once in
$A$). For this we need to define a distribution over the size of
intersection of subsets chosen uniformly:

\begin{definition}
	For $n \geq 1$, and $\alpha, \beta \leq n$, we define a
        discrete distribution called Uniform Subset Intersection
        distribution $\mu(n,\alpha,\beta)$ as follows: Fix $S$ to be a
        set of size $n$, and $S_1 \subseteq S$ of size $\alpha$. Then
        $\mu(n,\alpha,\beta) \overset{\Delta}{\equiv} |S_1 \cap S_2|$,
        where $S_2 \subseteq S$ of size $\beta$ is chosen uniformly at
        random. 
\end{definition}

\begin{claim}\label{cl_usi}
	Let $S$ be a finite set of size $n \geq 1$, and let $S_1
        \subseteq S$ of size $\alpha$. Let $S_2$ be a uniform subset
        of $S$ of size $\beta$. Finally, let $S_3$ be uniform subset
        of $S_1$ of size $\mu(n,\alpha,\beta)$. Then $S_3 \equiv S_1
        \cap S_2$.
\end{claim}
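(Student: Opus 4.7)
The plan is to verify the distributional equality by conditioning on the size of the intersection, and then showing that the conditional distribution of $S_1 \cap S_2$ given its size is uniform over subsets of $S_1$ of that size. Since $S_3$ is defined to have exactly this same two-step structure (first sample the size from $\mu(n,\alpha,\beta)$, then sample a uniform subset of $S_1$ of that size), matching the two distributions reduces to checking the conditional uniformity claim.

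First, by the very definition of the Uniform Subset Intersection distribution, $|S_1 \cap S_2|$ is distributed as $\mu(n,\alpha,\beta)$ and so $|S_3|$ and $|S_1 \cap S_2|$ agree in distribution. Next, I would fix any integer $k \in \{0,1,\dots,\min(\alpha,\beta)\}$ and condition on $|S_1 \cap S_2|=k$. The goal becomes: for every size-$k$ subset $T \subseteq S_1$, the conditional probability $\Pr[S_1 \cap S_2 = T \mid |S_1 \cap S_2|=k]$ does not depend on $T$.

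To see this, count the number of $\beta$-subsets $S_2 \subseteq S$ with $S_1 \cap S_2 = T$: such $S_2$ is determined by choosing its remaining $\beta-k$ elements from $S \setminus S_1$, which has size $n-\alpha$. Hence there are exactly $\binom{n-\alpha}{\beta-k}$ such $S_2$, a count that is independent of the particular $T$. Since $S_2$ is chosen uniformly among all $\binom{n}{\beta}$ subsets of $S$ of size $\beta$, the conditional distribution of $S_1 \cap S_2$ given $|S_1 \cap S_2|=k$ is therefore uniform over size-$k$ subsets of $S_1$, matching exactly the second step in the generation of $S_3$.

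Combining the two observations via the law of total probability yields $S_3 \equiv S_1 \cap S_2$. The proof is essentially a bookkeeping exercise; the only mild subtlety is ensuring the parameter ranges (in particular $k \le \min(\alpha,\beta)$ and $\beta-k \le n-\alpha$) so that the binomial coefficients are well-defined, but this is automatic since the support of $\mu(n,\alpha,\beta)$ is exactly the set of feasible values of $|S_1 \cap S_2|$.
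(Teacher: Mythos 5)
Your proof is correct and follows essentially the same route as the paper: match the size distribution (immediate from the definition of $\mu(n,\alpha,\beta)$) and then argue that, conditioned on its size, $S_1\cap S_2$ is uniform over subsets of $S_1$ of that size. Your explicit count of $\binom{n-\alpha}{\beta-k}$ completions of a fixed $T\subseteq S_1$ is in fact a more careful justification of the conditional-uniformity step than the paper's brief symmetry remark (whose phrase ``equal and independent probability'' is loose, since the inclusion events are not independent when $|S_2|$ is fixed), but the underlying argument is the same.
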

\begin{proof}
One observe that $|S_3| = \mu(n,\alpha,\beta) \equiv |S_1 \cap S_2|$ by our definition of $\mu$. Furthermore, since $S_2$ was chosen uniformly at random, then each element of $S_1$ has equal and independent probability to be part of $S_1 \cap S_2$, and therefore $S_1 \cap S_2$ is also a uniform subset of $S_1$.
\end{proof}

We now present our protocol for $\IT2p$.  As before, our protocol uses
a secure circuit with ROM which will sample from Alice and Bob's input
strings. 

\fbox{%
\parbox{450pt}{
\textbf{Secure protocol $\Pi$ for the $\IT2p$ problem} \\
Let $\zeta_1,\ldots \zeta_t$ be i.i.d. samples from the joint distribution $p = (a,b)$.\\
\textit{Alice's Input:} $1^k$, first coordinates of
$\zeta_1,\ldots,\zeta_t$ \\
\textit{Bob's Input:} $1^k$, second coordinates of $\zeta_1,\ldots,\zeta_t$ \\
\\
\textit{Output:} 1 if $a \independent b$ and 0 if $p$ is $\epsilon$-far from any product distribution.

\begin{enumerate}
	\item Let $K=\Theta(k);\epsilon' = \Omega(\epsilon); t'=\min\{t/3K,O(n \cdot \sqrt{m}/\epsilon)\}; l = O(\max\{\tfrac{n^3 \cdot m}{t'^3\epsilon^4},\tfrac{n^2 \cdot m}{t'^2\epsilon^4},\tfrac{1}{\epsilon^2}\})$.
	\item Alice and Bob partition their samples into $3K$ Sample Sets of $t'$ samples each with corresponding indices termed $A_1,\ldots,A_{3K}, B_1,\ldots,B_{3K}$.
	\label{prot:ind_part}
	\item For $i \in \{0,\ldots,K-1\}$ Alice prepares ROM entries
          termed $A^i, \mathcal{I}^i, \Gamma^i$ as follows:
	\begin{enumerate}
		\item Generate multi-set $S_a^i$ using $\min\{t',n\}$ samples from $A_{3i+1}$ according to Lemma \ref{split_lemma}.
		\item Let $A^i$ be $A_{3i+2}$ recasted as being drawn from $a_{S_a^i}$.
		\item Let $\mathcal{I}^i$ be the Indices Set Vector of $A^i$.
		\item Let $\Gamma^i = \{$non-empty letters of $\mathcal{I}^i\}$. 
	\end{enumerate}
		\item For $i \in \{0,\ldots,K-1\}$ Bob prepares ROM
                  entries termed $B^i_p, B^i_q$  as follows:
	\begin{enumerate}
		\item Generate multi-set $S_b^i$ using $m$ samples from $B_{3i+1}$ according to Lemma \ref{split_lemma}.
		\item Let $B^i_p$ be $B_{3i+2}$ recasted as being drawn from $b_{S_b^i}$.
		\item Let $B^i_q$ be $B_{3i+3}$ recasted as being drawn from $b_{S_b^i}$. 
	\end{enumerate}
	\item A secure circuit with ROM($A^i,B^i_p, B^i_q,
          \mathcal{I}^i, \Gamma^i$) computes for each $i \in
          \{0,\ldots,K-1\}$ the following vote, and then outputs the
          majority vote:
	\begin{enumerate}
		\item Compute $\lambda = \min\{\mu(n+|S_a^i|,|\Gamma^i|,l),100t'l/n\}$
		\item Sample $\UC'$ uniformly from $\Gamma^i$ of size $\lambda$.
		\item Let $I^i = \bigcup_{j \in \UC'} \mathcal{I}^i(j)$.
		\item Compute $I^i_p, I^i_q, J^i_p, J^i_q$ as uniform disjoint subsets of $I^i$ of size $\min\{O(t'l/n),\lfloor |I^i|/4 \rfloor\}$ each.
		\item Compute $X_1^i = \{(A^i(j),B^i_p(j)) : j \in I^i_p \}$.
		\item Compute $X_2^i = \{(A^i(j),B^i_p(j)) : j \in J^i_p \}$.
		\item Compute $Y_1^i = \{(A^i(j),B^i_q(j)) : j \in I^i_q \}$.
		\item Compute $Y_2^i = \{(A^i(j),B^i_q(j)) : j \in J^i_q \}$.
		\item Count collisions in $X_1^i, Y_1^i$ and produce estimations of $\normt{\hat{p}}, \normt{\hat{q}}$ up to factor 2. Compute $\chi^i = 1$ if they these estimations agree up to factor 4 and $\chi^i = 0$ otherwise.
		\item Compute $\Delta^i = \normt{X-Y}^2$ where $X$ and $Y$ are the occurrence vectors of $X_2^i$ and $Y_2^i$.
		\item Vote $\chi^i \wedge (\Delta^i \leq \tau)$, where $\tau$ is threshold from Lemma \ref{app_dist_test}
	\end{enumerate}
\end{enumerate}
}}\\

\begin{remark}
\label{rem:it1way}
We
note that one can obtain a non-secure, 1-round 1-way communication
protocol by having Alice computing the set $\UC$ of size $l$(sampled from $[n + |S^i_a|]$), and sending to Bob
sufficiently many samples coming from $\UC$, along with corresponding indices
of such samples, while Bob performing the pairing and testing for
closeness as per steps $5.(e),\ldots,5.(k)$ above.
\end{remark}

\subsection{Analysis setup: alphabet reduction}\label{subsec:ind_reduc}
Before proceeding to the full protocol analysis, we first argue some
auxiliary lemmas, regarding reducing the larger independence testing
problem to a smaller closeness problem which requires less samples and
therefore smaller communication than the original problem.  The lemmas
state show that sampling sub-distributions from distributions with
bounded $\ell_2$ norms preserve some important properties.

We first show that if we sample sufficiently large sub-distribution
$p_{\rvert \UC}$ uniformly from some distribution $p$ with bounded
$\normt{p}^2$, then with high probability, $p_{\rvert \UC}$ has
density of the same magnitude as the fraction of the alphabet we are
sampling from, and $\normt{p_{\rvert \UC}}^2$ is similarly bounded.

\begin{lemma}\label{lem_sumpi}
	Let $p$ be some distribution over $[n]$ such that $\normt{p}^2 \leq U$, and let $\UC$ of size $l$ be a uniformly random subset of $[n]$. For $l\geq 100\cdot U\cdot n$ we have with 0.95 probability:
	\begin{enumerate}
		\item $\sum_{i \in \UC} p_i = \Theta(l/n)$;
		\item $\normt{p_{\rvert \UC}}^2 = O(U \cdot n/l)$.
	\end{enumerate}	
\end{lemma}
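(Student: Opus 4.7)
The plan is to prove the two claims separately and combine via a union bound, adjusting constants to reach the $0.95$ probability.

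For part (1), I will apply Chebyshev's inequality to the random variable $M \triangleq \sum_{i \in \UC} p_i$. Write $M = \sum_{i=1}^n p_i \cdot \mathbb{1}[i \in \UC]$. Since $\UC$ is a uniformly random subset of $[n]$ of size $l$, each indicator $\mathbb{1}[i \in \UC]$ is a Bernoulli with mean $l/n$, so $\E[M] = l/n$. The pairs $(\mathbb{1}[i \in \UC], \mathbb{1}[j \in \UC])$ are negatively correlated, hence
\[
\Var[M] \leq \sum_{i=1}^n p_i^2 \cdot \Pr[i \in \UC] \leq \frac{l}{n}\cdot \normt{p}^2 \leq \frac{U l}{n}.
\]
Using $l \geq 100 U n$ gives $\Var[M] \leq (l/n)^2 / 100$, i.e. the standard deviation is at most $\E[M]/10$. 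Chebyshev then yields $M \in [\tfrac{1}{2}\cdot l/n,\, 2\cdot l/n]$ with probability at least $0.97$, say, establishing part (1).

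For part (2), I will apply Markov's inequality to $N \triangleq \sum_{i \in \UC} p_i^2$. By linearity, $\E[N] = (l/n)\normt{p}^2 \leq Ul/n$, and so $N \leq 100 \cdot Ul/n$ with probability at least $0.99$ by Markov. On the intersection of this event with the event from part (1) (which happens with probability at least $0.95$ by union bound), we have
\[
\normt{p_{\rvert \UC}}^2 \;=\; \sum_{i \in \UC}\left(\frac{p_i}{M}\right)^2 \;=\; \frac{N}{M^2} \;\leq\; \frac{100 U l / n}{(l/(2n))^2} \;=\; O\!\left(\frac{U n}{l}\right),
\]
which proves part (2).

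The argument is essentially a standard second-moment calculation, so there is no significant obstacle; the only mildly delicate point is the negative-correlation variance bound for sampling without replacement, which can alternatively be sidestepped by first analyzing a uniform Bernoulli inclusion (each $i \in \UC$ independently with probability $l/n$) and then conditioning on $|\UC|=l$, since Chebyshev-style bounds transfer up to a small multiplicative loss. The final step is to choose the constants in parts (1) and (2) so that the union bound yields the required $0.95$ success probability.
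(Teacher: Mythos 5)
Your proof is correct and takes essentially the same route as the paper's: Chebyshev's inequality for $\sum_{i\in\UC}p_i$ using a variance bound of $Ul/n$, then Markov's inequality for the numerator $\sum_{i\in\UC}p_i^2$ and dividing by the squared denominator from part (1). The only difference is that you spell out the negative-correlation variance bound for sampling without replacement, which the paper states only informally ("roughly").
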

\begin{proof}
Note that the expectation of (1) is $l/n$ and its variance is at most $U\cdot l/n$ (roughly). Therefore, by the Chebyshev inequality and the lower bound on $l$, we have (1) with probability $\ge 0.99$. \\
For (2), we note that $\normt{p_{\rvert U}}^2 =\sum_{i \in \UC} p_i^2/ (\sum_{i \in \UC} p_i)^2$. The numerator is $O(U \cdot l/n)$ in
expectation, and therefore for high enough constant $O(U \cdot l/n)$ with
0.99 probability. The denominator is $\Omega(l^2/n^2)$ by the
above argument.
\end{proof}

We now derive by a similar argument that such sampling process also preserves distances.
\begin{lemma}\label{lem_od}
Let $p$ be some distribution over $[n]$ such that $\normt{p}^2 \leq U$, and let $\Delta \in [0,2]^n$. Let $\UC$ of size $l$ be a
uniformly random subset of $[n]$. If $\langle p,\Delta\rangle \geq
\epsilon$ and $l \geq O(U\cdot n / \epsilon^2)$, then we have with 0.9
probability that:
$$\frac{\sum_{i \in \UC} p_i \cdot \Delta_i}{\sum_{i \in \UC} p_i} = \Omega(\epsilon).$$
\end{lemma}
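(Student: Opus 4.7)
The plan is to show that the numerator $Y \triangleq \sum_{i \in \UC} p_i \Delta_i$ concentrates around its expectation $\Omega(\epsilon l/n)$, while the denominator $Z \triangleq \sum_{i \in \UC} p_i$ concentrates around $\Theta(l/n)$ via Lemma~\ref{lem_sumpi}; dividing one by the other then yields the claimed $\Omega(\epsilon)$ bound.

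For the numerator, I would let $Z_i$ denote the indicator that $i \in \UC$ (so that $\UC$ is chosen without replacement). Then $\E[Z_i] = l/n$ and $Z_i,Z_j$ are \emph{negatively} correlated for $i \neq j$, since sampling without replacement from $[n]$ yields $\mathrm{Cov}(Z_i,Z_j)\le 0$. Using $\langle p,\Delta\rangle \ge \epsilon$,
\[
\E[Y] \;=\; \frac{l}{n}\sum_i p_i\Delta_i \;\ge\; \frac{\epsilon l}{n},
\]
and using $\Delta_i\le 2$ together with negative correlation,
\[
\mathrm{Var}(Y) \;\le\; \sum_i (p_i\Delta_i)^2 \cdot \frac{l}{n} \;\le\; 4\,\normt{p}^2 \cdot \frac{l}{n} \;\le\; \frac{4Ul}{n}.
\]
By Chebyshev's inequality,
\[
\Pr\!\left[Y < \tfrac{1}{2}\E[Y]\right] \;\le\; \frac{4\,\mathrm{Var}(Y)}{\E[Y]^2} \;\le\; \frac{16\,U n}{\epsilon^2\, l},
\]
which is at most $0.05$ once $l \ge C\cdot U n/\epsilon^2$ for a sufficiently large constant $C$. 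Thus $Y \ge \epsilon l/(2n)$ with probability at least $0.95$.

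For the denominator, since $\epsilon\le 2$ the hypothesis $l \ge O(Un/\epsilon^2)$ (with a large enough constant) implies $l \ge 100 Un$, so Lemma~\ref{lem_sumpi}(1) applies and yields $Z = \Theta(l/n)$ with probability at least $0.95$. A union bound combines the two events with probability at least $0.9$, and on that event
\[
\frac{\sum_{i\in\UC} p_i\,\Delta_i}{\sum_{i\in\UC} p_i} \;=\; \frac{Y}{Z} \;=\; \frac{\Omega(\epsilon l/n)}{\Theta(l/n)} \;=\; \Omega(\epsilon),
\]
as required. There is no real obstacle here beyond carefully choosing constants so that the same assumed lower bound on $l$ simultaneously fulfills the hypothesis of Lemma~\ref{lem_sumpi} and makes the Chebyshev tail for $Y$ negligible; the only subtlety is remembering that sampling without replacement gives negative covariances, so the variance bound $\mathrm{Var}(Y)\le (l/n)\sum_i p_i^2\Delta_i^2$ holds without additional slack.
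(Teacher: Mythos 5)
Your proposal is correct and follows essentially the same route as the paper's proof: bound the numerator via its expectation $\ge \epsilon l/n$ and a variance bound of order $Ul/n$ followed by Chebyshev, handle the denominator with Lemma~\ref{lem_sumpi}, and combine by a union bound. The only difference is that you spell out the negative-correlation justification for the variance bound and the constant bookkeeping, which the paper leaves implicit.
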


\begin{proof}
By Lemma \ref{lem_sumpi}, the denominator is $\Theta(l/n)$ with 0.95 probability. %
The expectation of the numerator is $l\epsilon/n$ and its variance is at most $U\cdot l/n$. Therefore, for $l \geq O(U\cdot n / \epsilon^2)$ (using high enough constant), the numerator is $\Theta(l\epsilon/n)$ with probability 0.95 by the Chebyshev inequality.
\end{proof}

Finally, we show that we can apply the above lemmas to obtain a reduction from testing a distribution $p = [n] \times [m]$, to testing closeness of a smaller sub-distribution (where we down-sample the letters from $[n]$ and
condition on those) to the product distribution. %
\begin{lemma}\label{lm_epsfar}
	Let $p$ be a distribution on $[n] \times [m]$ and let $p_1,
        p_2$ be its marginals. Fix $\epsilon < 2$. Let $\UC\subset [n]$ be a random subset of size $l$, such that $l \geq O(U\cdot n / \epsilon^2)$ where $U=\normt{p_1}^2$. Define $\hat{p} \triangleq p_{\rvert \UC \times [m]}$ and $\hat{q} \triangleq p_{1\rvert \UC} \times p_2$. There exists $\epsilon' = \Omega(\epsilon)$ such that with 0.9 probability,
        \begin{enumerate}
        	\item If $p$ is a product distribution, then $\hat{p} = \hat{q}$.
        	\item If $p$ is
        $\epsilon$-far from any product distribution, then $\normo{\hat{p} - \hat{q}} \geq \epsilon'$.
        \end{enumerate}
\end{lemma}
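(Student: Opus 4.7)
The plan is to reduce both parts to a direct application of Lemma~\ref{lem_od} on the marginal $p_1$, paired with a suitable ``row-deviation'' vector. Writing $Z=\sum_{i'\in\UC}p_1(i')$, the definitions unfold to
\[
\hat{p}(i,j)=\frac{p(i,j)}{Z},\qquad \hat{q}(i,j)=\frac{p_1(i)\,p_2(j)}{Z}\qquad \text{for } i\in\UC,\ j\in[m].
\]
Part (1) is then immediate: if $p=p_1\times p_2$, then $p(i,j)=p_1(i)p_2(j)$, and so $\hat{p}\equiv\hat{q}$ for every outcome of $\UC$, with no probabilistic argument needed.

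For part (2), I would first observe that since $p_1\times p_2$ is itself a product distribution, the hypothesis that $p$ is $\eps$-far from every product distribution gives $\normo{p-p_1\times p_2}\ge\eps$ at no cost. I would then define the row-deviation vector $\Delta\in[0,2]^n$ by
\[
\Delta_i=\sum_{j\in[m]}\left|\frac{p(i,j)}{p_1(i)}-p_2(j)\right|
\]
for $i$ with $p_1(i)>0$, and $\Delta_i=0$ otherwise. Each $\Delta_i$ is an $\ell_1$ distance between two distributions on $[m]$, so $\Delta_i\in[0,2]$, and a one-line computation gives $\langle p_1,\Delta\rangle=\normo{p-p_1\times p_2}\ge\eps$.

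Next I would apply Lemma~\ref{lem_od} to the distribution $p_1$ (with $U=\normt{p_1}^2$) and the vector $\Delta$; the hypothesis $l\ge O(Un/\eps^2)$ of the present lemma is exactly what Lemma~\ref{lem_od} requires, so with probability at least $0.9$,
\[
\frac{\sum_{i\in\UC}p_1(i)\,\Delta_i}{\sum_{i\in\UC}p_1(i)}=\Omega(\eps).
\]
Finally, using the explicit formulas for $\hat{p}$ and $\hat{q}$ above, this ratio is exactly $\normo{\hat{p}-\hat{q}}$, which yields the desired $\eps'=\Omega(\eps)$. I do not anticipate a serious obstacle: the entire argument amounts to identifying the correct deviation vector $\Delta$ so that Lemma~\ref{lem_od} applies, and then invoking it. The one small point to verify carefully is $\Delta\in[0,2]^n$, which is automatic from the triangle inequality; all remaining steps are routine manipulations of the conditional distributions.
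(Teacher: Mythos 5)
Your proposal is correct and follows essentially the same route as the paper's proof: the same row-deviation vector $\Delta_i=\normo{p(i,\cdot)/p_1(i)-p_2}$, the identity $\langle p_1,\Delta\rangle=\normo{p-p_1\times p_2}\ge\eps$, and an application of Lemma~\ref{lem_od} to $p_1$, with the resulting ratio identified as $\normo{\hat{p}-\hat{q}}$. No gaps to flag.
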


\begin{proof}%
	We obtain (1) immediately from the fact $p_{1\rvert \UC}$ and
        $p_2$ are the marginals of $\hat{p}$.

	For $i \in [n]$, let $\Delta_i = \normo{p(i,*) - p_2}$, where
        $p(i,*)$ is the right-marginal of $p$ conditioned on
        $p_1=i$. Since $p(i,j) = p(i,*)(j)\cdot p_1(i)$, we have that
        $\langle p_1,\Delta \rangle = \normo{p - p_1 \times p_2}$.  If
        $p$ is $\epsilon$-far from the product distribution, then
        $\langle p_1,\Delta \rangle \geq \epsilon$. We now invoke
        Lemma \ref{lem_od} using $p=p_1$, and get that with 0.9
        probability:
	$$\normo{p_{\rvert \UC \times [m]} - p_{1\rvert \UC} \times p_2} = \frac{\sum_{i \in \UC} p_1(i) \cdot \Delta_i}{\sum_{i \in \UC} p_1(i)} = \Omega(\epsilon).$$
	\end{proof}

\subsection{Analysis of the protocol}

\begin{proof}[Proof of Theorem \ref{thm_2pit}]
The proof proceeds in the following steps:
\begin{enumerate}
\item
First we define a boolean function $f(\zeta)$.%
\item
We show that for $p=(a,b)$ such that $a \independent b$ or $p$ is $\epsilon$-far from any product distribution, the function $f(\zeta)=g(p)$, 
 whenever $\zeta \sim_{i.i.d} p$ except with negligible probability.
\item
Finally, we show that the protocol $\Pi$ is a secure computation of $f(\zeta)$.
\end{enumerate}

For $f(\zeta)$, we define $f$ to be the boolean function producing a
bit by simulating steps $(1),\ldots,(5)$ of $\Pi(\zeta)$ by Alice and
Bob.  Indeed, such simplification is possible for $\IT2p$ since the
protocol involves no communication outside of the secure circuit
computation.

We now show that, whenever $a \independent b$ or $p$ is $\epsilon$-far from any product distribution, we
obtain $\Pr_{\zeta_1\ldots \zeta_t \sim_\text{i.i.d. } p}
[f(\zeta)=g(p)] = 1-neg(k)$. We prove for each $i \in \{0,\ldots,K-1\}$,  all the below happen with some high constant probability:
\begin{enumerate}
	\item $X_1^i, X_2^i$ are distributed as being drawn from $\hat{p}$ and $Y_1^i, Y_2^i$ are distributed as being drawn from $\hat{q}$.
	\item $\hat{p} = \hat{q}$ if $a \independent b$ and $\normo{\hat{p} - \hat{q}} \geq \epsilon'$ if $(a,b)$ are $\epsilon$-far from the product distribution.
	\item $X_1^i, Y_1^i$ contain sufficiently many samples to approximate $\normt{\hat{p}}, \normt{\hat{q}}$, and $X_2^i$ and $Y_2^i$ contain sufficiently many samples to test the closeness of $\hat{p}$ and $\hat{q}$ according to Lemma \ref{app_dist_test}.
\end{enumerate}
For (1), we note that $|\Gamma^i| \leq t'$ and therefore $\E[\mu(n+|S_a|,|\Gamma^i|,l)] \leq t'\cdot l/n$ and we have $\lambda = \mu(n+|S_a|,|\Gamma^i|,l)$ with 0.99 probability (by Markov). According to Claim \ref{cl_usi}, the process in which the circuit samples $\UC'$ from $\Gamma^i$ simulates the process of sampling $\UC$ of size $l$ uniformly from $[n + |S_a|]$ with $\UC' \equiv \Gamma^i \cap \UC$, and therefore $X^i$ and $Y^i$ are distributed as being drawn from $\hat{p}$ and $\hat{q}$. \\

We obtain (2) with 0.9 probability immediately from Lemma \ref{lm_epsfar}.\\

For (3), let us denote $t_{\textbf{MIN}}$ to be the samples required both to meet the conditions of Lemma \ref{app_dist_test}, and to produce a constant factor approximation for $\normt{\hat{p}}$ and $\normt{\hat{q}}$. Let us also denote $t_{\textbf{REAL}}$ to be the actual samples our protocol produces from $\hat{p}, \hat{q}$ in steps $5.(d),\ldots,5.(h)$ above.
Let $\gamma = \min\{t',n\}$. With $1-o(1)$ probability, $S_a, S_b$
contain subsets $S'_a \subseteq S_a$ and $S'_b \subseteq S_b$ of sizes
$\Poi(\gamma/2)$ and $\Poi(m/2)$ respectively. Therefore, according to Lemma \ref{split_lemma} and Lemma \ref{lem_sumpi}:
$$\normt{\hat{q}}^2 = \normt{a_{S_a \rvert \UC}}^2 \cdot \normt{b_{S_b}}^2 = O(\frac{n}{\gamma l}) \cdot O(\frac{1}{m}) = O(\frac{n}{\gamma l m})$$
And therefore:
$$t_{\textbf{MIN}} = O\left(lm\sqrt{\frac{n}{\gamma l m}}\cdot \epsilon^{-2} + \sqrt{lm}\right) = O\left(\sqrt{\frac{n l m}{\gamma \epsilon^4}}\right)$$
On the other hand we have by Lemma \ref{lem_sumpi} with probability 0.95:
$$t_{\textbf{REAL}} = \Omega\left(\frac{t' l}{n}\right)$$
As a result, there exists some constant $c$ for which we obtain $t_{\textbf{REAL}} \geq t_{\textbf{MIN}}$ with probability $0.95-o(1)$ whenever $l \geq \tfrac{c n^3 m}{t'^2 \gamma \epsilon^4} = c \cdot \max\{\tfrac{n^3 \cdot m}{t'^3\epsilon^4},\tfrac{n^2 \cdot m}{t'^2\epsilon^4}\}$.

We note as well that $\Delta^i$ is an exact computation of $\normt{X-Y}^2$ and therefore can also be considered $(1+\alpha)$ approximation of such quantity (for any $\alpha$).

Last, we need to show $l$ is properly defined, and in particular, $l \leq n + |S_a|$. Note that from the conditions of Theorem \ref{thm_2pit}, $t' \geq C \cdot O(\sqrt{nm}/\epsilon^2)$. We thus set $C$ to be a large enough constant such that $l \leq n$ holds.

To sum up, we have shown (by the union bound) that for each $i \in \{0,\ldots,K-1\}$, the conditions of Lemma \ref{app_dist_test} are met with probability $0.85 - o(1)$, and therefore $f(\zeta)$ votes correctly with probability $1/2 + \Omega(1)$. The final majority output over $K$ votes is therefore correct with $1-neg(k)$ probability as needed.

We'll now show $\Pi$ is a secure computation for
$f(\zeta)$ for any $\zeta$. For correctness, we need show that for all $\zeta$,
$\E[\Pi(\zeta)] - \E[f(\zeta)] = neg(k)$. In fact, those are equal by our definition of $f$ above.

Security follows immediately from the security of the secure circuit
with ROM, as there is no additional communication or randomness in the
protocol.

We now analyze the communication complexity of $\Pi$. We note all communication is invoked from the secure circuit in step (5) above. For each of the $K$ (sub-)circuits producing a vote, the circuit first makes $O(1)$ look-ups for each letter in $\UC'$. The circuit then calculates the number of corresponding sample indices in $\mathcal{I}^i$ and samples $O(t'l/n)$ such indices. One can observe that such sampling process is possible using a circuit of size $O(t'l/n)$ even with the (low-probability) scenario where there are $\omega(t'l/n)$ such indices since the index sets are disjoint. For each sampled index (in $I^i_p,I^i_q,J^i_p,J^i_q$), the circuit invokes $\tilde{O}(1)$ calculations to produce collision count and occurrence vector squared distance. Therefore, the circuit size is bounded by $\tilde{O}(\lambda + t'l/n) = \tilde{O}(t'l/n) = \tilde{O}\left(\frac{n^2\cdot m}{t'^2\epsilon^4} + \frac{n\cdot
  m}{t'\epsilon^4} + \frac{\sqrt{m}}{\epsilon^3}\right)$, and the overall circuit producing majority vote is just $K$ times that: 
  $$\tilde{O}_k\left(\frac{n^2\cdot m}{t^2\epsilon^4} + \frac{n\cdot
  m}{t\epsilon^4} + \frac{\sqrt{m}}{\epsilon^3}\right)$$
z

By assuming $OT$, and invoking Theorem \ref{thm_nn01}, we obtain the claimed complexity.
\end{proof}

\section{Independence Testing: Communication Lower Bound}
\label{sec:indepLB}

We now show our lower bounds for $\IT2p$. We show that any one-way
protocol requires $\Omega(\sqrt{m})$ bits of communication (regardless
of the security properties).

The lower bound result is obtained by reduction from the Boolean
Hidden Hypermatching (BHH) problem from
\cite{doi:10.1137/1.9781611973082.2}. In the BHH problem, Alice is
given $x \in \zo^n$, and Bob is given some complete matching $M$ of $[n]$
such that for all pairs $(i,j)$ in the matching: $i \xor j = b \in
\zo$. Bob's goal to output $b$ correctly with $2/3$ probability. For this
problem, \cite{doi:10.1137/1.9781611973082.2} show that
one-way communication complexity is $\Omega(\sqrt{n})$.\\

Using the above result, we now show the following:
\begin{theorem}\label{thm_lb_it}
For $n, t \in \N$, any one-way protocol for $\IT2p_{n,n,t,1,1/3}$
requires $\Omega(\sqrt{n})$ bits of communication.
\end{theorem}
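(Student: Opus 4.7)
The plan is to reduce from the Boolean Hidden Hypermatching (BHH) problem of \cite{doi:10.1137/1.9781611973082.2}, whose one-way communication complexity is $\Omega(\sqrt{n})$. Recall that in BHH, Alice holds $x\in\{0,1\}^n$, Bob holds a perfect matching $M$ of $[n]$, and there is a promise bit $b\in\{0,1\}$ with $x_u\oplus x_v=b$ for every $(u,v)\in M$; Bob must output $b$. I will show that the two parties can, from a single BHH instance and only public randomness, jointly produce any desired number $t$ of i.i.d.\ samples from a joint distribution $p$ on $[n]\times[n]$ such that $p$ is a product distribution when $b=1$ and is $\Omega(1)$-far in $\ell_1$ from every product distribution when $b=0$; the claimed $\Omega(\sqrt{n})$ lower bound for $\IT2p_{n,n,t,\epsilon}$ then follows immediately.

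The sample-generation procedure is as follows. For each $i\in[t]$ the parties draw $u_i\in[n]$ and $c_i\in\{0,1\}$ uniformly from their shared public randomness. Alice outputs $A_i=x_{u_i}\oplus c_i$, which she can compute from $x$; Bob outputs $B_i=(e(u_i),c_i)$, where $e(u_i)\in[n/2]$ denotes the index of the unique edge of $M$ incident to $u_i$, which Bob can compute from $M$. Since $\{0,1\}$ and $[n/2]\times\{0,1\}$ (a set of size exactly $n$) both embed into $[n]$, we have $(A_i,B_i)\in[n]\times[n]$. The XOR mask $c_i$ is what makes Alice's marginal exactly uniform on $\{0,1\}$ and Bob's marginal exactly uniform on $[n/2]\times\{0,1\}$ regardless of the Hamming weight of $x$, so that the marginals alone carry no information about $b$.

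The dichotomy follows by a short conditioning calculation. Given $B=(e,c)$, the hidden $u$ is uniform on the two endpoints $\{u_e,v_e\}$ of edge $e$, so the conditional law of $A=x_u\oplus c$ is uniform on the set $\{x_{u_e}\oplus c,\; x_{v_e}\oplus c\}$. When $b=1$ we have $x_{u_e}\neq x_{v_e}$, so this set equals $\{0,1\}$ and $A$ is uniform and independent of $B$, making $p$ exactly a product distribution. When $b=0$ we have $x_{u_e}=x_{v_e}$, so $A$ is determined by $B$. Because $p$ in this case places mass $1/n$ on a single row in each of its $n$ columns, one can lower-bound $\min_q \|p-q\|_1\geq\epsilon_0$ for a constant $\epsilon_0>0$ by directly minimizing over product distributions $q=q_A\otimes q_B$: parameterizing $q_A=(\alpha,1-\alpha)$ with $\alpha\ge 1/2$ and optimizing $q_B$ reduces the problem to minimizing $f(\alpha)=2\alpha+1/\alpha-2$, whose minimum over $\alpha\in[1/2,1]$ is $\epsilon_0=2\sqrt{2}-2$.

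Finally, any one-way $\IT2p_{n,n,t,\epsilon_0}$ protocol $\Pi$ with communication $s$ becomes a one-way BHH protocol of the same complexity: both parties generate their samples from shared randomness, Alice sends the $s$-bit $\Pi$-message, and Bob runs $\Pi$ and outputs its decision, which distinguishes product from $\epsilon_0$-far and hence identifies $b$. Combined with the $\Omega(\sqrt{n})$ BHH lower bound this gives $s=\Omega(\sqrt{n})$. The main obstacle I anticipate is the ``far from \emph{every} product distribution'' step: bounding only the distance to the product of marginals is easy, but ruling out skewed products $q_A=(\alpha,1-\alpha)$ with $\alpha\neq 1/2$ requires the explicit one-variable optimization above, and the role of the $c$-mask is precisely to make the support of $p$ symmetric between the two rows so this optimization stays bounded away from zero.
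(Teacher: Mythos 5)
Your proposal is correct and follows the same skeleton as the paper's proof: a reduction from the one-way $\Omega(\sqrt{n})$ lower bound for Boolean Hidden Hypermatching, in which shared public randomness lets the two parties generate arbitrarily many i.i.d.\ sample pairs whose joint law is exactly a product distribution when the hidden bit is $1$ and constant-far from every product distribution when it is $0$. The gadget, however, is genuinely different. The paper has Alice output a uniform letter $A_i\in[n]$ with $x_{A_i}=x_{r(i)}$ and Bob output a uniform endpoint of the matched pair of $r(i)$, so both marginals are uniform on $[n]$; its far-case analysis observes that every cell has mass $0$ or $2/n^2$, i.e.\ distance $1$ from the \emph{uniform} product, and then asserts this is $\Omega(1)$ from \emph{any} product. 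You instead compress Alice's sample to the masked bit $x_{u_i}\oplus c_i$, let Bob output the edge index together with the mask, and explicitly minimize over all product distributions; your one-variable reduction $f(\alpha)=2\alpha+1/\alpha-2$ and the resulting constant $2\sqrt{2}-2$ check out. Your variant buys two things: the $c$-mask makes the marginals exactly uniform and makes the far-case geometry (one unit of mass per column, exactly half the columns in each row) independent of the Hamming weight of $x$, whereas the paper's far instance degenerates (e.g.\ to an exact product when $x=0^n$ and $b=0$), a point the paper leaves implicit by relying on the BHH hard distribution being essentially balanced; and your optimization makes the ``hence $\Omega(1)$ from any product'' step quantitative (the paper's own far instance is likewise only at distance $2\sqrt{2}-2$ from its nearest product distribution). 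The one caveat is shared by both arguments: with ``$\epsilon$-far'' meaning far from every product distribution, your reduction (like the paper's) establishes the $\Omega(\sqrt{n})$ bound for the problem with distance parameter $\epsilon_0=2\sqrt{2}-2$ rather than the stated $\epsilon=1$; if instead one measures distance to the product of the marginals, your instance (and the paper's, for balanced $x$) is at distance exactly $1$, recovering the statement as written.
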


\begin{proof}[Proof of Theorem \ref{thm_lb_it}]
	We show how, for given a BHH instance $p^b=(x,M)$, we can reduce it
        to the $\IT2p_{n,n,t,1}$ problem. Let $r \in [n]^t$ be some common uniformly random string. In addition, for $b \in \zo$, let $X_b = \{i \in [n] : x_i=b\}$. For each $i \in [t]$, Alice, Bob each will generate an indexed sample $(A_i, B_i)$ respectively as follows:
	\begin{enumerate}
		\item Let $b' = x_{r(i)}$; Alice samples $A_i$ uniformly from $X_{b'}$.
		\item Bob samples $B_i$ uniformly between $r(i)$ and
                  its pair in the matching $M$.
	\end{enumerate}
	One can observe that for any $i \in [t], j \in [n]$,
        $\Pr[A_i=j] = \Pr[B_i=j] = 1/n$, i.e., each of $A, B$ is
        distributed uniformly at random from $[n]$. Thus, if the
        distributions of $A$ and $B$ are to be independent, it must be
        a distribution uniform over $[n] \times [n]$.

Now let us analyze the probability distribution of each pair
$(A_i,B_i)$: for any $i \in [t], j \in [n], k \in [n]$, we have $(A_i,
B_i) = (j,k)$ iff the following precise conditions hold:
	\begin{enumerate}
		\item $x_{r(i)} = x_j$.
		\item $r(i)$ is either $k$ or its pair in the matching
                  $M$.
		\item $j$ was sampled by Alice.
		\item $k$ was sampled by Bob.
	\end{enumerate} 
	Events (3) and (4) are independent events which happen with
        probability $2/n$ and $1/2$ respectively conditioned on the
        first 2 events occurring.

For a $p^1$ instance (ie, when the output $b=1$), the probability of
the first 2 events occurring is $1/n$ for any $(j,k)$ (as there is a
single letter in $[n]$ that supports both conditions), and therefore
for any $i \in [t], j \in [n], k \in [n]$, we have $\Pr[(A_i, B_i) =
  (j,k)] = 1/n^2$. Hence $(A,B)$ are distributed as a product
distribution.

On the other hand, for a $p^0$ instance, the probability of the first
2 events occurring is $2/n$ if $x_j=x_k$ and $0$ otherwise. Thus for
any $i \in [t], j \in [n], k \in [n]$, we have that $\Pr[(A_i, B_i) =
  (j,k)]$ is either $0$ or $2/n^2$. This means that we have an
instance which is at distance $1$ from the product distribution
$[n]\times [n]$ (and hence $\Omega(1)$ from any distribution).
\end{proof}

\subsection*{Acknowledgements}
We thank Devanshi Nishit Vyas for her contribution to some of the
initial work which led to this paper.  We thank Clement Canonne for
invaluable comments on an early draft of the manuscript.  We thank
Yuval Ishai for helpful discussions.

Work supported in part by Simons Foundation (\#491119), NSF grants
CCF-1617955 and CCF-1740833.

\bibliographystyle{alpha}
\bibliography{../main,../bibfile}

\appendix

\section{Lower bound for Exact GHD for two-way communication protocols}
\label{sec:gapHamming}

We prove Lemma~\ref{lem:gapHamming} here. We first prove the following
claim.

\begin{claim}
\label{lem:gapHammingLog}
Let $n\ge 1$ be even. Let $\beta=\beta(n)=\sqrt{n/2}/4$, and
$\gamma=O(\sqrt{\log n})$.  Consider a two-way communication protocol
$\cA$ that, with probability at least $0.9$, for $x,y\in\{0,1\}^n$
with $\|x\|_1=\|y\|_1=n/2$, can distinguish between the case when
$\|x-y\|_1=n/2$ versus $\|x-y\|_1-n/2\in [\beta,\gamma\beta]$. Then
$\cA$ must exchange at least $\Omega(n/\log n)$ bits of communication.
\end{claim}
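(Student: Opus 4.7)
The plan is to adapt Sherstov's \cite{DBLP:journals/toc/Sherstov12} two-way lower bound for the standard Gap-Hamming Distance problem to the balanced exact-versus-gap variant, using the smooth corruption-bound method. Fix the hard distributions: let $\mu_0$ be uniform on balanced pairs $(x,y)$ with $\|x-y\|_1 = n/2$ (the YES instances), and $\mu_1$ uniform on balanced pairs with $\|x-y\|_1 \in [n/2+\beta, n/2+\gamma\beta]$ (the NO instances). The goal is to show that for every combinatorial rectangle $R = A \times B$ with $\mu_0(R) \ge 2^{-cn/\log n}$ one also has $\mu_1(R) \ge \Omega(\mu_0(R))$. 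Once this is in place, a standard application of the corruption bound converts a $0.9$-correct deterministic protocol on a distribution $\tfrac12\mu_0+\tfrac12\mu_1$ into a rectangle cover with a rectangle of relative $\mu_0$-mass at least $2^{-s}$ whose $\mu_1$-mass is much smaller than its $\mu_0$-mass, contradicting the corruption estimate unless $s = \Omega(n/\log n)$.

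The heart of the proof is therefore a balanced-string anti-concentration lemma: inside any sufficiently dense rectangle $A\times B$, the distribution of $\|x-y\|_1$ remains close to the global distribution on balanced pairs. Globally, a local central limit theorem on the slice gives point mass $\Theta(1/\sqrt{n})$ at $n/2$ and mass $\Theta(\gamma\beta/\sqrt{n}) = \Theta(\gamma)$ in the window $[n/2+\beta, n/2+\gamma\beta]$, so the global NO-to-YES ratio is $\Theta(\gamma\sqrt{n}) = \Theta(\sqrt{n\log n})$. The key technical step is to show this ratio is preserved up to a constant factor inside every rectangle of density $2^{-O(n/\log n)}$. Sherstov's Fourier/spectral argument on the Boolean cube transports to the balanced (Johnson-like) slice via Hahn-operator estimates, and gives that such a rectangle smooths the global Hamming-distance distribution by a Gaussian of scale $\Theta(\sqrt{n/\log n})$. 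Since this smoothing scale is asymptotically smaller than the gap width $(\gamma-1)\beta = \Theta(\sqrt{n\log n})$, both the point mass at $n/2$ and the window mass in $[n/2+\beta,n/2+\gamma\beta]$ survive smoothing up to constants.

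The main obstacle is the step of controlling $\Pr_{(x,y)\in R}[\|x-y\|_1 = n/2]$ rather than a range probability, i.e.\ executing a local (not merely integral) limit theorem inside rectangles. This is also precisely why the final lower bound is $\Omega(n/\log n)$ rather than the $\Omega(n)$ achievable for standard GHD: Sherstov's smoothing estimate matches the global distribution up to rectangle density $2^{-O(n/\log n)}$ when resolving features at the $\Theta(1)$ scale (a point mass), versus density $2^{-O(n)}$ when resolving $\Theta(\sqrt{n})$-scale features (ranges). The choice $\gamma = \Theta(\sqrt{\log n})$ is exactly what is needed so that the $\Theta(\sqrt{n\log n})$ NO-to-YES ratio comfortably absorbs the multiplicative loss coming from the smoothing estimate inside low-density rectangles; with this choice the corruption-style inequality $\mu_1(R) \ge \Omega(\mu_0(R))$ drops out, and the lower bound follows in the standard way.
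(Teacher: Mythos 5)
There is a genuine gap at the heart of your argument. Your whole plan rests on the corruption-type lemma that every rectangle $R$ with $\mu_0(R)\ge 2^{-cn/\log n}$ also satisfies $\mu_1(R)\ge\Omega(\mu_0(R))$, where $\mu_0$ is supported on the \emph{exact} sphere $\|x-y\|_1=n/2$. As you yourself note, this requires a \emph{local} (pointwise) limit theorem inside dense rectangles on the balanced slice: you must rule out rectangles that concentrate disproportionately on the single value $n/2$ relative to the window $[n/2+\beta,n/2+\gamma\beta]$. Sherstov's Theorem~3.3 gives nothing of this kind: it is a one-sided deviation (anti-concentration) statement, namely that a dense rectangle places constant mass on pairs with $\bigl|\|x-y\|_1-n/4\bigr|\ge\beta$, and it controls no point masses and no ratios of the form you need. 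The assertion that his Fourier/spectral argument ``transports to the slice via Hahn-operator estimates'' and yields a Gaussian smoothing of the within-rectangle distance distribution at scale $\Theta(\sqrt{n/\log n})$ is exactly the missing theorem, not a known consequence of the cited work; no proof (or reference) is offered, and it is far from clear it is true in the stated form. So the proposal as written does not establish the claim.

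For comparison, the paper deliberately avoids ever having to control point masses inside rectangles. It first proves an $\Omega(n)$ bound for a \emph{distributional} problem on the full cube $\{0,1\}^{n/2}$ under the uniform measure, where the YES event is the exact distance $\|x-y\|_1=n/4$: there Sherstov's anti-concentration applies verbatim (dense rectangles are $1/4$-corrupted by FAR pairs), and the smallness of the YES mass $\Theta(1/\sqrt n)$ is compensated by demanding a protocol with error $1/(100\sqrt n)$, which still yields $\Omega(n)$ via the corruption bound. It then reduces this problem to your protocol $\cA$ by the concatenation trick ($x'$ equal to $x$ followed by its negation, making the inputs balanced and doubling the distance), running $\cA$ twice (once on $y'$ and once on its negation, to catch both sides of the two-sided gap), and amplifying to error $o(1/\sqrt n)$ with $O(\log n)$ repetitions; that amplification is precisely where the $1/\log n$ loss arises. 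If you want to salvage your direct slice-based route, you would need to actually prove the rectangle-local limit theorem you invoke, which appears substantially harder than the claim itself.
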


\begin{proof}
We will first prove a lower bound for a related distributional
problem, and then show a reduction from this distributional
problem. Consider the distributional problem where, for random $x,y\in
\{0,1\}^{n/2}$, a deterministic communication protocol ${\cal D}$
satisfies the following:
\begin{itemize}
\item
if $\|x-y\|_1=n/4$, the protocol outputs $-1$ with probability at least
$1-1/100\sqrt{n}$;
\item
if $\left|\|x-y\|_1- n/4\right|\ge \beta(n)/2$, the
protocol outputs $+1$ with probability at least $1-1/100\sqrt{n}$.
\end{itemize}

We will now show that any such ${\cal D}$ must use at least
$\Omega(n)$ bits.  We use the results of
\cite{DBLP:journals/toc/Sherstov12}.  Let $\mu$ be the (joint)
distribution on $x,y$ which are uniformly random from
$\{0,1\}^{n/2}$. Now define $f_{n/2}(x,y)$ to be the partial function
solved by the above assumed protocol $\cal D$: $-1$ if
$\|x-y\|_1=n/4$, and $+1$ if
$\left|\|x-y\|_1-n/4\right|\ge\beta(n)/2$. Theorem 3.3 from
\cite{DBLP:journals/toc/Sherstov12} guarantees that, for any
combinatorial rectange $R$ with $\mu(R)\ge 2^{-\delta n}$, for small
$\delta>0$, we have that $\Pr_{(x,y)\in
  R}[\left|\|x-y\|_1-n/4\right|\ge\beta]\ge 1/4$, for
$\beta(n)=\sqrt{n/2}/4$. We now use the corruption bound (Theorem 2.3
from \cite{DBLP:journals/toc/Sherstov12}). In particular, we have that
$\mu(R\cap f^{-1}_{n/2}(+1))\ge 1/4\ge 1/4\cdot \mu(R)\cdot \mu(R\cap
f^{-1}_{n/2}(-1))$. Also, $\mu(f^{-1}_{n/2}(-1))\approx
\frac{\sqrt{2}}{\sqrt{\pi n/2}}$. Hence we conclude that the
communication complexity of $\cal D$ is at least
$$ \delta n+\log
\left(\mu(f^{-1}_{n/2}(-1))-\tfrac{1/100\sqrt{n}}{1/4}\right)=\Omega(n),
$$ 
where we used the fact that the failure probability of the protocol
is $\xi=1/100\sqrt{n}\ll \mu(f^{-1}_{n/2}(-1))$.

It remains to show that the claimed $\cA$ exists, then there's also a
protocol ${\cal D}$. In particular, suppose $\cA$ exists. Then
consider input strings $x,y\in\{0,1\}^{n/2}$ to $\cal D$, chosen from
the uniform distributon $\mu$. Consider string $x'$ formed as $x$,
followed by the negation of $x$; note that $\|x'\|=n/2$. Similarly,
construct $y'$. 
Note that $\|x'\|_1=\|y'\|_1=n/2$, and that $\|x'-y'\|_1=2\cdot
\|x-y\|_1$. With probability at least $1-1/n$, we also have that
$\|x'-y'\|_1=2\|x-y\|_1\in [n/2-O(\sqrt{n\log n}),n/2+O(\sqrt{n\log
    n})]$, i.e., $\|x'-y'\|_1-n/2\in [-\gamma\beta(n),
  \gamma\beta(n)]$.  Hence Alice and Bob can run the protocol $\cA$,
on inputs $x',y'$, to distinguish whether $\|x'-y'\|_1=n/2$ or
$\|x'-y'\|_1-n/2\in [\beta(n),\gamma \beta(n)]$; note that this is
precisely equivalent to distinguishing between $\|x-y\|_1=n/4$ versus
$\|x-y\|_1-n/4\in [\beta(n)/2,\gamma\beta(n)/2]$.

This is not enough though, as $f_{n/2}(x,y)=+1$ also when
$\|x-y\|_1\le n/4-\beta(n)/2$. Thus, Alice and Bob will instead run
the protocol $\cA$ on $x',y'$, as well as on inputs $x'$ and negation
of $y'$. Thus, if $\|x-y\|_1=n/4$, then $\cA$ will return $-1$ both
times, with probability at least $0.8$. If
$\left|\|x-y\|_1-n/4\right|\in[\beta/2,\gamma\beta/2]$, then $\cA$
will return $+1$ at least once, with probability at least 0.9. Hence,
we say $\|x-y\|_1=n/4$ iff $\cA$ returns $-1$ both times.

The above reduction gives a 0.8 probability of success. We can
amplify this to $1-o(1/\sqrt{n})$ by running $O(\log n)$ independent
copies of the (randomized) protocol reduction from above, and taking
the median answer. This way we obtain a randomized protocol with
success probability at least $1-o(1/\sqrt{n})$ in solving problem
$f_{n/2}$. We can further extract a deterministic protocol ${\cal D}$
that also achieves a success probability of $1-o(1/\sqrt{n})$.

Overall, we conclude that since $\cal D$ has $\Omega(n)$
communication, the protocol $\cA$ must have $\Omega(\tfrac{n}{\log
  n})$ communication, since the reduction uses $2\cdot O(\log n)$
copies of $\cA$.

\end{proof}

We are now ready to prove Lemma~\ref{lem:gapHamming}.

\begin{proof}[Proof of Lemma~\ref{lem:gapHamming}]
Assume the contrapositive: that for each interval $[l,r]$ with
$r/l=2$ and $l\in [\Theta(\sqrt{n}),\Theta(\sqrt{n\log n})]$, there exists a protocol $A_{[l,r]}$ which can distinguish
$\|x-y\|_1=n/2$ versus $\|x-y\|_1-n/2\in [l,r]$ with constant
  probability, using only $\le C$ communication bits. For each such
  protocol, we can obtain boosted protocol $A'_{[l,r]}$ whose failure
  probability is $o(1/\log\log n)$ and communication complexity is
  $O(C\log\log\log n)$ --- as usual, by running $O(\log\log\log n)$
  independent copies of the protocol and taking the
  majority answer.

Now, in Lemma~\ref{lem:gapHammingLog}, we have the ``far'' interval
$I=[\sqrt{n/2}/4,\sqrt{n/2}/4\cdot \Theta(\sqrt{\log n})]$. We
partition this interval into $q=O(\log\log n)$ intervals $I_1,\ldots
I_q$, where each interval $I_i=[l_i,r_i]$ has $r_i/l_i\le 2$. For each
such interval, we have (by the above) a boosted protocol $A'_{I_i}$
for $i\in[q]$.

We now construct a protocol $\cA$ able to distinguish $\|x-y\|_1=n/2$
versus $\|x-y\|_1-n/2\in I$. In particular, we run all protocols
$A'_{I_1},\ldots A'_{I_q}$ on the same inputs. If $\|x-y\|_1=n/2$,
then all of them will return $-1$, except with probability $\le q\cdot
o(1/\log\log n)=o(1)$. On the other hand, if $\|x-y\|_1-n/2\in I$,
then for $i\in[q]$ such that $\|x-y\|_1-n/2\in I_i$, the algorithm
$A'_{I_i}$ will return $+1$, except with probability $o(1/\log\log
n)$. In other words, in the far case, at least one of the algorithms
will return $+1$, with probability $1-o(1)$. Hence we can distinguish
these two cases with probability $1-o(1)$.

The overall communication of $\cA$ is $O(C\log\log n\cdot \log\log\log
n)$. By Lemma~\ref{lem:gapHammingLog}, we conclude that
$C=\Omega(\tfrac{n}{\log n\log\log n\log\log\log n})$.
\end{proof}

\end{document}